\documentclass[12pt]{iopart}
\usepackage{ucs}
\usepackage[utf8x]{inputenc}
\usepackage[T1]{fontenc}
\usepackage{iopams,setstack}  
\usepackage{mathrsfs,enumerate,color,tensor,upgreek,amssymb,amsfonts,eucal,mathrsfs}
\usepackage[normalem]{ulem}
\usepackage{tikz,subfig}
\usetikzlibrary{calc}
\usetikzlibrary{fadings,arrows,shapes,positioning,decorations,intersections}
\usetikzlibrary{patterns}
\usepackage{hyperref}
\usepackage{color}
\DeclareMathAlphabet{\mathpzc}{OT1}{pzc}{m}{it}
\newcommand{\ie}{\textit{i.e.}~}
\newcommand{\eg}{\textit{e.g.}~}
\newcommand{\eq}[1]{\eref{#1}}
\renewcommand{\bar}{\overline}
\newcommand{\met}{\mathsf{g}}
\newcommand{\pmet}{\mathsf{p}}

\newcommand{\dT}{\mathrm{d}T}

\newcommand{\LieD}{\pounds}
\newcommand{\Time}{\mathfrak{time}}
\newcommand{\events}{\mathscr{Q}}
\newcommand{\nhood}{\mathscr{O}}
\newcommand{\man}{\mathscr{M}}

\newcommand{\simset}{\mathscr{S}}
\newcommand{\bnd}{\mathscr{B}}
\newcommand{\UH}{\Sigma_{H}}
\newcommand{\EH}{\mathcal{H}}
\newcommand{\outside}[1]{\langle\!\langle#1\rangle\!\rangle}
\newcommand{\pinf}{\mathfrak{i}}
\newcommand{\scrI}{\mathscr{I}}
\newcommand{\BH}{\mathcal{B}}
\newcommand{\WH}{\mathcal{W}}
\newcommand{\scrX}{\mathscr{X}}
\newcommand{\Rie}{\mathpzc{R}\,}
\newcommand{\Ric}{\mathpzc{R}\,}
\newcommand{\lag}{\mathscr{L}_{HL}}
\newcommand{\aeT}{\mathpzc{T}}
\newcommand{\PiH}{\vec{\Pi}}
\newcommand{\aebT}{\boldsymbol{T}}
\newcommand{\jt}[1]{``#1''}
\newcommand{\bt}[1]{\textit{#1}}
\newtheorem{proposition}{Proposition}
\newtheorem{theorem}{Theorem}
\newtheorem{definition}{Definition}
\newtheorem{corollary}{Corollary}

\newenvironment{proof}
    {\textit{Proof.} }
    { \hfill $\Box$ }
\begin{document}

\title{Causality and black holes in spacetimes with a preferred foliation}
\author[J Bhattacharyya, M Colombo, T P Sotiriou$^{1,2}$]{Jishnu Bhattacharyya$^1$, Mattia Colombo$^1$, Thomas P Sotiriou$^{1,2}$}
\address{$^1$ School of Mathematical Sciences, University of Nottingham, University Park, Nottingham, NG7 2RD, United Kingdom.}
\address{$^2$ School of Physics and Astronomy, University of Nottingham, University Park, Nottingham, NG7 2RD, United Kingdom.}
\vspace{10pt}
\begin{indented}
\item[]\today
\end{indented}

\begin{abstract}
We develop a framework that facilitates the study of the causal structure of spacetimes with a causally preferred foliation. Such spacetimes may arise as solutions of Lorentz-violating theories, \eg Ho\v{r}ava gravity. Our framework allows us to rigorously define concepts such as black/white holes and to formalize the notion of a `universal horizon', that has been previously introduced in the simpler setting of static and spherically symmetric geometries. We also touch upon the issue of development and prove that universal horizons are Cauchy horizons when evolution depends on boundary data or asymptotic conditions. We establish a local characterisation of universal horizons in stationary configurations. Finally, under the additional assumption of axisymmetry, we examine under which conditions these horizons are cloaked by Killing horizons, which can act like usual event horizons for low-energy excitations.
\end{abstract}

\vspace{2pc}
\noindent{\it Keywords}: Modified gravity, Lorentz violations, Black holes, Causality
\maketitle

\section{Introduction}
General relativity stands out as the most accurate, precise, and simple description of gravity available to us, enjoying an impressive consistency with observations~\cite{Will:2005va}. Yet gravity remains elusive at extreme scales, both very large and very small. Numerous modifications of general relativity have thus been proposed to extend its scope and meet these challenges. Among the attempts to improve the behaviour of (quantum) general relativity (treated as a local quantum field theory) at very short distances/high energies (UV), the recently proposed Ho\v{r}ava(-Lifshitz) gravity~\cite{Horava:2009uw}, or Ho\v{r}ava gravity for short, has generated a lot of interest.

Ho\v{r}ava gravity is a local field theory that gives up local Lorentz (boost) invariance as one of its fundamental defining symmetries, thereby departing from one of the central assumptions of relativistic theories. Splitting spacetime into space and time allows one to introduce terms with more than two spatial derivatives without introducing more time derivatives. As a consequence, the UV behaviour of propagators can be modified and the theory can be rendered power-counting renormalizable. In three spatial dimensions one needs to include in the action terms with at least six spatial derivatives in order to achieve the desired UV behaviour~\cite{Horava:2009uw}. The number of terms that should be included in the action in the most general version of the theory is rather large~\cite{Blas:2009qj} and this has lead to various restricted versions in which either the field content is reduced or the action is assumed to satisfy extra symmetries~\cite{Horava:2009uw,Sotiriou:2009bx,Sotiriou:2009gy,Weinfurtner:2010hz,Vernieri:2011aa,Vernieri:2012ms,Mukohyama:2010xz,Sotiriou:2010wn}. The consistency and the infrared viability of the theory have been thoroughly scrutinised and several concerns have been raised for most of its restricted/extended versions~\cite{Iengo:2009ix,Charmousis:2009tc,Blas:2009yd,Koyama:2009hc,Papazoglou:2009fj,Blas:2009ck,Kimpton:2010xi,Blas:2010hb,Pospelov:2010mp,Liberati:2012jf,Kimpton:2013zb,Colombo:2014lta,Colombo:2015yha}. However, in its more general formulation Ho\v{r}ava gravity remains a viable theory for suitable choice of its parameters, see \cite{Blas:2014aca} for a recent review.

Neither the exact UV structure of Ho\v{r}ava gravity nor its infrared phenomenology will be our primary concern here. The main feature of the theory that will be of interest to us is that it can be seen as a dynamical theory of spacetimes with a preferred foliation. Indeed, in its original formulation, Ho\v{r}ava gravity has been written in the preferred foliation, as this makes it straightforward to add terms with higher-order spatial derivatives. However, one can also formulate the theory in a covariant manner~\cite{Germani:2009yt,Blas:2009yd,Jacobson:2010mx}. It then becomes a generally covariant scalar-tensor theory where the scalar field (sometimes called the~\emph{khronon}) always has a timelike gradient everywhere, so that its level sets foliate the spacetime with spacelike hypersurfaces. These hypersurfaces impose a~\emph{preferred notion of simultaneity}. Indeed, the field equations become second order in time derivatives only in this preferred foliation~\cite{Blas:2009yd,Jacobson:2010mx}. They also contain an elliptic (instantaneous) mode (see \cite{Blas:2011ni} for a discussion) that implies instantaneous propagation of signals even at low energies.

It should be stressed that spacetimes with a preferred foliation have remarkably different causal properties than those with just a preferred frame. This singles out theories with a preferred foliation as a special class within Lorentz-violating theories. To elaborate on this, it is worth considering Einstein-{\ae}ther theory, which was introduced in \cite{Jacobson:2000xp} (see also \cite{Jacobson:2008aj} for a review). This is a theory of a metric and a vector field, called the~\emph{{\ae}ther}, where the latter is constrained to be unit timelike everywhere. It is the most general two-derivative vector-tensor theory of this kind that is generally covariant. The presence of a unit timelike vector in every solution of this theory amounts to the existence of a~\emph{preferred frame} leading to a violation of local Lorentz (boost) invariance. Even though Einstein-{\ae}ther theory does not enjoy local Lorentz symmetry, its causal structure is surprisingly close to that of general relativity. For instance, a linearized perturbation analysis of Einstein-{\ae}ther theory around flat spacetime reveals the existence of propagating spin-$0$, spin-$1$ and spin-$2$ degrees of freedom~\cite{Jacobson:2004ts}. These perturbations travel with different (constant) speeds in the preferred frame, but they are all confined within propagation cones. The widest of these propagation cones, associated with the fastest moving excitation, can be used to define causality in essentially the same way as in general relativity. We may refer to such a causal structure as~\emph{quasi-relativistic}.

In contrast, Lorentz violating spacetimes with a preferred foliation allow for a truly~\emph{non-relativistic causal structure} where excitations are not contained within any propagation cone but are instead merely required to move `forward in time' with respect to the preferred foliation.~\emph{Our broadest and most basic goal in this work is to explore the consequences of the causal structure of a foliated spacetime}. With this in mind, we will study causal aspects of general foliated spacetimes (to be defined rigorously below)~\emph{disregarding for most part that such spacetimes can presumably be obtained as solutions of specific theories \eg Ho\v{r}ava gravity}. This is in harmony with the standard practice in general relativity, where one only employs the basic tools of topology and differential geometry to define the concept of a spacetime in the broadest possible sense (without any recourse to any local equations \eg the Einstein's equations) and studies the global causal properties of such spacetimes.

The existence of arbitrarily fast propagations may prompt one to conclude that the concept of a black hole cannot survive in a spacetime with a preferred foliation. A black hole is characterized by the presence of an event horizon which traps light, the fastest propagation allowed in any locally Lorentz invariant theory. However, if there is no (finite) upper limit on the speed of propagation in a theory, the concept of an event horizon may seem to not fit in either. One of the central goals of the present work is to take a deeper look into such issues in their most generality, and establish that the above reasoning is, at best, na\"ive. Indeed, earlier work~\cite{Barausse:2011pu,Blas:2011ni} (see also \cite{Barausse:2013nwa} for a recent review) focusing on such questions have explored highly symmetric solutions spaces of specific theories (\eg Einstein-{\ae}ther and Ho\v{r}ava theories) and have noted the existence of a~\emph{universal horizon} which traps even arbitrarily fast excitations. In this work, we wish to go beyond such symmetric solutions of specific theories and formalize such concepts in the broadest possible manner. Towards that end, we will develop necessary tools and concepts to understand causality in Lorentz violating theories with a preferred foliation and establish theorems involving such concepts. The `non-relativistic' nature of foliated manifolds actually simplifies their causal properties and makes some of the results we establish more intuitive than their relativistic counterparts. Nevertheless, a formal framework is required to that end and the present work aims to develop it.

Causality theory in general relativity is very much a well established topic~\cite{Penrose:1972ui,Hawking:1973uf,Wald:1984rg}. Therefore, instead of `reinventing the wheel', we will stick to the standards as much as possible and draw heavily from it, as well as discuss to what extent and in which manner certain concepts in general relativity needs to be modified to address the present issue.

The key results of this paper are as follows:
	\begin{itemize}
	\item In Section~\ref{sec:global-causality}, we fully develop a framework to address causality in spacetimes with a preferred foliation. We introduce suitable notions of future and past and we define what is a black hole, a white hole, and an event horizon, which we will also refer to as a universal horizon.
	\item In Section~\ref{sec:def:DoD} we focus on theories that have an instantaneous (elliptic) mode and we define an appropriate notion of development. We then determine what constitutes a Cauchy horizon in this setting and eventually prove that every universal horizon, as defined in Section~\ref{sec:global-causality}, is also a Cauchy horizon -- a remarkably different conclusion compared to general relativity.
	\item In Section~\ref{sec:UH:local}, we generalize the concept of stationarity in the present context and present a~\emph{local characterization} of universal horizons analogous to that of Killing horizons in general relativity. We also establish here some rather remarkable consequences of additional symmetries beyond stationarity.
	\item Finally in Section~\ref{sec:KH}, we specialize to stationary and axisymmetric spacetimes admitting a universal horizon and investigate the following question. Suppose that a theory with a preferred foliation resembles general relativity with sufficient precision at the linearized level, so that some of its propagating modes appear to have linear dispersion relations at low energies. These modes can then be thought of as propagating on a `lightcone' (propagation cone) of some effective metric. A Killing horizon of such an effective metric acts as an event horizon for the corresponding mode. Do such horizons cloak the universal horizon? Or can there be cases where such low energy modes reach the universal horizon before seeing an event horizon? We argue that such Killing horizons, if present, must always lie outside the universal horizon. We prove that they do indeed have to be present if the Killing vectors satisfy certain conditions, known as circularity conditions. These conditions are satisfied in {\em e.g.}~(electro)vacuum spacetimes in general relativity. However, as we demonstrate here they might not hold generically even for vacuum solutions of theories with a preferred foliation.
	\end{itemize}
We will closely follow the notations, conventions and presentation in the textbook by Wald~\cite{Wald:1984rg}, also relying from time to time on \cite{Penrose:1972ui,Hawking:1973uf,ONeill}. In particular, we will use the abstract index notation as introduced in \cite{Wald:1984rg}. Our convention for spacetime signature is $(-, +, +, +)$.
%
%
\section{Global aspects of causality}\label{sec:global-causality}
We will begin by establishing the basic aspects of causality theory as it applies to theories with a preferred foliation. Even though we heavily draw intuition from Ho\v{r}ava gravity, we will make no explicit reference to any field equations or actions, neither will we specify how the preferred foliation comes about in the theory in question.\footnote{For example, unlike the (vacuum) Ho\v{r}ava gravity example that was discussed in the Introduction, one could imagine a situation where it is the coupling to matter that renders a certain foliation preferred and drastically affects the causal structure.} Rather, the causal structure associated with the manifold can be fully established on kinematical, topological, and geometrical considerations alone and our only assumption about the dynamics is that the theory in question has a well-posed initial value problem. Hence, our techniques and conclusions are applicable, in principle, to a broader class of theories.
\subsection{Manifolds with a preferred foliation}
The spaces we wish to consider here are described by the triplet $(\man, \Sigma, \met)$, where $\man$ (the `spacetime') is a~\emph{Hausdorff, paracompact, smooth, connected} and~\emph{foliated} manifold without boundary, $\Sigma$ is the associated foliation structure, each leaf of which is a~\emph{connected, spacelike hypersurface} in $\man$, and $\met_{a b}$ is a Lorentzian metric on $\man$. Being submanifolds of $\man$, every leaf in the foliation is automatically Hausdorff, paracompact and smooth, although connectedness is not guaranteed merely by being submanifolds of the connected manifold $\man$. We will impose that the leaves themselves be connected, as an additional assumption on physical grounds.

Owing to their spacelike nature, every leaf of the foliation represents a set of events which are {\em simultaneous in an absolute sense}, giving a pre-relativistic flavor to the physics that takes place on such spacetimes. Being such `surfaces of simultaneity', the leaves can thus never intersect with one another; for otherwise, there would clearly be a breakdown of causality at the events where two (or more) leaves intersect. The geometrical property of a well-behaved foliation structure which automatically guarantees such elementary yet crucial requirements of causality is that~\emph{the foliation is ordered}. In particular, the foliation $\Sigma$ associated with the triplet $(\man, \Sigma, \met)$, by virtue of being~\emph{ordered}, ensures that~\emph{every pair of distinct events in $\man$ will have a unique causal relation}. We will see this explicitly in the following section, after we propose a consistent definition of past and future compatible with the current notion of preferred simultaneity. However, we may already discuss some of these issues in an intuitive manner by appealing to the ordered nature of $\Sigma$.

The spacetime is everywhere foliated by assumption, so every event must reside on~\emph{at least one} leaf of $\Sigma$. Also, since the leaves must not intersect every event resides on~\emph{at most one} leaf. Taken together, these two statements imply that every event in $\man$ will lie on a unique leaf of $\Sigma$. We may thus unambiguously denote a leaf of $\Sigma$ by $\Sigma_p$ if it contains the event $p$. By the same token, if the event $q \neq p$ is also contained in $\Sigma_p$, then $\Sigma_q = \Sigma_p$ and so on. Clearly, a leaf acts as a surface of simultaneity and it should not seem surprising that the existence of a foliation implies a suitable~\emph{causality condition}.

In fact, at least in any `sufficiently small' region of spacetime, one should be able to assign a unique real number $T$ to each leaf in that region, such that, the number varies from one leaf to the next in a~\emph{monotonic} manner preserving the ordering of the foliation. More formally, one may always introduce a real monotonic function $\Time$
	\begin{equation}\label{def:T}
	\Time : \Sigma \to \mathbb{R}~, \qquad \Time(\Sigma_p) = T \in \mathbb{R}~,
	\end{equation}
in any `sufficiently small' region of spacetime, such that the set of all such events with a given value of $T$ represents the leaf on which the said events reside, \ie
	\begin{equation}\label{def:Sg_T}
	\Sigma_T \equiv \{q \in \man~|~\Time(q) = \Time(\Sigma_p) = T\} = \Sigma_p~.
	\end{equation}
Furthermore, $\Time$ can be chosen to `increase towards future', so that the leaf $\Sigma_{T'}$ is to the future of $\Sigma_{T}$ if $T' > T$. A function $\Time$ satisfying the above criteria provides us with a~\emph{faithful time-parametrization of the foliation in the said region of spacetime}. From here onwards, we will adhere to common practice and denote a given choice of the $\Time$ function as well as its value at some event $p$ by the same letter $T$.

The ordered nature of the foliation guarantees~\emph{at least one} faithful time-para\-metrization of the foliation in any `sufficiently small' region of spacetime (but not necessarily globally). Whether or not this time-parametrization is unique will depend on the dynamical theory one has in mind. One could consider a theory that is invariant under the~\emph{time-reparametrization} of the foliation\footnote{By assumption, $T$ furnishes a faithful time-parametrization here, although $\tilde{T}$ may not. In fact, $\tilde{T}$ will furnish a faithful time-parametrization as long as $(\mathrm{d}\tilde{T}/\dT) > 0$. If $(\mathrm{d}\tilde{T}/\dT) < 0$, one can regard $(-\tilde{T})$ as time-parameter which increases towards future. The faithfulness of the time-parametrization via $\tilde{T}$ breaks down where $(\mathrm{d}\tilde{T}/\dT) = 0$.}
	\begin{equation}\label{def:reparam}
	T \mapsto \tilde{T} = \tilde{T}(T)~.
	\end{equation}
When such a time-reparametrization is possible the foliation is ordered, but not labeled. Ho\v{r}ava gravity is a characteristic example of a theory that enjoys symmetry under such time-reparametrizations. In the existing literature of Ho\v{r}ava gravity (see \eg \cite{Blas:2010hb}), $T$ is known as the~\emph{khronon} field and the leaves of the foliation are accordingly called the constant khronon hypersurfaces. Clearly, one can also have theories that are invariant under limited time-reparametrizations, \eg only time shifts or $T \to -T$.

In a theory where there is a uniquely labeled foliation, a breakdown of the preferred time-parametrization would necessarily signal a breakdown of the foliation structure itself. As we will see below, it is rather common for a time-parametrization to break down and fail to provide a full cover of the manifold (\eg across a Cauchy horizon, in the sense defined below). Hence, restricting ones attention to theories with preferred time-parametrization or attempting to formulate causality relations by making reference to any specific time-parametrization is not advisable. Rather, it is best to make no reference to any labelling of the foliation leaves when discussing causality. Any restrictions coming from the existence of a preferred labelling could always be imposed~\emph{a posteriori} if needed.

One way to proceed would then be to employ more abstract tools from the theory of~\emph{totally ordered sets}. Yet, especially from a physics perspective, a formulation of causality in terms of curves that connect events and allow one to assign causal relationships between them (and hence closer in spirit with general relativity) is perhaps preferable and desirable for multiple reasons:
	\begin{enumerate}[(i)]
	\item First, one can readily compare and contrast the present framework of causality with that of general relativity, underscoring the essential differences between them. 
	\item More importantly, the curves that allow one to establish causal relationship also model the propagation of signals. In particular, it is rather natural to visualize causal development as a `flow of data' from one `surface of simultaneity' to the next along such curves. Therefore, this formalism seems to have a more direct bearing on the questions of determinism and predictability.
	\item Last but not least, such a formalism is automatically invariant under the time-reparametrizations in~\eq{def:reparam}. Hence, it requires imposing no~\emph{a priori} labelling of the foliation and is manifestly covariant and geometrical.
	\end{enumerate}
In fact, to elaborate on the final point above, let us introduce an everywhere well-behaved one-form field $u_a$ proportional to the gradient of (any choice of) $T$
	\begin{equation}\label{ae:HSO}
	u_a = -N\nabla_a T \qquad\Leftrightarrow\qquad u_{[a}\nabla_b u_{c]} = 0~.
	\end{equation}
The one form $u_a$ is thus orthogonal to the leaves of the foliation $\Sigma$ by Frobenius' theorem. If we furthermore require $u_a$ to be unit normal~\emph{everywhere} in the spacetime, \ie
	\begin{equation}\label{ae:norm}
	u^2 \equiv u_a u_b \met^{a b} = -1~,
	\end{equation}
then the above two relations are sufficient to determine the function $N$ as
	\begin{equation}\label{N:evaluated}
	N = [-\met^{a b}(\nabla_a T)(\nabla_b T)]^{-1/2}~.
	\end{equation}
In particular, the sign of $N$ is fixed by choosing a $T$ that increases monotonically towards the future, which in turn ensures that $u_a$ is~\emph{future directed}. To elaborate, a manifold with a globally ordered foliation, whose leaves are everywhere spacelike, is naturally~\emph{time orientable} by the converse of Lemma 8.1.1 of \cite{Wald:1984rg}. This is because one may always construct a~\emph{continuous vector field} $u^a = \met^{a b}u_b$ everywhere, whose existence is guaranteed by the global ordered and well-behaved nature of the foliation (without any need to refer to any particular time-parametrization). Thus spacetimes with an ordered foliation naturally admit a well-defined sense of past and future directedness, which is also preferred in our case.

From~\eq{N:evaluated}, the function $N$ transforms under the $T$-reparametrizations of~\eq{def:reparam} as
	\begin{equation}\label{reparam:N}
	N \mapsto \tilde{N} = (\mathrm{d}\tilde{T}/\dT)^{-1}N~,
	\end{equation}
rendering $u_a$ invariant under the transformation in~\eq{def:reparam}. Therefore, quantities expressed in terms of $u_a$, as well as other geometrical objects (\eg tangents to curves in $\man$ which can be defined without making any reference to any time-parametrization of the foliation), will be automatically time-reparametrization invariant as well. The field $u_a$ is referred to as the {\ae}ther in the existing literature of Einstein-{\ae}ther theory (see \cite{Jacobson:2000xp}), and we will adopt this nomenclature below. We now turn our attention to a reparametrization invariant formulation of causality, in spacetimes with a preferred notion of simultaneity, in terms of the {\ae}ther.
\subsection{Causality in a foliated manifold}
In this section, we wish to develop a framework that establishes causal relationships between events in spacetime without explicit reference to some specific time-parametrization of the foliation. For the reasons listed above, we wish to stay close to the spirit of general relativity and use curves to determine causal relationships between events. As opposed to standard general relativity however, timelike and null curves do not exhaust the possibility of causal communication in our case, \ie we need to generalize the definition of~\emph{causal curves}. On a related and equally important note, the causal relationship between two events as specified by the metric is not sufficient for our purpose; rather, such relationships need to involve the ordered foliation structure in an essential way.

With these in mind, imagine a curve intersecting a stack of the ordered leaves of the foliation. If the curve does not `turn around' and intersect the same leaf of the foliation more than once, the stack will naturally slice the curve in the same order as the leaves in the stack, imbibing the curve with the same ordering information carried by the foliation. In turn, to ensure that a curve `does not turn back', it is sufficient to require that the tangent vector $\mathbf{t}^a$ of the curve maintains an inner-product with the {\ae}ther one-form $u_a$ which does not change sign. The following definition of causal curves is a formalization of the above idea.
	\begin{definition}[Causal and acausal curves]\label{def:CausalCurves}
A continuous and piecewise differentiable curve with tangent vector $\mathbf{t}^a$ will be called
	\begin{center}
	\begin{tabular}{r c l c}
	causal and future directed & if & $(u\cdot\mathbf{t}) < 0$ & everywhere along the curve, \\
	causal and past directed   & if & $(u\cdot\mathbf{t}) > 0$ & everywhere along the curve, \\
	acausal                    & if & $(u\cdot\mathbf{t}) = 0$ & everywhere along the curve.
	\end{tabular}
	\end{center}
	\end{definition}

Henceforth, we will always work with curves which are continuous and piecewise differentiable. According to the above definition, a curve that is not causal is not necessarily acausal and vice versa. Curves that are piecewise causal and piecewise acausal certainly exist, but they are of little use in discussing causality. It is also worth pointing out that since the {\ae}ther is more naturally defined as a one-form (see~\eq{ae:HSO}), the metric is not actually necessary in order to determine the causal nature of a curve.

It will be useful to cite some examples of causal curves to develop some feeling for them. For instance, a curve generated by the integral curves of a vector which is locally proportional to the {\ae}ther vector $u^a$, up to a function that does not change sign along the curve, is always causal by definition. Such curves are perhaps the most natural type of causal curves and will be called~\emph{preferred causal curves}. Next, a timelike geodesic of the metric $\met_{a b}$ provides an example of a causal curve which is not preferred since its tangent vector is not aligned with the {\ae}ther in general. On the other hand, a curve that is spacelike with respect to the metric $\met_{a b}$ (and hence not causal in general relativity) may still furnish an example of a curve that is causal in present context. As an excuse to construct examples of such curves, let us discuss the notion of a~\emph{speed-$c$ metric}. Such metrics were first introduced in \cite{Foster:2005ec} and will prove to be quite useful in our context. The speed-$c$ metric $\met^{(c)}_{a b}$ is a symmetric bilinear rank-two tensor built out of $u_a$, $\met_{a b}$ and a finite, positive real number $c$ (with $0 < c < \infty$) as follows
	\begin{equation}\label{def:speed-c-met}
	\met^{(c)}_{a b} = \met_{a b} - (c^2 - 1)u_a u_b~, \quad \met_{(c)}^{a b} = \met^{a b} - (c^{-2} - 1)u^a u^b~.
	\end{equation}
One may verify that $\met^{(c)}_{a b}$ is everywhere non-degenerate, and the corresponding inverse speed-$c$ metric $\met_{(c)}^{a b}$ can be given in terms $u^a$, $\met^{a b}$ and $c$ as in~\eq{def:speed-c-met}. The speed-$c$ metric gets its name from the fact that a point particle moving along a null curve of $\met^{(c)}_{a b}$ has a local speed $c$ as measured by an observer comoving with the {\ae}ther; in this sense $\met_{a b}$ is the `speed-$1$ metric'. For $c > 1$, the speed-$c$ metric has a propagation-cone that is strictly wider than that of $\met_{a b}$. Therefore, there are timelike geodesics of $\met^{(c)}_{a b}$ which are spacelike curves of $\met_{a b}$. On the other hand, such curves are causal curves according to Definition~\ref{def:CausalCurves}, as can be checked straightforwardly.

Along with the notion of a causal curve generalized as above, we also need to generalize the notions of the causal past and future of an event. Unlike in general relativity, we do not need to separate the notions of chronological and causal past/futures, since timelike and null curves do not play any separate significant roles in our discussion.

We will say that an event $q$ is in the future (past) of another event $p$, if there exists a future (past) directed causal curve from $p$ to $q$. The~\emph{causal future} of an event $p$, to be denoted by $J^{+}(p)$, is defined as the set of all events that can be reached from the event $p$ by a future directed causal curve. We may analogously define $J^{-}(p)$, the~\emph{causal past} of an event $p$, by replacing the `future directed' with `past directed' in the definition of $J^{+}(p)$. We will require $p \notin J^{\pm}(p)$ as part of the definitions of $J^{\pm}(p)$ in order to avoid unphysical statements like `$p$ is connected to itself by a causal curve' etc. As simple extensions of the definitions of the causal past and future of a single event, one may define the causal future and past of a set of events $\events$ as
	\begin{equation}\label{def:J(Q)}
	J^{+}(\events) \equiv \bigcup_{p\,\in\,\events}J^{+}(p)~, \qquad J^{-}(\events) \equiv \bigcup_{p\,\in\,\events}J^{-}(p)~,
	\end{equation}
respectively. Finally, an event $q$ will be simultaneous with another (distinct) event $p$ if there does not exist any causal curve from $p$ to $q$, \ie if $q \notin J^{\pm}(p)$. Consequently, we have yet another representation of the leaf $\Sigma_p$ (compare with~\eq{def:Sg_T})
	\begin{equation}\label{def:Sg_p}
	\Sigma_p = \{q \in \man~|~q \notin J^{\pm}(p)\}~.
	\end{equation}
By the assumed connectedness (hence path-connectedness) of every leaf, we can always connect any two events on a given leaf by an acausal curve (although not every curve that joins two distinct events on a given leaf is acausal).

In what follows, sometimes it will also be important to deal with just a set of events which are simultaneous with each other but the set itself may not be a whole leaf. We will call such a set of events a~\emph{simset}, as a contraction of `sim[ultanous] set'. More formally, a simset $\simset_p$ of the event $p$ is any open subset of $\Sigma_p$ that contains the event $p$,~\ie
	\begin{equation}\label{def:simset:formal}
	\simset_p \subseteq \Sigma_p~, \qquad p \in \simset_p~.
	\end{equation}
In particular $\Sigma_p$ itself is a simset. A simset will be called a~\emph{proper simset} if it is a proper subset of some leaf. As it will become more apparent in the following, the concept of the simset is motivated by the concept of an achronal set from general relativity. However, while these concepts share some common features and mathematical properties, there are also some crucial differences in their behaviour stemming from the different causal structures associated with them; we will emphasize these differences in their appropriate contexts below. A graphic illustration of the difference between the causal structure in Lorentz invariant theories and theories with a preferred foliation is given in Figure \ref{fig:Jcompare}.

With concrete definitions of past, future and simultaneity laid down as above, we may now recast the requirement that the foliation be ordered into the~\emph{equivalent} statement that~\emph{the sets of past, simultaneous and future events of every event are mutually disjoint, \ie}
	\begin{equation}\label{disjoint:JS}
	\eqalign{
	J^{-}(p) \cap \Sigma_p = \emptyset \cr
	J^{+}(p) \cap \Sigma_p = \emptyset \cr
	J^{-}(p) \cap J^{+}(p) = \emptyset} \qquad\qquad \forall p \in \man\,.
	\end{equation}
Conversely, a non-empty intersection among any two of the three sets $J^{\pm}(p)$ and $\Sigma_p$ will necessarily imply the existence of a pair of events with more than one inequivalent causal relationship between them. One may also verify the transitive properties for causal relationships, thereby confirming consistency with a totally ordered foliation. As an immediate application of the above, one also has
	\begin{equation*}
	\eqalign{
	J^{+}(q) \subset J^{+}(p) \cr 
	J^{-}(p) \subset J^{-}(q) \cr 
	J^{+}(q) \cap J^{-}(p) = \emptyset
	} \qquad\qquad \forall q \in J^{+}(p)~.
	\end{equation*}

As already mentioned before, an ordered foliation is expected to imply a natural causality condition. Our setup is seemingly suitable for applying Theorem 8.2.2 of \cite{Wald:1984rg} at first sight, which proves that a spacetime is~\emph{stably causal} (\ie possesses no closed timelike curves) if an only if it also globally admits a differentiable function with a past directed timelike gradient. However, on further reflection it becomes apparent that~\emph{global} existence of a differentiable function with a timelike gradient is not necessarily guaranteed by our setup. Furthermore, unlike general relativity, we also need to ensure that closed causal curves beyond timelike and null are ruled out. To that end, we have the following result:
	\begin{proposition}[Causality condition]\label{noCTC}
	No strictly future directed or strictly past directed causal curve may intersect a given leaf more than once.
	\end{proposition}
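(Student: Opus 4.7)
I would proceed by contradiction using the faithful time-parametrizations furnished on sufficiently small open sets by the ordered foliation. Suppose that a strictly future-directed causal curve $\gamma$ intersects some leaf at two distinct parameter values $\lambda_0 < \lambda_1$, so that $\gamma(\lambda_0)$ and $\gamma(\lambda_1)$ lie on the same leaf of $\Sigma$. Let $\phi(\lambda)$ denote the unique leaf $\Sigma_{\gamma(\lambda)}$ containing $\gamma(\lambda)$. My aim is to show that $\phi(\lambda_0) < \phi(\lambda_1)$ in the global total order on $\Sigma$, contradicting $\phi(\lambda_0) = \phi(\lambda_1)$.

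The first step is purely local. On any open set $\nhood$ that admits a faithful time-parametrization $T$, write $u_a = -N\nabla_a T$ with $N > 0$ (as fixed following equation \eref{N:evaluated}); then Definition \ref{def:CausalCurves} gives
\begin{equation*}
\frac{\mathrm{d}T}{\mathrm{d}\lambda} \;=\; \mathbf{t}^a \nabla_a T \;=\; -\frac{1}{N}\, u_a \mathbf{t}^a \;>\; 0
\end{equation*}
everywhere along the portion of $\gamma$ that lies in $\nhood$. So within $\nhood$ the leaf-label $\phi$ advances strictly in the local order induced by $T$.

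The second step is to patch these local statements. By compactness of $\gamma([\lambda_0, \lambda_1])$ I can select a partition $\lambda_0 = s_0 < s_1 < \cdots < s_m = \lambda_1$ so that each sub-arc $\gamma([s_i, s_{i+1}])$ lies inside an open set $\nhood_i$ carrying a faithful time-parametrization $T_i$. The first step then yields $\phi(s_i) < \phi(s_{i+1})$ in the local order of $\nhood_i$. Because $\Sigma$ is \emph{globally} totally ordered, these strict local inequalities concatenate into $\phi(s_0) < \phi(s_1) < \cdots < \phi(s_m)$ in the single global order on $\Sigma$, yielding $\phi(\lambda_0) < \phi(\lambda_1)$, which is the desired contradiction. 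The strictly past-directed case is identical with the reversed inequality $(u \cdot \mathbf{t}) > 0$.

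The main obstacle is the patching step. The local functions $T_i$ live on different domains and, on overlaps, are related only up to the monotonic reparametrizations of equation \eref{def:reparam}, so monotonicity of each $T_i$ along $\gamma$ is a priori only a local statement that does not on its own prevent the curve from revisiting an earlier leaf after crossing from one patch into another. What rescues the argument is precisely the \emph{global} ordered nature of the foliation assumed at the outset, which promotes the local inequalities into a single chain in the one total order on $\Sigma$. A minor technical subtlety is that $\mathbf{t}^a$ is undefined at the finitely many corners of a piecewise-differentiable curve; this is harmless since monotonicity of $T_i \circ \gamma$ holds on each smooth sub-piece and joins continuously across corners.
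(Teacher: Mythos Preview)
Your argument is correct and takes a genuinely different route from the paper's proof. The paper argues as follows: it first shows that a closed curve in $\man$ cannot be built from segments that are exclusively future-directed causal and acausal, by picking two leaves $\Sigma_p$, $\Sigma_q$ that the closed curve must each cross at least twice and deriving a violation of the disjointness axiom~\eq{disjoint:JS}; it then observes that if a future-directed causal curve met a leaf at two events, one could close it up with an acausal arc lying in that leaf, producing exactly such a forbidden closed curve. In short, the paper works at the level of the abstract causal relations $J^{\pm}$ and the axiom~\eq{disjoint:JS}.

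You instead work analytically with the local faithful time functions $T_i$: on each chart $u_a = -N\nabla_a T$ with $N>0$ gives $\mathrm{d}T/\mathrm{d}\lambda > 0$ along a future-directed causal arc, and compactness of $\gamma([\lambda_0,\lambda_1])$ lets you concatenate these strict local increases into a strict inequality in the single global order on the leaves. This is closer in spirit to the standard general-relativistic proof that a time function rules out closed causal curves, and it avoids the auxiliary closed-curve construction. The paper's route, by contrast, uses only the order-theoretic reformulation~\eq{disjoint:JS} and needs no compactness argument or local coordinates. Your handling of the patching step is the right one: the individual $T_i$ are only defined up to the reparametrizations~\eq{def:reparam}, but each $T_i$ is order-preserving onto the \emph{same} global total order on $\Sigma$, so $T_i(\gamma(s_i)) < T_i(\gamma(s_{i+1}))$ indeed yields $\phi(s_i) < \phi(s_{i+1})$ in that global order and the inequalities chain. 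The remark about corners of piecewise-differentiable curves is also correct.
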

	\begin{proof}
Suppose, a closed curve exists in $\man$, which is future-directed and causal between the leaves $\Sigma_p$ and $\Sigma_q$, intersecting $\Sigma_p$ at $p_1$ and $\Sigma_q$ at $q_1$ such that $q_1 \in J^{+}(p_1)$. Since the curve is closed by assumption, it must intersect both $\Sigma_p$ and $\Sigma_q$ at least once more each, say, at events $p_2 \in \Sigma_p$ and $q_2 \in \Sigma_q$ respectively. Obviously, $p_1$ and $p_2$ are simultaneous since they reside on the same leaf, and so are $q_1$ and $q_2$ as well. But if the segment on the curve from $q_2$ to $p_2$ is not past directed anywhere, one has $q_2 \notin J^{+}(p_2)$. This contradicts the causal relationships between the pairs of events $(p_1, q_1)$, $(p_1, p_2)$ and $(q_1, q_2)$, thereby violating condition~\eq{disjoint:JS}. This proves that the curve must have a past directed causal segment.

Now, if a causal curve were to intersect a leaf more than once, one could join pairs of these events of intersection by acausal curves, in an obvious manner, to form closed curves with only future or past directed segments, but not both. Therefore by our previous result, every causal curve may intersect a given leaf at most once.
	\end{proof}
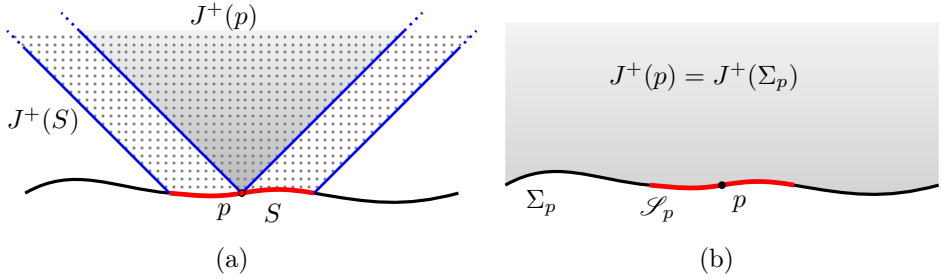
\begin{figure}[]
	\centering
	\subfloat[][]{
	\begin{tikzpicture}[scale=0.48]
\shade [bottom color=gray!55, top color=gray!10] (0,-1) -- (4.5,3.5) -- (-4.5,3.5) -- (0,-1);
\pattern [pattern=dots, pattern color=gray] (-2,-1) to [out=-5, in=190, relative] (0,-1) -- (0,-1) to [out=10, in=170, relative] (2,-1) -- (6.5,3.5) -- (-6.5,3.5) -- (-2,-1);
\draw [line width=1.2pt] (-6,-1) to [out=30, in=175, relative] (-2,-1);
\draw [line width=1.8pt, red] (-2,-1) to [out=-5, in=190, relative] (0,-1);
\draw [line width=1.8pt, red] (0,-1) to [out=10, in=170, relative] (2,-1);
\draw [line width=1.2pt] (2,-1) to [out=-10, in=195, relative] (6,-1);
\draw [line width=1pt, blue] (0,-1) -- (-4.5,3.5);
\draw [line width=1pt, blue] (0,-1) -- (4.5,3.5);
\draw [line width=1pt, blue, dotted] (-4.5,3.5) -- (-5,4);
\draw [line width=1pt, blue, dotted] (4.5,3.5) -- (5,4);
\draw [line width=1pt, blue] (-2,-1) -- (-6,3);
\draw [line width=1pt, blue] (2,-1) -- (6,3);
\draw [line width=1pt, blue, dotted] (-6,3) -- (-6.5,3.5);
\draw [line width=1pt, blue, dotted] (6,3) -- (6.5,3.5);
\draw [fill=red] (0,-1) circle (2.6pt);
\node at (-0.5,3.9) {\footnotesize $J^+(p)$};
\node [anchor=north east] at (0,-1) {\footnotesize $p$};
\node at (-5.5,1) {\footnotesize $J^+(S)$};
\node [anchor=north east] at (1.4,-1) {\footnotesize $S$};
	\end{tikzpicture}}
	\enspace
	\subfloat[][]{
	\begin{tikzpicture}[scale=0.48]
\shade [bottom color=gray!35, top color=gray!10] (-6,-1) to [out=30, in=175, relative] (-2,-1) -- (-2,-1) to [out=-5, in=190, relative] (0,-1) -- (0,-1) to [out=10, in=170, relative] (2,-1) -- (2,-1) to [out=-10, in=195, relative] (6,-1) -- (6,3.5) -- (-6,3.5) -- (-6,-1);
\draw [line width=1.2pt] (-6,-1) to [out=30, in=175, relative] (-2,-1);
\draw [line width=1.8pt, red] (-2,-1) to [out=-5, in=190, relative] (0,-1);
\draw [line width=1.8pt, red] (0,-1) to [out=10, in=170, relative] (2,-1);
\draw [line width=1.2pt] (2,-1) to [out=-10, in=195, relative] (6,-1);
\draw [fill=black] (0,-1) circle (2.6pt);
\node [anchor=north west] at (0,-1) {\footnotesize $p$};
\node at (-0.5,2) {\footnotesize $J^{+}(p) = J^+(\Sigma_p)$};
\node at (-5,-1.5) {\footnotesize $\Sigma_p$};
\node [anchor=north east] at (-1,-1) {\footnotesize $\simset_p$};
	\end{tikzpicture}}
\caption{Difference between the notions of causal future in locally Lorentz invariant theories (A) and theories with a preferred foliation (B).}
	\label{fig:Jcompare}
	\end{figure}

Thus far, we have verified that our proposed definitions of past, future and simultaneity meet the most basic requirements of consistency. The rest of this work is devoted towards uncovering those unique features of causality in a foliated manifold which drastically contrast those of general relativity. As we already saw above, curves that are arbitrarily spacelike with respect to $\met_{a b}$ may still represent causal curves here. One of the rather remarkable consequences of the existence of such curves, our definition of future (past), and the causality condition as expressed in \eq{disjoint:JS} is that the future (past) of every event is identical with the future (past) of the leaf on which the event resides or that of any simset of the leaf, \ie
	\begin{equation}\label{LVcausality}
	\eqalign{
	& J^{+}(p) = J^{+}(\Sigma_p) = J^{+}(\simset_q) \cr
	& J^{-}(p) = J^{-}(\Sigma_p) = J^{-}(\simset_q)	}
	\qquad \forall p \in \man,\quad\forall q \in \Sigma_p~.
	\end{equation}

We will now make some comments and observations on the open/closedness of the sets $J^{\pm}(\Sigma_p)$ and related properties of their respective closures. Consider the set $J^{+}(\Sigma_p)$ to begin with. Since the whole spacetime is open by assumption, $J^{+}(\Sigma_p)$ cannot contain any `boundary events', \ie every event $q \in J^{+}(\Sigma_p)$ should admit at least one open neighbourhood $\nhood_q \subseteq J^{+}(\Sigma_p)$; more formally, one may invoke the results of Theorem 8.1.2 of \cite{Wald:1984rg} (see also Proposition 2.8 of \cite{Penrose:1972ui} or Lemma 14.2 of \cite{ONeill}) in order to construct a proof of this. Therefore $J^{+}(\Sigma_p)$ is an open set.

The fact that $J^{+}(\Sigma_p)$ is open can also be deduced in a more intuitive fashion as follows: the speed-$c$ metric $\met^{(c)}_{a b}$ of~\eq{def:speed-c-met} allows us to~\emph{formally} associate an open set $I^{+}_{(c)}(p)$ -- the general relativistic chronological future of $p$ constructed with $\met^{(c)}_{a b}$ -- at every event $p \in \man$. The collection $\{I^{+}_{(c)}(p)~|~c > 0\}$ then forms an open cover of $J^{+}(p)$ such that $J^{+}(p) = \cup_{c > 0}I^{+}_{(c)}(p)$. Therefore $J^{+}(p)$, and hence $J^{+}(\Sigma_p)$ by virtue of~\eq{LVcausality}, are open. We should emphasize that the open sets $I^{+}_{(c)}(p)$ have been used as pure mathematical objects in the above argument; in particular, they have no physical significance in regards to the causality of the backgrounds, either here or in what follows. However, the proof does rest on the intuitive picture that in a locally Lorentz invariance violating geometry, causal curves are no longer contained in any fixed propagation cones, and that the leaves of the foliation are the result of `opening up/flattening out' of the local propagation cones to their maximum in their attempt to contain these causal curves within them.

One may likewise argue that $J^{-}(\Sigma_p)$ is an open set. Furthermore, from the openness of $J^{\pm}(\Sigma_p)$ and the spacetime being a Hausdorff manifold, it is straightforward to deduce that for every pair of distinct non-simultaneous events $p, q \in \man$ such that $q \in J^{+}(p)$, there must exist disjoint open neighbourhoods $\nhood_p$ of $p$ and $\nhood_q$ of $q$ such that every event in $\nhood_q$ is in the future of every event in $\nhood_p$.

Given the unique causal relationship between every pair of events $p, q \in \man$, we then have the following three mutually exclusive possibilities: $q$ must be either in the past of $p$, or be simultaneous with $p$, or else be in the future of $p$. One may summarize this as
	\begin{equation}\label{def:M}
	\man = J^{+}(\Sigma_p) \cup \Sigma_p \cup J^{-}(\Sigma_p)~, \qquad \forall p \in \man~. 
	\end{equation}
As a trivial consequence of the above relation, every leaf is a closed set in $\man$. This is however expected, as every leaf is essentially composed of `boundary points'. In other words, for every event $q \in \Sigma_p$, every open neighbourhood $\nhood_q$ of $q$ contains events which are on the leaf as well as those which are not on the leaf.

From the above discussions, it also follows immediately that the spacetime $\man$ cannot be compact without violating our causality condition~\eq{disjoint:JS}. For suppose $\man$ were compact, \ie every open cover of $\man$ had a finite subcover. Consider now the open cover $\{J^{+}(\Sigma_p)|~\forall p\in\man\}$. By the assumed compactness of $\man$, this should have a finite subcover, say $\{J^{+}(\Sigma_{p_1}), J^{+}(\Sigma_{p_2}), \cdots, J^{+}(\Sigma_{p_n})\}$, for some finite integer $n$, such that $\man = \cup_{i = 1}^n J^{+}(\Sigma_{p_i})$. Furthermore, we may assume without any loss of generality that the events $\{p_1, \cdots, p_n\}$ are ordered in a chronological fashion so that $p_1$ is not in the future of any of the other events. But then, $J^{+}(\Sigma_{p_i}) \subseteq J^{+}(\Sigma_{p_1})$ for all $i \neq 1$, which would imply $\man = J^{+}(\Sigma_{p_1})$, and hence $p_1 \notin \man$. This is a contradiction; therefore, $\man$ cannot be compact. One may note that our argument is a direct adaptation of similar arguments in general relativity (see, \eg Proposition 6.4.2 of \cite{Hawking:1973uf} or Lemma 10 of \cite{ONeill}).

Finally, we may close both $J^{\pm}(\Sigma_p)$ in $\man$ simply by appending the leaf $\Sigma_p$ to the respective sets
	\begin{equation}\label{def:Jpm-closure}
	\eqalign{
	& \bar{J^{+}(\Sigma_p)} = J^{+}(\Sigma_p) \cup \Sigma_p = J^{-}(p)^c  \cr
	& \bar{J^{-}(\Sigma_p)} = J^{-}(\Sigma_p) \cup \Sigma_p = J^{+}(p)^c
	} \qquad\qquad \forall p \in \man~,
	\end{equation}
where $\events^c \equiv \man \setminus \events$ is the complement of $\events$ in $\man$. In particular, the relationship between the closure of the future (past) and the complement of the past (future) follows directly from~\eq{def:M}. Given the closures, the boundaries $\partial J^{\pm}(\Sigma_p)$ of the past and the future sets of $\Sigma_p$ in $\man$ are then given by
	\begin{equation}\label{def:dl-Jpm}
	\partial J^{+}(\Sigma_p) = \partial J^{-}(\Sigma_p) = \Sigma_p~, \qquad \forall p \in \man~.
	\end{equation}
\subsection{Asymptotics}\label{sec:def:scrI}
The appropriate definition of a black hole and the causal structure of the corresponding configuration (spacetime and foliation) is among our main goals here. As is the case in general relativity, the notion of an asymptotic region will be central to the discussion, because only then one may precisely make sense of `moving far away' from a black hole and be able to claim that `nothing can escape to infinity' from the part of the spacetime beyond an event horizon. Hence, before we go any further we devote this section to a somewhat technical discussion about how to properly discuss asymptotics in our context. Our focus will be on the simplest case of asymptotics, {\em i.e.}~a suitable generalization of the concept of asymptotic flatness.

In spherically symmetric asymptotically flat geometries (see \eg \cite{Barausse:2011pu} for relevance in the present context) the notion of infinity comes very naturally in terms of the areal radial coordinate. But in less symmetric settings one cannot follow a similar prescription. In what follows, our primary goal will be to formalize the notion of an~\emph{asymptotic region} of (a part of) a foliated spacetime beyond any particular symmetries, and the associated notion of a boundary at infinity. 

In general relativity, as an outgrowth of the seminal work of Bondi, van der Burg, Metzner~\cite{Bondi:1962px}, Sachs~\cite{Sachs:1962wk} and Penrose~\cite{Penrose:1962ij,Penrose:1965am}, we have a precise notion of what it means for a spacetime to be~\emph{asymptotically flat at null infinity}. The motivating question in this case was `what defines an isolated gravitating system?', and the notion of infinity thus formulated allowed one to `place an observer at infinity', abstractly yet consistently, with respect to whom the said gravitating system appears `completely isolated'. In a different line of investigation that started with the initial value formulation of general relativity (see \cite{Arnowitt:1962hi}), Geroch formalized the notion of~\emph{asymptotic flatness at spatial infinity} in \cite{Geroch:1972up} in terms of the `asymptotic behaviour of initial data' on a Cauchy surface. Eventually, such seemingly different formulations of infinity in asymptotically flat spacetimes have been unified, \eg starting with the work of Ashtekar and Hansen in \cite{Ashtekar:1978zz}~(see also \cite{Held:1980gc}). Similar studies leading to suitable definitions of asymptotic structures of other kinds of spacetimes (\eg anti-de Sitter; see \cite{Ashtekar:1984zz,Henneaux:1985tv}) have been performed.

One obvious yet important upshot of these studies in the context of general relativity is that for every kind of direction along which one may wish to travel in spacetime, there is a corresponding notion of infinity available, provided that it is possible to `reach infinity' along this direction. For instance, in the case of asymptotically flat spacetimes one has the notions of (i) the future and past timelike infinities, denoted by $\pinf^{\pm}$ respectively, where one may `end up' if travelling along future and past directed timelike curves respectively, (ii) the future and past null infinities, denoted by $\scrI^{\pm}$ respectively, where one may `end up' if travelling along future and past directed null curves respectively, and finally (iii) the spacelike infinity, denoted by $\pinf^0$, where all spacelike curves `end up'. Likewise, in anti-de Sitter spacetimes, infinity consists of a single timelike boundary $\scrI$ (which can be simultaneously interpreted as spacelike and null infinities) and a pair of timelike infinities $\pinf^{\pm}$ disjoint from each other and $\scrI$. Such an overall structure is expected in the context of general relativity due to the significance of timelike and null curves in determining the causal structure of the spacetime.

In foliated spacetimes that we are studying in this work, we only have a single variety of causal curves. Consequently, the only notion of infinity that we may need to formalize is the one associated with such causal curves. In principle, one may still talk about some asymptotic structure similar to general relativity with respect to a speed-$c$ metric \eq{def:speed-c-met} for some fixed value of $c$. However, such structures can only be relevant for perturbations that are restricted to propagate inside or on the null cones of the chosen speed-$c$ metric. Here, on the other hand, we have a fundamental departure from such `relativistic' asymptotic structures as signals can propagate arbitrarily fast and cannot be contained permanently within the null cone of any speed-$c$ metric, irrespective of how large $c$ is. Intuitively, such `arbitrarily fast propagations' are expected to `end up' at some appropriately defined spatial infinity. As we will try to argue below, there is a simple generalization of the approach discussed by Geroch in \cite{Geroch:1972up} that exactly leads to a proper definition of infinity suitable for our needs.

In general relativity the asymptotic structure of spacetimes is studied by~\emph{conformally compactifying} the physical spacetime $\man$ into a larger compact manifold with boundaries $\tilde{\man}$, where the physical metric $\met_{a b}$ is related to metric $\tilde{\met}_{a b}$ on $\tilde{\man}$ through a conformal transformation $\tilde{\met}_{a b} = \Omega^2\met_{a b}$ with $\Omega > 0$. It is also easy to show (see below) that such a transformation will naturally induce a corresponding conformal transformation on the three-metric (induced by $\met_{a b}$) on an initial data set, allowing for a conformal compactification of the latter. This serves as a starting point of Geroch's formulation of spatial infinity in \cite{Geroch:1972up}.

In order to mimic the above procedure, let us begin with defining the three dimensional metric (projector) $\pmet_{a b}$ and its `inverse' $\pmet^{a b}$, induced by the full spacetime metric $\met_{a b}$ on each leaf of the foliation, as follows
	\begin{equation}\label{def:pmet}
	\pmet_{a b} = u_a u_b + \met_{a b}~, \qquad \pmet^{a b} = u^a u^b + \met^{a b}~.
	\end{equation}
For every leaf $\Sigma_p$ with the `induced metric' $\pmet_{a b}$ as defined above (in the relevant part of the spacetime), we wish to obtain a connected, Hausdorff and compact three-manifold $\tilde{\Sigma}_p$ into which $\Sigma_p$ is to be embedded, such that the three-metric $\tilde{\pmet}_{a b}$ on $\tilde{\Sigma}_p$ is related to $\pmet_{a b}$ via a conformal transformation
	\begin{equation}\label{aW:def:tlpmet}
	\pmet_{a b} \mapsto \tilde{\pmet}_{a b} = \Omega^2\pmet_{a b}, \qquad \pmet^{a b} \mapsto \tilde{\pmet}^{a b} = \Omega^{-2}\pmet^{a b}, \qquad \quad \Omega > 0~,
	\end{equation}
for some function $\Omega$ defined on the spacetime with some appropriate asymptotic behaviour. Note, as an aside, that the `unit maps' (on the hypersurface) $\tensor{\pmet}{^a_b}$ and $\tensor{\pmet}{_a^b}$ remain unaffected by the above transformations. Topologically, the above procedure is equivalent to a~\emph{one-point compactification} of the leaf; namely, we append to the leaf $\Sigma_p$ an event $\pinf_p$ such that the `larger' manifold
	\begin{equation}\label{def:tlS_p}
	\tilde{\Sigma}_p = \Sigma_p \cup \{\pinf_p\}~,
	\end{equation}
is compact. We will call $\tilde{\Sigma}_p$ a~\emph{conformal extension} of the leaf $\Sigma_p$. A standard result from topology says that every locally-compact non-compact Hausdorff space has a unique one-point compactification (see \eg \cite{Munkres}). Every leaf of the foliation can be compactified via this procedure in principle, as it is a Hausdorff manifold (by assumption). More importantly, the procedure of conformal compactification not only ensures a unique topological structure on $\tilde{\Sigma}_p$, but in fact a unique differential structure on it (see \cite{Geroch:1970cd,Geroch:1972up}). In accordance with standard practice, the event $\pinf_p$ will the called~\emph{the point at (spatial) infinity on the leaf $\Sigma_p$}.

In general relativity, the next round of business involves postulating appropriate behaviour of the metric and the function $\Omega$ at the point at infinity that will guarantee a suitable asymptotic behaviour of the spacetime at spatial infinity. We will henceforth consider the simplest asymptotic behaviour, namely~\emph{asymptotic flatness} (we will briefly comment on other possible types of asymptotics in our concluding remarks). Intuitively, an asymptotically flat spacetime is characterized by sufficiently fast fall-off of all matter fields such that asymptotically the spacetime appears empty and Minkowskian. However in the present context, we also need to worry about the behaviour of the foliation, or equivalently the {\ae}ther, at infinity. As an integral part of our background (being responsible for defining the foliation structure), the {\ae}ther cannot be treated as some matter field here.

In our subsequent discussions, global Minkowski spacetime with a constant {\ae}ther aligned with a timelike Killing vector will play a very similar role to that played by global Minkowski spacetime in general relativity. Anticipating its importance, we will henceforth denote such a spacetime as a~\emph{trivially foliated flat spacetime}.\footnote{It is straightforward to show that Ho\v{r}ava gravity admits such backgrounds as vacuum solutions.} Such a spacetime is maximally symmetric, with both the metric and the {\ae}ther satisfying all of the available Killing symmetries. In particular, one may always choose standard Minkowski coordinates in which the {\ae}ther is given by $u_a = -\nabla_at$, where $t$ is the Minkowski time coordinate.

In more general spacetimes, a leaf $\Sigma_p$ is said to admit a~\emph{trivially foliated asymptotically flat end} if one may conformally extend the leaf by appending a point at infinity $\pinf_p$ to it (recall~\eq{def:tlS_p}) such that the `asymptotic behaviour' of the spacetime and {\ae}ther `approaches' that of a trivially foliated flat spacetime as one `approaches $\pinf_p$'. Formally, this implies that two separate conditions need to be satisfied. The three-metric $\pmet_{a b}$ should satisfy the usual (general relativistic) conditions of asymptotic flatness at spatial infinity, thoroughly discussed in \cite{Geroch:1972up}. Additionally, the {\ae}ther should also have the appropriate asymptotic behaviour, \ie it should align with an asymptotic timelike Killing vector at infinity.

In order to satisfy this last requirement, one can introduce a local rescaling of the {\ae}ther as follows
	\begin{equation}\label{aW:def:tlu}
	u_a \mapsto \tilde{u}_{a} = \Omega_u u_a~, \qquad u^a \mapsto \tilde{u}^{a} = \Omega_u^{-1}u^a, \qquad \Omega_u > 0~,
	\end{equation}
where $\Omega_u \neq \Omega$ is some function on the spacetime. The above transformation of the {\ae}ther naturally preserves the foliation structure, \ie $\tilde{u}_a$ is hypersurface orthogonal with respect to the same foliation structure (this is obvious from the fact that $\tilde{u}_a$, as defined above, is proportional to a one-form which is orthogonal on the same set of hypersurfaces), and the unit norm constraint \eq{ae:norm} is also maintained. Furthermore, since $\Omega_u \neq \Omega$ in general,~\eq{aW:def:tlpmet} and~\eq{aW:def:tlu} leads to a~\emph{local disformal transformation} of the metric as follows
	\begin{equation}\label{aW:def:tlmet}
	\eqalign{
	\met_{a b} \mapsto \tilde{\met}_{a b} = \Omega^2\met_{a b} + (\Omega^2 - \Omega_u^2)u_a u_b = \Omega^2(\met_{a b} - [c(x)^2 - 1]u_a u_b)~, \cr
	\met^{a b} \mapsto \tilde{\met}^{a b} = \Omega^{-2}\met^{a b} + (\Omega^{-2} - \Omega_u^{-2})u^a u^b = \Omega^{-2}(\met^{a b} - [c(x)^{-2} - 1]u^a u^b)~,
	}
	\end{equation}
with $c(x) \equiv \Omega_u\Omega^{-1}$ (as before, the unit maps $\tensor{\delta}{^a_b}$ and $\tensor{\delta}{_a^b}$ are unaffected by the transformations). The standard conformal transformation of the metric is recovered for $\Omega_u = \Omega$ (equivalently, when $c(x) = 1$). One may also view the disformal transformations as a local generalization of the global field redefinitions introduced in \cite{Foster:2005ec}. From this perspective, a disformal transformation locally maps the four-metric conformally to some speed-$c$ metric~\eq{def:speed-c-met} at the same location (\ie one chooses $c = c(x_0)$ at the location $x = x_0$).

By specifying suitable asymptotic behaviour of $\Omega$ and $\Omega_u$ at $\pinf_p$, one may appropriately generalize the conditions postulated in \cite{Geroch:1972up} and formally define the notion of a trivially foliated asymptotically flat end of a leaf. However, it is important to note that the aforementioned asymptotic behaviour of $\Omega$ and $\Omega_u$ will be sensitive to the specific theory under consideration. Given the scope of the current project, therefore, we will not venture into the technical details of these conditions.

According to the above prescription, every leaf in a trivially foliated flat spacetime admits a trivially foliated flat end, as consistency demands. Moving on to more general spacetimes, one may now formally `attach' a suitable notion of `asymptotic region' to (a part of) a foliated spacetime $\man$ as follows: an open region $\outside{\man} \subseteq \man$ will be said to admit a~\emph{trivially foliated asymptotically flat end}, if every leaf $\Sigma_p \subset \outside{\man}$ has a trivially foliated asymptotically flat end in the sense defined above. In order to avoid any possible issues in our future constructions, we will assume henceforth that $\outside{\man}$ is the~\emph{maximal} open region to admit a trivially foliated asymptotically flat end. The region $\outside{\man}$ of a trivially foliated flat spacetime is identical with $\man$ itself, but this is not true for more general spacetimes. Indeed, the fact that a spacetime may have regions beyond $\outside{\man}$ is at the heart of the concept of a black hole, as will be seen in Section~\ref{sec:def:EH}. Finally, by `stringing' together all the points at infinity, we may then formally define the~\emph{asymptotically flat end} $\scrI$ of $\outside{\man}$ as follows
	\begin{equation}\label{def:scrI}
	\scrI = \bigcup_{p \in \outside{\man}}\pinf_p~.
	\end{equation}
The above equation thus defines the sought after notion of an asymptotically flat boundary at infinity.

The above definition essentially ensures that the region $\outside{\man}$ of a general spacetime $\man$ admits an asymptotic region along with a trivially foliated flat end if a suitably chosen open neighbourhood of $\scrI$ in $\outside{\man}$ `resembles' a similarly chosen open neighbourhood of $\scrI$ of a trivially foliated flat spacetime (the latter can be easily identified through straightforward extensions of standard conformal techniques of general relativity). Delving deeper into such matters is, however, beyond the scope of the present work.
\subsection{Event horizons and black holes}\label{sec:def:EH}
In this section, we finally turn our attention to the concepts of black and white holes and the corresponding notions of event horizons. As in general relativity, we would like to define a black hole as a region of the spacetime from which causal influences can `never escape'. An event horizon, by definition, should then mark the boundary of such a region. In the present context however, such causal influences can propagate `arbitrarily fast' and are not restricted to remain within the propagation cone of any speed-$c$ metric~\eq{def:speed-c-met}. As such, the following definitions of black/white holes and their event horizons should appropriately reflect the difference from the corresponding general relativistic concepts.

In \cite{Blas:2011ni,Barausse:2011pu}, static, spherically symmetric and asymptotically flat solutions have been found in Einstein-{\ae}ther and Ho\v{r}ava theories. In these solutions there exist at least one leaf of the preferred foliation that is also a constant $r$ hypersurface, say $r = r_0$, where $r$ is a Schwarzschild(-like) areal radial coordinate. This implies that any subsequent leaf cannot reach larger values of $r$, and hence, any signal emitted from an event lying on a subsequent leaf cannot travel to $r \geqq r_0$ without travelling backwards in (preferred) time, irrespective of how fast it propagates. Thus, a leaf that is also a constant $r$ surface was called a~\emph{universal horizon}. It is important to note that \cite{Blas:2011ni,Barausse:2011pu} identified multiple nested `universal horizons' in their spherically symmetric solutions. In the present work, however, we will only regard the `outermost' among them as~\emph{the} universal horizon, since only this one truly represents an event horizon, as will be clarified below.

In order to formalize the definition of a black hole in the most general setting, we need to properly clarify the notion of `never escaping' a region of spacetime. To that end, consider an event $p \in \man$ which is~\emph{not} inside a black hole. Then, an `arbitrary fast' excitation leaving $p$ may move `far away' towards some `asymptotic region'. In Section~\ref{sec:def:scrI}, we formalized the notion of such an asymptotic region as some suitably chosen neighborhood of the boundary at infinity $\scrI$. Therefore, a rigorous way to interpret the idea of `escaping' (as in the definition of a black hole) will be to consider being in any arbitrary open neighbourhood of $\scrI$. An event $p \in \man$ will~\emph{not} be inside a black hole region if one can find a future directed causal curve through $p$ which enters any such neighbourhood of $\scrI$.

The definition of the asymptotic boundary $\scrI$ given in Section~\ref{sec:def:scrI} comes alongside that of an open region $\outside{\man} \subseteq \man$ with the property that every leaf $\Sigma_p \subset \outside{\man}$ admits a point at infinity $\pinf_p$ (see~\eq{def:scrI}) in the conformal extension of the spacetime. In particular, by invoking the connectedness of $\outside{\man}$ and that of every leaf in it, one may construct an acausal curve through any event $q \in \Sigma_p \subset \outside{\man}$ which remains entirely confined in $\Sigma_p$ and enters any open neighbourhood of $\pinf_p$. Adding an appropriate component along the {\ae}ther to the tangent to any such curve, one may then construct a curve $\lambda(\uptau) \subset \outside{\man} \cup \scrI$ for $\uptau \in [0, 1]$ through any $p \in \outside{\man}$, with $\lambda(0) = p$ and $\lambda(1) \in \scrI$, such that the curve is future directed causal in $\outside{\man}$. In other words, $\outside{\man}$ consists of events from which there always exists at least one future directed causal curve which enters any arbitrary neighbourhood of $\scrI$ -- the asymptotic region. Similar considerations show that $\outside{\man}$ also consists of events from which there always exists at least one past directed causal curve which enters any arbitrary neighbourhood of $\scrI$. Thus the open region $\outside{\man}$ seemingly has the right properties to be interpreted as (at least part of) the `outside' region of a black/white hole spacetime.

Now, from the definitions given in~\eq{def:J(Q)}, we have
	\begin{equation}\label{J(I)=J(<M>)}
	\eqalign{
	J^{-}(\scrI) = \bigcup_{p \in \outside{\man}}J^{-}(\pinf_p) = \bigcup_{p \in \outside{\man}}J^{-}(\Sigma_p) = J^{-}(\outside{\man})~, \cr
	J^{+}(\scrI) = \bigcup_{p \in \outside{\man}}J^{+}(\pinf_p) = \bigcup_{p \in \outside{\man}}J^{+}(\Sigma_p) = J^{+}(\outside{\man})~.
	}
	\end{equation}
In both cases above, the second equality follows from~\eq{LVcausality}, while the final equality invokes the definitions past and future sets in~\eq{def:J(Q)} once more. Since $\outside{\man}$ is open by definition and the past and the future sets are open as argued previously, one has $\outside{\man} \subseteq J^{\pm}(\outside{\man})$, and hence $\outside{\man} \subseteq J^{\pm}(\scrI)$ according to~\eq{J(I)=J(<M>)}. In fact, in a trivially foliated flat spacetime $J^{-}(\scrI) = J^{+}(\scrI) = \outside{\man} = \man$ as can be seen directly. More generally however, there could be parts of $J^{\pm}(\scrI)$ which are outside $\outside{\man}$. One may then invoke the causality condition in \eq{disjoint:JS} to argue that $J^{+}(\scrI) \setminus \outside{\man}$ and $J^{-}(\scrI) \setminus \outside{\man}$ are disjoint, from which it immediately follows that
	\begin{equation}\label{<M>=J(I)UJ(I)}
	\outside{\man} = J^{-}(\scrI) \cap J^{+}(\scrI)~.
	\end{equation}
This is analogous to the definition of~\emph{the domain of outer communication} in general relativity (see~\eg \cite{Carter:1972}), providing further support for the identification $\outside{\man}$ as the `outside region' of a black/white hole. One may also note that unlike $\outside{\man}$, one may find~\emph{only past directed causal curves} from any event in $J^{+}(\scrI) \setminus \outside{\man}$ that reaches any arbitrary neighbourhood of $\scrI$, and likewise, one may find~\emph{only future directed causal curves} from any event in $J^{-}(\scrI) \setminus \outside{\man}$ that reaches any such neighbourhood of $\scrI$.

We may now define\footnote{Though not strictly needed, one might prefer to add the assumption of~\emph{strong asymptotic predictability} here, that would guarantee that the open region $\outside{\man}$ is free of pathologies~\eg `missing points' and such. This would have to be a suitable adaptation of strong asymptotic predictability as defined in general relativity (see~\eg page 299 of \cite{Wald:1984rg}) but with an appropriate notion of development that takes into account the differences in causal structures. We will consider this issue in the next section, and the concept will be formally introduced in Section~\ref{sec:Cauchy-vs-Event}.}~\emph{the black hole region with respect to $\scrI$}, to be denoted by $\BH(\scrI)$, as the part of the spacetime which is not contained in the past of the boundary at infinity, \ie
	\begin{equation}\label{def:BH}
	\BH(\scrI) \equiv \man \setminus J^{-}(\scrI)~.
	\end{equation}
Similarly, we may define~\emph{the white hole region with respect to $\scrI$}, to be denoted by $\WH(\scrI)$, as the part of the spacetime which is not contained in the future of the boundary at infinity, \ie
	\begin{equation}\label{def:WH}
	\WH(\scrI) \equiv \man \setminus J^{+}(\scrI)~.
	\end{equation}
Finally, the~\emph{future and past event horizons} $\EH^{\pm}(\scrI)$ of the black and white hole regions with respect to $\scrI$ will be defined as the boundaries of $J^{\pm}(\scrI)$ in $\man$, respectively, \ie
	\begin{equation}\label{def:EH}
	\EH^{+}(\scrI) \equiv \partial J^{-}(\scrI)~, \qquad \EH^{-}(\scrI) \equiv \partial J^{+}(\scrI)~.
	\end{equation}
From the general result about the boundaries of past and future sets in~\eq{def:dl-Jpm}, we may then conclude that both $\EH^{+}(\scrI)$ and $\EH^{-}(\scrI)$ are leaves themselves. If $\EH^{-}(\scrI)$ is empty but not $\EH^{+}(\scrI)$, then we only have a black hole region but no white hole. Likewise, an empty $\EH^{+}(\scrI)$ but non-empty $\EH^{-}(\scrI)$ indicates the presence of a white hole region in the spacetime but no black hole.

The results of equations~\eq{J(I)=J(<M>)} and \eq{<M>=J(I)UJ(I)}, taken alongside the definitions in~\eq{def:EH}, furthermore imply that the only boundaries of $\outside{\man}$ that are in $\man$ are also the event horizons, {\em i.e.}
	\begin{equation}\label{bM}
	\partial\outside{\man} = \EH(\scrI) \equiv \EH^{+}(\scrI) \cup \EH^{-}(\scrI)\,.
	\end{equation}
Being boundaries, $\EH^{\pm}(\scrI)$ cannot be contained in $\outside{\man}$ since the latter is an open set by definition; indeed from~\eq{def:dl-Jpm} $\EH^{+}(\scrI)$ must be in the future of $\outside{\man}$ while $\EH^{-}(\scrI)$ must be in the past of $\outside{\man}$. However, since every $\pinf_p \in \scrI$ is simultaneous with some $p \in \Sigma_p \subset \outside{\man}$, causality demands
	\begin{equation}\label{EH-cap-I=0}
	\EH^{\pm}(\scrI) \cap \scrI = \emptyset~.
	\end{equation}

From the definitions in~\eq{def:BH} and~\eq{def:WH}, we may further deduce that the black and white hole regions can also be given as
	\begin{equation}\label{BH-WH}
	\BH(\scrI) = \bar{J^{+}(\EH^{+}(\scrI))}~, \qquad \WH(\scrI) = \bar{J^{-}(\EH^{-}(\scrI))}~.
	\end{equation}
Both these regions are closed in $\man$, just as in general relativity, since they contain the respective horizons. Clearly, by the causality conditions~\eq{disjoint:JS}, no future-directed causal curve from $\BH(\scrI)$ can ever enter $\outside{\man}$, so that $\outside{\man}$ lies outside the future domain of influence of the black hole region, although events in $\outside{\man}$ can causally influence those inside the black hole. Thus, $\EH^{+}(\scrI)$ acts as a `one way causal membrane' separating $\outside{\man}$ from $\BH(\scrI)$. In a similar way, no future-directed causal curve from $\outside{\man}$ can ever enter $\WH(\scrI)$, \ie the white hole region lies beyond the future domain of influence of $\outside{\man}$, although events in $\WH(\scrI)$ can causally influence those in $\outside{\man}$. Again, this turns $\EH^{-}(\scrI)$ into a `one way causal membrane' between $\outside{\man}$ and $\WH(\scrI)$, but in a sense opposite to $\EH^{+}(\scrI)$. Finally, every pair of non-simultaneous events in $\outside{\man}$ are causally connected to each other, such that the future event in the pair can always be influenced by the past event with `signals' sent via causal curves.

Thus far, we have been able to set up a fairly complete framework to address various issues concerning the causal structure of spacetimes with a preferred foliation, where local Lorentz invariance is violated and arbitrarily fast propagation of signals is inherent. Within this framework, we have generalized the notion of a black hole and of a white hole and their corresponding event horizons. An event horizon as defined above traps `arbitrarily fast' propagations, and therefore provides an appropriate generalization and formalization of the notion of a~\emph{universal horizon} which has been introduced in \cite{Blas:2011ni,Barausse:2011pu}. In the rest of this paper, we will refer to event horizons as universal horizons, in order to avoid any possible confusion with Killing and/or null horizons which play the role of event horizons in general relativity.
%
%
\section{Causal development and the Cauchy horizon}\label{sec:def:DoD}
The notions of causal past and future introduced above allows us to determine whether or not a given region of spacetime can causally affect or influence another region. To elaborate, we may define the~\emph{future domain of influence} of a set of events $\events$~as the set of all events in $\man$ which can be causally influenced by events in $\events$. Since `causal influence' can only propagate along causal curves, the future domain of influence of $\events$ is identical with its causal future. We now turn to a related but slightly more involved question, namely, given a set of events $\events$ how much of the spacetime and/or the evolution of fields living on the spacetime could we predict or determine based on the information associated with $\events$? The~\emph{future domain of dependence} of $\events$ is a subset of the corresponding domain of influence, consisting of events which are causally influenced~\emph{only} by events in $\events$. As such, the future domain of dependence of $\events$ consists of events which are~\emph{predicted}~\emph{with and only with} the information associated with $\events$. There are also analogous concepts when the words `future' and `prediction' above are replaced with `past' and `retrodiction', respectively. Needless to say, all such notions are very natural adaptations of the corresponding ones in general relativity.

The actual process of a prediction requires dynamical equations that can `evolve initial information'. The details of such prediction through solving equations of motion lie beyond the goals of the current work. As we have mentioned from the onset, we are willing to assume well-posedness of the initial value problem. However, there is one characteristic of the equations that is crucial for the appropriate definition of the domain of dependence which requires special attention, and this is whether the corresponding problem consists only of (elliptic) constraint equations and hyperbolic evolution equations, or it involves also additional elliptic equations that do not constitute constraints. 

Recall that from the onset, our assumption has been that there is a preferred foliation
that determines the causal structure of the spacetime. Physically, this means that events that lie on the same slice of the foliation are simultaneous and share the same future~\eq{LVcausality}. Mathematically, it means that the initial value problem is well-posed only in this foliation (by assumption). There are then at least two distinct options:
	\begin{enumerate}[(a)]
	\item certain constraints relate initial data for the dynamical variables on a given slice of the foliation and the hyperbolic equations determine the evolution of the dynamical variables, or
	\item in addition to the above, the theory contains variables that are not determined by the dynamical equation, but they should instead be determined on every slice by means of solving an elliptic equation~\emph{on} the slice.
	\end{enumerate}
In the second case, one will need asymptotic or boundary conditions (potentially periodic in certain topologies), in order to have a well-formulated problem.

For example, the infrared limit of projectable Ho\v{r}ava gravity~\cite{Horava:2009uw,Sotiriou:2009bx,Sotiriou:2009gy} falls under case (a). On the other hand, as has been mentioned in the introduction and discussed in \cite{Blas:2011ni}, the most general (non-projectable) Ho\v{r}ava theory falls under case (b), see \cite{Bhattacharyya:2015uxt} for a detailed discussion. Here and for what concerns the notion of development, we choose to consider only case (b). Our main motivation for doing so is the following: it has been conjectured in \cite{Blas:2011ni}, based on intuition from perturbative decoupling calculations around spherically symmetric solutions, that universal horizons are Cauchy horizons in non-projectable Ho\v{r}ava gravity. Having formulated a broad definition of a universal horizon in the previous section that does not rely on symmetries, we can now check this conjecture in its full generality.

In the previous section we have already defined the suitable notion of a boundary at infinity, so the next step would be to define an appropriate notion of a development that will depend on both initial and boundary data. However, a boundary at infinity is clearly not the only boundary one can have. So, before going any further we should discuss other types of boundaries that are relevant here. To motivate the problem, we may perhaps start with an example of an `artificial boundary': given a general foliated spacetime one may wish to consider the evolution of some fields in some~\emph{restricted region} of the spacetime. For example, such a restricted region may be a cube or a shell of a fixed size in a trivially foliated flat spacetime, and the boundary of the region may impose \eg reflecting (`mirrors') or Dirichlet (`clamps') boundary conditions on the fields that live inside.

Now, suppose we have an appropriate set of coupled hyperbolic and elliptic partial differential equations of motion for some fields (as mentioned at the beginning of this section). Let $\simset_p \subset \Sigma_p$ be a proper simset of some leaf $\Sigma_p$, with a boundary $\partial\simset_p$ such that $\bar{\simset_p} \equiv \simset_p \cup \partial\simset_p$ is the closure of the simset in the leaf, and suppose that we are provided with some appropriate initial conditions for these fields on $\simset_p$ as well as suitable boundary conditions on $\partial\simset_p$. As already emphasized though, the question of future/past causal development of such initial and boundary data associated with $\simset_p$ and $\partial\simset_p$ respectively also requires a specification of some appropriate boundary conditions in the past/future of $\simset_p$ as well, and the boundary in question should exist as a suitable chronological extension of $\partial\simset_p$. Consider, thus, a set of causal vectors $\boldsymbol{b}^a$ defined everywhere on $\partial\simset_p$ satisfying $(u\cdot\boldsymbol{b}) < 0$. The integral curves of $\boldsymbol{b}^a$ will then define the causal boundary $\bnd$ as a `tube' with `base' on $\simset_p$ (extending on both sides of $\Sigma_p$). One should then be able to specify suitable initial conditions on $\simset_p$ and boundary conditions on $\bnd$ as the first step of setting up a well-formulated initial-boundary value problem.

On the other hand, it is usually more interesting to consider the evolution of fields (including that of the metric) in the whole of the spacetime, starting from the data associated with some initial leaf $\Sigma_p$. Indeed, this is the scenario appropriate for studying the dynamical `building up' of the spacetime itself. As already emphasized several times, this is not possible in the present case just with the initial data associated with the initial leaf $\Sigma_p$. Rather, one needs to consider appropriate boundaries to be able to associate boundary data for the elliptic equations. Furthermore, when complete leaves act as initial data surfaces, such boundaries can only be suitable~\emph{conformal boundaries} of the spacetime, \ie those which mark the true `ends' of the spacetime. This fact makes the issue of causal development in the present scenario remarkably different from that in general relativity.

We already considered the issue of conformal extension of the spacetime $\man$ in Section~\ref{sec:def:EH} while introducing the notion of the conformal boundary at infinity $\scrI$. However, it may be the case that a consistent conformal extension of the full spacetime not only admits the boundary at infinity $\scrI$ but also additional conformal boundaries distinct from $\scrI$. In what follows, $\bnd_C$ will denote the collection of all such possible conformal boundaries disjoint from $\scrI$ which mark the `remaining ends' of the spacetime, such that
	\begin{equation}\label{def:dM}
	\partial\man = \scrI \cup \bnd_C~, \qquad \scrI \cap \bnd_C = \emptyset~,
	\end{equation}
denotes the complete boundary of the spacetime. Of course, $\bnd_C$ will be empty if $\scrI$ is the only conformal boundary one may need to consider (\eg in a trivially foliated flat spacetime). We may stress that by definition, $\bnd_C$ (and hence also the full boundary $\partial\man$) is not part of the spacetime, just like $\scrI$.

From the preceding discussion, it is apparent that we have a rather large set of possibilities with defining boundaries, and this intimately depends on the problem at hand. Correspondingly, a broad definition of boundaries is necessary which can encompass all the interesting and physically consistent cases, such that we can propose a unified definition of the domain of dependence. For the sake of a coherent presentation, however, it seems prudent to postpone a formalization of the notion of a boundary until we are ready to formally define the domain of dependence as well. Until then, it will suffice to retain an intuitive notion of a boundary as~\emph{a part of the spacetime or its conformal extension (or a combination of both if and when appropriate) which manages to close every leaf that it encompasses}.

Apart from issues related to boundaries as addressed above, we also need to consider the evolution of `initial data' -- associated with a set of events that lie on the same leaf -- via the hyperbolic equations. As already discussed, this is the only meaningful choice in our setting since only such events are simultaneous in a preferred sense, and a set of such events is the sensible analogue of the notion of an achronal set of general relativity. Thus, given an arbitrary simset $\simset_p \subseteq \Sigma_p$, one might be tempted to follow the lore of general relativity and define the future domain of dependence of $\simset_p$ (at least in regards to the hyperbolic sector of the evolution) as the set of all events $q \in J^{+}(\Sigma_p)$ such that all past directed causal curves emanating from $q$ intersects $\simset_p$ when sufficiently extended. However, this na\"ive adaptation of the definition of general relativity is rather incomplete for a number of reasons.

First of all, it clearly disregards the existence and influence of any boundary condition, whose necessity has already been emphasized. Consider, for example, the case where there is a boundary at infinity. Signals from the boundary can actually influence any event of a slice that reaches the boundary, due to the existence of an elliptic mode. Secondly, the above proposal suffers from a more technical drawback. Consider an event $q \in J^{+}(\Sigma_p)$. There will always be past-directed causal curves that pass from $q$ but fail to reach $\simset_p$ simply because they cannot be extended to do so. This problem already exists in general relativity and is dealt with by introducing the notion of curve extendibility. To elaborate, given a past directed causal curve $\lambda(\uptau)$, we may call an event $q$ to be a~\emph{past endpoint} of $\lambda(\uptau)$ if for every neighbourhood $\nhood_q$ of $q$, there exists a $\uptau_0$ such that $\lambda(\uptau) \in \nhood_q$ for all $\uptau > \uptau_0$. Furthermore, a past directed causal curve with no past endpoint may be called past inextendible. In general relativity the definition of the future domain of dependence is appropriately phrased in terms of past inextendible curves and this solves the problem. However, here the problem is more acute as one can have curves that are inextendible in the sense defined above but still try to asymptote to a particular leaf without ever intersecting it.

The following examples of curves constructed in a trivially foliated flat spacetime should illustrate this point effectively.\footnote{We are indebted to Jorma Louko for clarifying these important issues to us and coming up with the following examples.} Consider Minkowski spacetime and let Minkowski time $t$ label the leaves of the `preferred' (but otherwise trivial) foliation. Then the following curves
	\begin{equation}\label{endleaf:examples}
	\{t,~\sin(a\pi/t)\}~, \quad \{t,~e^{(a/t - 1)} - 1\}~, \quad \{t,~[e^{(a/t - 1)} - 1]\sin(a\pi/t)\}~,
	\end{equation}
are causal everywhere, yet asymptote to the leaf defined by $t = 0$. Here $a$ is a constant which sets the unit, and we have suppressed the `$y$ and $z$ coordinates' for brevity; the latter can be set to zero for the purpose of this illustration. These examples hopefully clarify that, in the present scenario, certain causal curves cannot be arbitrarily extended because they get `trapped' near a leaf instead of an event.

It is therefore clear that we need to go beyond the notion of an endpoint of a curve in order to provide a suitable definition of curve inextendibility for our purposes. To that end, we will introduce the notions of~\emph{past and future endleaves} as follows: a leaf $\Sigma_p$ will be called a~\emph{past endleaf} of a past directed causal curve $\lambda(\uptau)$ if
	\begin{equation*}
	\lambda(\uptau) \cap (\nhood_p \cap J^{+}(\Sigma_p)) \neq \emptyset~, \qquad \lambda(\uptau) \cap (\nhood_p \cap J^{-}(\Sigma_p)) = \emptyset~,
	\end{equation*}
for all $\uptau > \uptau_0$ and for every open neighbourhood $\nhood_p \supset \Sigma_p$. Similarly, a leaf $\Sigma_p$ is called a~\emph{future endleaf} of a future directed causal curve $\lambda(\uptau)$ if
	\begin{equation*}
	\lambda(\uptau) \cap (\nhood_p \cap J^{-}(\Sigma_p)) \neq \emptyset~, \qquad \lambda(\uptau) \cap (\nhood_p \cap J^{+}(\Sigma_p)) = \emptyset~,
	\end{equation*}
for all $\uptau > \uptau_0$ and for every open neighbourhood $\nhood_p \supset \Sigma_p$. Note that according to the formal definition presented above a causal curve with a conventional endpoint (in the sense of general relativity) also admits an endleaf; in this case, the endleaf is the leaf which contains the endpoint. However, the converse is not always true. It is worth stressing that curves with endleaves, such as those illustrated in the examples of~\eq{endleaf:examples}), cannot be causal from the perspective of general relativity.

With this in mind, we can now appropriately generalize the notion of past/future inextendible curves from general relativity. In particular, a causal curve $\lambda(\uptau) \in \man$ with $\uptau \in [0, \infty)$ and $\lambda(0) \in \man$ will be called~\emph{(future) past inextendible} if it has no (future) past endleaves.

We are now in a position to propose a precise and consistent definition of the domain of dependence that is modelled after the corresponding definition in general relativity, but takes into account the very different causal structure of spacetimes with a preferred foliation as well as the importance of boundary conditions.
\subsection{Domains of dependence and Cauchy horizons}
In what follows, given a leaf $\Sigma_p$, its conformal extension containing~\emph{all} possible conformal boundary events will be denoted by $\tilde{\Sigma}_p$. Suppose that we are given a simset $\simset_p \subseteq \Sigma_p$, such that its boundary $\partial\simset_p$ is either in $\Sigma_p$ or (in part or whole) in $\tilde{\Sigma}_p$. We will denote, by $\bar{\simset_p} \equiv \simset_p \cup \partial\simset_p$, the closure of $\simset_p$ in $\Sigma_p$ or $\tilde{\Sigma}_p$ (as appropriate). Furthermore, suppose we are given a subset $\bnd$ of the spacetime or its conformal extension such that $\partial\simset_p \subset \bnd$. Then:
	\begin{definition}[Future and past domains of dependence]\label{def:DoD}
An event $q \in J^{+}(\Sigma_p)$ is in the future domain of dependence $D^{+}(\simset_p, \bnd)$ of $\simset_p$ and $\bnd$, if
	\begin{enumerate}[(i)]
	\item\label{def:DoD:1} either a simset of the leaf $\Sigma_q$ is closed by $\Sigma_q \cap \bnd$ (or by $\tilde{\Sigma}_q \cap \bnd$ if appropriate), or the leaf $\Sigma_q$ itself if closed by $\tilde{\Sigma}_q \cap \bnd$, and the same condition holds true for every leaf in $J^{-}(\Sigma_q) \cap J^{+}(\Sigma_p)$, and
	\item\label{def:DoD:2} every past inextendible causal curve through $q$ either intersects $\simset_p$ or reaches $\bnd$.
	\end{enumerate}
Similarly, the past domain of dependence $D^{-}(\simset_p, \bnd)$ of $\simset_p$ and $\bnd$ is defined to include all events $q \in J^{-}(\Sigma_p)$ such that condition~\eq{def:DoD:1} holds for $\Sigma_q$ as well as all leaves in $J^{+}(\Sigma_q) \cap J^{-}(\Sigma_p)$, and the words `past inextendible' in condition~\eq{def:DoD:2} is replaced with `future inextendible'.
	\end{definition}
We will regard the set $\bnd$ to form part of the~\emph{boundary} of the domains of dependence $D^{\pm}(\simset_p, \bnd)$.

It is worth emphasizing that Definition~\ref{def:DoD} not only defines the domain of dependence of a simset but also formalizes the notion of a boundary with respect to which boundary conditions are set, without which the concept of causal development is vacuous in the present context. In particular, condition~\eq{def:DoD:1} ensures~\emph{continuity of the boundary},~\ie the condition that a boundary cannot have any `breaks/holes' and/or other kinds of discontinuities, as one would reasonably expect.\footnote{A simply connected boundary may not admit holes, but being codimension one in $D = 4$, one may still have `missing events'. A physically acceptable boundary should be also free of such pathologies.} Indeed, for any such discontinuities, it is not clear how boundary conditions can be suitably prescribed for all time, and in turn how one may consistently talk about evolution. Condition~\eq{def:DoD:1} is thus intimately related to the presence of elliptic equations whose solutions depend crucially on boundary data. On the other hand, condition~\eq{def:DoD:2} is modelled after the corresponding definition of general relativity, but also incorporates any possible influence from the boundary in such an evolution, and is essential for a consistent development of all kinds of modes, both elliptic and hyperbolic. In the context of future development, if any leaf $\Sigma_{q'} \subset J^{-}(\Sigma_q) \cap J^{+}(\Sigma_p)$ violated condition~\eq{def:DoD:1} -- \eg the boundary $\bnd$ had `holes/breaks' such that no simset of $\Sigma_{q'}$ could be closed by $\Sigma_{q'} \cap \bnd$ -- one could construct a past inextendible causal curve from $q$ through such a `hole/break' all the way to $\Sigma_p \setminus \bar{\simset}_p$, thereby violating condition~\eq{def:DoD:2}. Note, in this regard, that condition~\eq{def:DoD:2} holds for all events in the simset of $\Sigma_{q'} \subset J^{-}(\Sigma_q) \cap J^{+}(\Sigma_p)$ which is closed by $\Sigma_{q'} \cap \bnd$, since otherwise the same condition will fail to hold for the event $q$ itself. In other words, as consistency demands, Definition~\ref{def:DoD} in all its entirety ensures that if an event is in the future development of some simset, then so should be certain events from its past. Similar observations apply to past developments as well.

As an immediate consequence of the above definitions, when the development of an entire leaf $\Sigma_p$ is under consideration, with $\partial\man$ (as defined in~\eq{def:dM}) or parts of it serving as the appropriate boundary, all proper simsets of $\Sigma_p$ have empty development since they violate all the criteria in Definition~\ref{def:DoD}. Likewise, for developments with respect to proper simsets $\simset_p$ and suitable `artificial' boundaries, even `smaller' simsets that are proper subsets of $\simset_p$ have empty development.

It should also be noted that Definition~\ref{def:DoD} allows for a fairly unified notation, where a simset $\simset_p$ may denote~\emph{any} simset of a leaf $\Sigma_p$, including the whole leaf $\Sigma_p$ itself. In turn, $\bnd$ (perhaps defined as a suitable submanifold of $\man$) could be a causal boundary defined as an appropriate `causal extensions' of the boundary $\partial\simset_p \subset \Sigma_p$, or it could denote some suitable subset of $\partial\man$, or it may even be some consistent combination of both kinds of boundaries. In all these cases however, $\bnd$ must satisfy condition~\eq{def:DoD:1} of Definition~\ref{def:DoD}. The developments of a whole leaf $\Sigma_p$ with respect to some suitable boundary $\bnd \subseteq \partial\man$ will sometimes be denoted as $D^{\pm}(\Sigma_p, \bnd)$ for concreteness. Thanks to our unified notation, however, most of our claims and conclusions to follow will apply for all situations.

As already noted in the definitions
	\begin{equation}\label{D(S):subset:J(S)}
	D^{+}(\simset_p, \bnd) \subseteq J^{+}(\Sigma_p)~, \qquad D^{-}(\simset_p, \bnd) \subseteq J^{-}(\Sigma_p)~.
	\end{equation}
Furthermore by the causality relations in~\eq{disjoint:JS} and the definitions in~\eq{def:Jpm-closure} of the closure of the sets $J^{\pm}(\Sigma_p)$, we have
	\begin{equation}\label{D(S):notin:J(S)bar}
	D^{+}(\simset_p, \bnd) \cap \bar{J^{-}(\Sigma_p)} = \emptyset~, \qquad D^{-}(\simset_p, \bnd) \cap \bar{J^{+}(\Sigma_p)} = \emptyset~.
	\end{equation}
The above relationships between the developments and causal past/future of $\Sigma_p$ will be useful below.

Guided by intuitions from general relativity, one would now expect that an event in $J^{+}(\Sigma_p)$ will cease to be in the domain $D^{+}(\simset_p, \bnd)$ if a future Cauchy horizon forms, \ie a future Cauchy horizon is expected to mark the `end' a domain of dependence beyond which all `prediction stops'. The reasons of why such a Cauchy horizon may form could be varied and are not among our concerns here. Rather, we are simply interested in the most general properties of such horizons. Similar comments lead to the notion of a~\emph{past Cauchy horizon}.

Towards a formal definition of Cauchy horizons in the present context, let us consider the closures of the domains in $\man$, to be denoted by $\bar{D^{+}(\simset_p, \bnd)}$ and $\bar{D^{-}(\simset_p, \bnd)}$, respectively. Note, by~\eq{D(S):subset:J(S)}, that
	\begin{equation}\label{Dbar(S):subset:Jbar(S)}
	\bar{D^{+}(\simset_p, \bnd)} \subseteq \bar{J^{+}(\Sigma_p)}~, \qquad \bar{D^{-}(\simset_p, \bnd)} \subseteq \bar{J^{-}(\Sigma_p)}~.
	\end{equation}
Naturally, the closures of the domains of dependence (considered in $\man$) will trivially contain the simset $\simset_p$ as well as the relevant part of the boundaries $\bnd$. Therefore, as an indicator of the presence of any non-trivial boundary events in the spacetime which would mark a true `end' of the development as noted above, let us introduce the following notation for the `boundary events' of the domains that are not part of the `trivial boundaries'
	\begin{equation}\label{def:Hpm}
	\eqalign{
	& H^{+}(\simset_p, \bnd) \equiv \partial D^{+}(\simset_p, \bnd) \setminus (\simset_p \cup \bnd')~, \cr
	& H^{-}(\simset_p, \bnd) \equiv \partial D^{-}(\simset_p, \bnd) \setminus (\simset_p \cup \bnd')~,
	}
	\end{equation}
where $\bnd' \subseteq \bnd$ denotes the part of $\bnd$ that is not in $\partial\man$. From~\eq{Dbar(S):subset:Jbar(S)}, we then have $H^{+}(\simset_p, \bnd) \subseteq J^{+}(\Sigma_p)$ and $H^{-}(\simset_p, \bnd) \subseteq J^{-}(\Sigma_p)$ as one would expect intuitively as well. We will henceforth define $H^{+}(\simset_p, \bnd)$ as the~\emph{future Cauchy horizon} of the future domain $D^{+}(\simset_p, \bnd)$, and likewise, define $H^{-}(\simset_p, \bnd)$ as the~\emph{past Cauchy horizon} of the past domain $D^{-}(\simset_p, \bnd)$. When the developments of entire leaves are under consideration, the appropriate definitions of the Cauchy horizons $H^{\pm}(\Sigma_p, \bnd)$ of the domains $D^{\pm}(\Sigma_p, \bnd)$ should be
	\begin{equation}\label{def:HpmI}
	\eqalign{
	& H^{+}(\Sigma_p, \bnd) \equiv \partial D^{+}(\Sigma_p, \bnd) \setminus \Sigma_p~, \cr
	& H^{-}(\Sigma_p, \bnd) \equiv \partial D^{-}(\Sigma_p, \bnd) \setminus \Sigma_p~,
	}
	\end{equation}
instead of~\eq{def:Hpm}. As before, we have $H^{+}(\Sigma_p, \bnd) \subseteq J^{+}(\Sigma_p)$ and $H^{-}(\Sigma_p, \bnd) \subseteq J^{-}(\Sigma_p)$, following from the analogues of~\eq{D(S):subset:J(S)} and~\eq{Dbar(S):subset:Jbar(S)}.

We will now show that such boundary events form part or whole of a leaf. However, before doing so, a few remarks are necessary to clarify our presentation. In general relativity it is natural to begin with the conventional definition of the Cauchy horizon (see \eg Equation 8.3.3 of \cite{Wald:1984rg}) before establishing that it is a null hypersurface (see \eg Theorem 8.3.5 of \cite{Wald:1984rg}) and a boundary of the domain of dependence (see \eg Proposition 8.3.6 of \cite{Wald:1984rg}). Here instead, it is more natural to start with the definitions of Cauchy horizons as `non-trivial' boundary events of the corresponding domains of dependence and then proceed to establish that they form a leaf. The domains of influence and dependence are manifestly different concepts in general relativity, while as to be proven below, the domains of complete leaves are identical with their past/future in the absence of any Cauchy horizons. Blindly following the standard presentation of general relativity (\eg along the lines of \cite{Wald:1984rg}) would seemingly prevent us from emphasizing the central role played by boundaries and the crucial differences stemming from the causal structure in the present context. Nevertheless, we will show below that $H^{\pm}(\Sigma_p, \bnd)$ as defined above in~\eq{def:HpmI} also satisfy the conventional definition of Cauchy horizons.

With these clarifications out of the way, we may now formulate the following theorem: 
	\begin{theorem}\label{thm:hor}
$H^{+}(\simset_p, \bnd)$, if non-empty, is a simset with empty future development $D^{+}(H^{+}(\simset_p, \bnd), \bnd)$. Likewise, $H^{-}(\simset_p, \bnd)$, if non-empty, is a simset with empty past development $D^{-}(H^{-}(\simset_p, \bnd), \bnd)$.
	\end{theorem}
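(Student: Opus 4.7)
The plan is to prove the theorem in two stages tied together by a common \emph{tube lemma} identifying $D^{+}(\simset_p, \bnd)$ as a stack of simsets along successive leaves; the $H^{-}$ case follows by time-reversal. First I would establish the tube lemma: for every $q \in D^{+}(\simset_p, \bnd)$, the simset $\simset_q \subseteq \Sigma_q$ containing $q$ and closed by $\Sigma_q \cap \bnd$ (whose existence comes from condition~\eq{def:DoD:1} of Definition~\ref{def:DoD}) sits entirely inside $D^{+}(\simset_p, \bnd)$. For any $q' \in \simset_q$, condition~\eq{def:DoD:1} transfers directly because $\Sigma_{q'} = \Sigma_q$ and $J^{-}(\Sigma_{q'}) = J^{-}(\Sigma_q)$ by~\eq{LVcausality}; for condition~\eq{def:DoD:2}, a past-inextendible causal curve from $q'$ descends through every leaf of $J^{-}(\Sigma_q) \cap J^{+}(\Sigma_p)$ (having no past endleaf, by inextendibility), and on each leaf it is forced either to stay inside the simset closed by $\bnd$, or to exit across its boundary and thus meet $\bnd$; if it remains inside throughout the descent, it lands on $\simset_p$. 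Thus $D^{+}(\simset_p, \bnd)$ is a tube of simsets, laterally closed by $\bnd$ and capped, if at all, only from the future.

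Given the tube lemma, $H^{+}(\simset_p, \bnd)$ must lie on a single leaf. Suppose instead there exist $q_1, q_2 \in H^{+}$ with $q_2 \in J^{+}(q_1)$ strictly. Since $q_1 \in \partial D^{+}(\simset_p, \bnd)$, some sequence $r_n \to q_1$ lies in $D^{+}$, and openness of $J^{+}$ gives $q_2 \in J^{+}(r_n)$ for large $n$. Applying the tube lemma to each $r_n$ populates the simsets of every leaf strictly between $\Sigma_{r_n}$ and $\Sigma_{q_2}$ with $D^{+}$ events, and together with the openness of these causal sets this places an open neighborhood of $q_2$ entirely inside $D^{+}$, contradicting $q_2 \in \partial D^{+}$. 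Hence $H^{+}$ lies on one leaf $\Sigma_H$ and is a simset of $\Sigma_H$ capping the tube from the future.

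With $H^{+}$ now confirmed as a simset, $D^{+}(H^{+}, \bnd)$ is well-defined. Assume for contradiction that it contains some event $r \in J^{+}(\Sigma_H)$. I would show that $r$ also satisfies Definition~\ref{def:DoD} with respect to the original initial simset $\simset_p$: condition~\eq{def:DoD:1} for leaves between $\Sigma_H$ and $\Sigma_r$ is inherited from $r \in D^{+}(H^{+}, \bnd)$, while that for leaves between $\Sigma_p$ and $\Sigma_H$ comes from $H^{+} \subseteq \bar{D^{+}(\simset_p, \bnd)}$ by a continuity argument using the tube lemma near $H^{+}$; condition~\eq{def:DoD:2} holds since every past-inextendible causal curve from $r$ either meets $\bnd$ outright, or meets $H^{+}$ at some $q \in \Sigma_H$, and its past continuation from $q$ is a past-inextendible causal curve from a point of $\Sigma_H$ which, by the tube argument applied within $D^{+}(\simset_p, \bnd)$, reaches $\simset_p$ or $\bnd$. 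Consequently $r \in D^{+}(\simset_p, \bnd) \cap J^{+}(\Sigma_H)$, which together with the tube lemma forces $\Sigma_H$ into the interior of $D^{+}(\simset_p, \bnd)$, contradicting $H^{+} \subseteq \partial D^{+}(\simset_p, \bnd)$. Hence $D^{+}(H^{+}, \bnd) = \emptyset$.

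The main obstacle throughout is the careful control of past-inextendible causal curves in view of the endleaf phenomenon of~\eq{endleaf:examples}: the ordinary notion of inextendibility does not stop a curve from asymptoting to a leaf, and the tube structure is what rules this out, because the lateral closure of each simset by $\bnd$ blocks any spacelike escape while the endleaf-refined definition of past inextendibility prevents trapping near an interior leaf. Verifying this uniformly, and especially ensuring that the past continuation of a curve through a point of $H^{+}$ remains past-inextendible and does not develop a spurious endleaf at $\Sigma_H$, is where the bulk of the technical effort is expected to go.
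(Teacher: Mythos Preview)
Your tube-lemma strategy is a reasonable alternative to the paper's direct neighbourhood analysis, but the single-leaf step contains a directional error that makes the argument as written fail. The tube structure you establish is \emph{downward}-closed: if $q \in D^{+}(\simset_p,\bnd)$ then condition~\eq{def:DoD:1} of Definition~\ref{def:DoD} already forces the simsets on $\Sigma_q$ and on every leaf in $J^{-}(\Sigma_q)\cap J^{+}(\Sigma_p)$ to lie in $D^{+}$. It tells you nothing about leaves \emph{above} $\Sigma_q$. So when you approximate the \emph{earlier} boundary point $q_1$ by $r_n \in D^{+}$ and then claim the tube lemma ``populates the simsets of every leaf strictly between $\Sigma_{r_n}$ and $\Sigma_{q_2}$,'' you are pushing in the wrong direction: those leaves lie in the future of $r_n$. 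The repair is to reverse the roles: approximate the \emph{later} boundary point $q_2$ by $s_n\in D^{+}$ and use downward closure from $s_n$ to place $\simset_{q_1}$, and with it an open tube neighbourhood of $q_1$, inside $D^{+}$, contradicting $q_1\in\partial D^{+}$.

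By contrast, the paper fixes a single $q\in H^{+}$ and works locally in an arbitrary neighbourhood $\nhood_q$, showing $\nhood_q\cap J^{-}(\Sigma_q)\subseteq D^{+}$ while $\nhood_q\cap J^{+}(\Sigma_q)\subseteq (D^{+})^{c}$, which immediately forces every other boundary event onto $\Sigma_q$. The paper also splits explicitly on whether $\Sigma_q$ satisfies condition~\eq{def:DoD:1}; in the case where it does not, $H^{+}$ is a simset whose boundary is \emph{not} contained in $\bnd$, and $D^{+}(H^{+},\bnd)=\emptyset$ follows directly from the setup preceding Definition~\ref{def:DoD}. Your proposal does not separate this case, and your continuity argument for condition~\eq{def:DoD:1} on leaves between $\Sigma_p$ and $\Sigma_H$ tacitly assumes $\Sigma_H$ itself satisfies~\eq{def:DoD:1}. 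That is a second gap you would need to close, either by handling it separately as the paper does or by arguing that your (corrected) tube argument absorbs it uniformly.
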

\begin{figure}[t!]
	\centering
	\subfloat[][]{
	\begin{tikzpicture}[scale=0.48]
\fill [inner color=gray!15, outer color=gray!45] (0,1.5) ellipse (2.5 and 3.5);
\draw [dashed, line width=1.2pt] (0,2) circle (1);
\fill [white] (-6,0) to [out=-20, in=185, relative] (-1,0) -- (-1,0) to [out=5, in=175, relative] (6,0) -- (6,-3) -- (-6,-3) -- (-6,0);
\draw [line width=1.2pt] (0,1.5) ellipse (2.5 and 3.5);
\draw [line width=1.2pt] (-6,2) to [out=30, in=175, relative] (0,2);
\draw [line width=1.2pt] (0,2) to [out=-5, in=190, relative] (6,2);
\draw [fill=red] (0,2) circle (3.5pt);
\node [anchor=north east] at (0.2,2) {\footnotesize ${q}$};
\draw [line width=1.2pt] (-6,0) to [out=-20, in=185, relative] (-1,0);
\draw [line width=1.2pt] (-1,0) to [out=5, in=175, relative] (6,0);
\draw [fill=blue] (-1,0) circle (3.5pt);
\node [anchor=north west] at (-1.3,0.2) {\footnotesize ${q'}$};
\draw [line width=1.2pt] (-6,-3) to [out=30, in=175, relative] (0,-3);
\draw [line width=1.2pt] (0,-3) to [out=-5, in=190, relative] (6,-3);
\draw [fill=black] (0,-3) circle (3.5pt);
\node [anchor=north east] at (0,-3) {\footnotesize ${p}$};
\node at (-6.5,2) {\footnotesize $\Sigma_{q}$};
\node at (-6.5,0) {\footnotesize $\Sigma_{q'}$};
\node at (-6.5,-3) {\footnotesize $\Sigma_{p}$};
\node at (2.4,4.1) {\footnotesize $\nhood_{q}$};
\node at (1.3,2.9) {\footnotesize $\nhood'_{q}$};
\draw [line width=1.2pt] (-5,-4) to [out=20, in=185, relative] (-4,4.5);
\draw [line width=1.2pt] (5,-4) to [out=10, in=175, relative] (5,4.5);
\node at (-4,5) {\footnotesize $\bnd$};
\node at (5,5) {\footnotesize $\bnd$};
\node at (-3.3,1.9) {\footnotesize $\simset_q$};
\node at (-2.5,-3.3) {\footnotesize $\simset_p$};
	\label{fig:thm:1}
	\end{tikzpicture}
	}
	\enspace
	\subfloat[][]{
	\begin{tikzpicture}[scale=0.48]
\fill [inner color=gray!15, outer color=gray!45] (0,1.5) ellipse (2.5 and 3.5);
\fill [white] (-6,4) to [out=-20, in=185, relative] (-1,4) -- (-1,4) to [out=5, in=175, relative] (6,4) -- (6,5) -- (-6,5) -- (-6,4);
\draw [dashed, line width=1.2pt] (0,2) circle (1);
\draw [dashed, line width=1.2pt] (0,2) circle (1);
\draw [line width=1.2pt] (0,1.5) ellipse (2.5 and 3.5);
\draw [line width=1.2pt] (-6,2) to [out=30, in=175, relative] (0,2);
\draw [line width=1.2pt] (0,2) to [out=-5, in=190, relative] (6,2);
\draw [fill=red] (0,2) circle (3.5pt);
\node [anchor=north east] at (0.2,2) {\footnotesize ${q}$};
\draw [line width=1.2pt] (-6,4) to [out=-20, in=185, relative] (-1,4);
\draw [line width=1.2pt] (-1,4) to [out=5, in=175, relative] (6,4);
\draw [fill=blue] (-1,4) circle (3.5pt);
\node [anchor=north east] at (-0.7,4) {\footnotesize ${q'}$};
\draw [line width=1.2pt] (-6,-3) to [out=30, in=175, relative] (0,-3);
\draw [line width=1.2pt] (0,-3) to [out=-5, in=190, relative] (6,-3);
\draw [fill=black] (0,-3) circle (3.5pt);
\node [anchor=north east] at (0,-3) {\footnotesize ${p}$};
\node at (-6.5,2) {\footnotesize $\Sigma_{q}$};
\node at (-6.5,4) {\footnotesize $\Sigma_{q'}$};
\node at (-6.5,-3) {\footnotesize $\Sigma_{p}$};
\node at (2.3,-1.5) {\footnotesize $\nhood_{q}$};
\node at (1.2,0.8) {\footnotesize $\nhood'_{q}$};
\draw [line width=1.2pt] (-5,-4) to [out=20, in=185, relative] (-4,4.7);
\draw [line width=1.2pt] (5,-4) to [out=10, in=175, relative] (5,4.7);
\node at (-4,5.2) {\footnotesize $\bnd$};
\node at (5,5.2) {\footnotesize $\bnd$};
\node at (-3.3,1.9) {\footnotesize $\simset_q$};
\node at (-2.5,-3.3) {\footnotesize $\simset_p$};
	\label{fig:thm:2}
	\end{tikzpicture}
	}
	\caption{Figures for the proof of Theorem~\ref{thm:hor} case 1.}
	\end{figure}
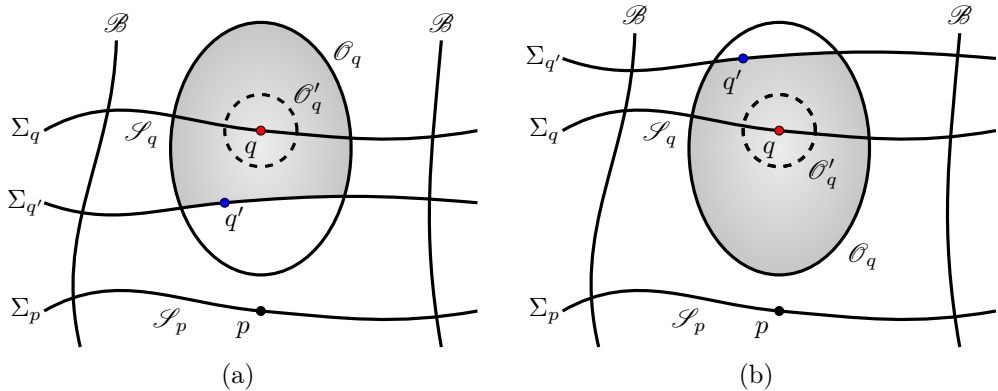
\begin{proof}
Consider the case for the future domain first. $H^{+}(\simset_p, \bnd)$ is non-empty by assumption, so there is at least one event $q \in H^{+}(\simset_p, \bnd)$. Furthermore, $H^{+}(\simset_p, \bnd)$ is part of a boundary by definition. Therefore, every open neighbourhood $\nhood_q$ of $q \in H^{+}(\simset_p, \bnd)$ must contain events that are in the development $D^{+}(\simset_p, \bnd)$ as well as events that are not in the development $D^{+}(\simset_p, \bnd)$; in other words, the neighbourhood $\nhood_q$ must satisfy $\nhood_q \cap D^{+}(\simset_p, \bnd) \neq \emptyset$ as well as $\nhood_q \cap D^{+}(\simset_p, \bnd)^{c} \neq \emptyset$.

The rest of the proof calls for separate analyses of the following two cases:

\vspace{5pt}

\noindent\textbf{Case 1:}~The leaf $\Sigma_q$ satisfies condition~\eq{def:DoD:1} of Definition~\ref{def:DoD}. We will denote the simset that defines the interior of the closure by $\simset_q$ (see Figure~\ref{fig:thm:1}).

Now, suppose some leaf $\Sigma_{q'} \subset J^{-}(\Sigma_q) \cap J^{+}(\Sigma_p)$ failed to satisfy condition~\eq{def:DoD:1} of Definition~\ref{def:DoD}. Then events in $\Sigma_{q'} \cup J^{+}(\Sigma_{q'})$ cannot be in the development $D^{+}(\simset_p, \bnd)$ by definition, allowing us to construct an open neighbourhood of $q$ in $J^{-}(\Sigma_q) \cap J^{+}(\Sigma_{q'})$ not containing any event in the said development either. But this would contradict the fact that $q$ is a boundary event. Therefore, every leaf in $J^{-}(\Sigma_q) \cap J^{+}(\Sigma_p)$ must satisfy condition~\eq{def:DoD:1} of Definition~\ref{def:DoD}. Note that leaves in $J^{+}(\Sigma_q)$ may or may not satisfy this condition, without any loss in generality.

Now, pick an open neighbourhood $\nhood_q \subset J^{+}(\Sigma_p)$ of $q$ such that $\nhood_q \cap \Sigma_q$ is contained in $\simset_q$ (or equal to the latter), and choose an event $q' \in \nhood_q \cap J^{-}(\Sigma_q)$ (\eg the blue point in Figure~\ref{fig:thm:1}). If $q'$ is not in the development $D^{+}(\simset_p, \bnd)$, then neither is any event in $\nhood_q \cap J^{+}(\Sigma_{q'})$ (the shaded region in Figure~\ref{fig:thm:1}), since at least one past inextendible causal curve from each event in $\nhood_q \cap J^{+}(\Sigma_{q'})$ must pass through $q'$ and therefore cannot be extended to $\simset_p$ or $\bnd$, thereby violating condition~\eq{def:DoD:2} of Definition~\ref{def:DoD}. But then one will always be able to construct a `smaller' neighbourhood $\nhood'_q \subset \nhood_q \cap J^{+}(\Sigma_{q'})$ (\eg the region inside the dotted circle in Figure~\ref{fig:thm:1}) such that $\nhood'_q$ is also not contained in the development $D^{+}(\simset_p, \bnd)$, \ie $\nhood'_q \cap D^{+}(\simset_p, \bnd) = \emptyset$. This is a contradiction of the starting assumption that $q$ is a boundary event. Therefore, $\nhood_q \cap J^{-}(\Sigma_q)$ must consist of events only belonging to the development $D^{+}(\simset_p, \bnd)$, \ie $\nhood_q \cap J^{-}(\Sigma_q) \subseteq D^{+}(\simset_p, \bnd)$.

Similarly, $\nhood_q \cap J^{+}(\Sigma_q)$ cannot contain any event from the development $D^{+}(\simset_p, \bnd)$. For otherwise, if $q' \in D^{+}(\simset_p, \bnd)$ for some $q' \in J^{+}(\Sigma_q)$ (\eg the blue point in Figure~\ref{fig:thm:2}) -- which presupposes that the leaf $\Sigma_{q'}$ satisfies condition~\eq{def:DoD:1} of Definition~\ref{def:DoD} -- then so must be every event in $\nhood_q \cap J^{-}(\Sigma_{q'})$ (the shaded region in Figure~\ref{fig:thm:2}). This, in turn, should again allow us to construct a `smaller' neighbourhood $\nhood'_q \subset \nhood_q \cap J^{-}(\Sigma_{q'})$ (\eg the region inside the dotted circle in Figure~\ref{fig:thm:2}) such that $\nhood'_q$ consists of events which only belong to the development $D^{+}(\simset_p, \bnd)$, \ie $\nhood'_q \cap D^{+}(\simset_p, \bnd)^{c} = \emptyset$. This is again a contradiction of the starting assumption of $q$ being a boundary event. Therefore, $\nhood_q \cap J^{+}(\Sigma_q)$ must consist of events which do not belong to the development $D^{+}(\simset_p, \bnd)$, \ie $\nhood_q \cap J^{+}(\Sigma_q) \subseteq D^{+}(\simset_p, \bnd)^{c}$.

Next, consider an event $q''$ on the simset $\nhood_q \cap \Sigma_q$ which is different from $q$. By the results just derived, every open neighbourhood $\nhood_{q''}$ must contain events both in $D^{+}(\simset_p, \bnd)$ and its complement. Hence $q''$ is a `boundary event', \ie $q'' \in H^{+}(\simset_p, \bnd)$. But since this is true for every event in $\nhood_q \cap \Sigma_q$, the entire simset $\nhood_q \cap \Sigma_q$ must consist only of such `boundary events' so that $\nhood_q \cap \Sigma_q \subseteq H^{+}(\simset_p, \bnd)$. However, no particular assumption was made about the open neighbourhood $\nhood_q$ here, except that it is entirely to the future of $\Sigma_p$ and that $\nhood_q \cap \Sigma_q \subseteq \simset_q$. Consequently, any event not in $\simset_q$ is not a non-trivial `boundary event' of $D^{+}(\simset_p, \bnd)$ in the sense of~\eq{def:Hpm}. In other words, $H^{+}(\simset_p, \bnd)$ is composed entirely of the events on the simset $\simset_q$ and only these. This proves the first part of the proposition for $D^{+}(\simset_p, \bnd)$, namely $H^{+}(\simset_p, \bnd)$ is a simset; in fact $H^{+}(\simset_p, \bnd) = \simset_q$ by the above proof.

Now suppose that the development of $H^{+}(\simset_p, \bnd)$ is non-empty, \ie there exists some event $r \in J^{+}(H^{+}(\simset_p, \bnd))$ such that $r \in D^{+}(H^{+}(\simset_p, \bnd), \bnd)$. This necessarily requires the leaf $\Sigma_r$ to satisfy condition~\eq{def:DoD:1} of Definition~\ref{def:DoD}. Furthermore, since every past inextendible curve from $r$ must either intersect $H^{+}(\simset_p, \bnd)$ or reach $\bnd$ by Definition~\ref{def:DoD}, by suitably extending such curves in the past, we may conclude that $r \in D^{+}(\simset_p, \bnd)$ as well. This, however, is a contradiction of the fact that $H^{+}(\simset_p, \bnd)$ is a boundary of the development $D^{+}(\simset_p, \bnd)$. Thus $D^{+}(H^{+}(\simset_p, \bnd), \bnd) = \emptyset$. This completes the proof of the theorem for this case for the future domain of dependence $D^{+}(\simset_p, \bnd)$.

In a similar fashion, one may prove for this case that $H^{-}(\simset_p, \bnd)$, if non-empty, is a simset as well and that $D^{-}(H^{-}(\simset_p, \bnd), \bnd) = \emptyset$.

\vspace{5pt}

\noindent\textbf{Case 2:}~The leaf $\Sigma_q$ does~\emph{not} satisfies condition~\eq{def:DoD:1} of Definition~\ref{def:DoD}. Hence no event in $\Sigma_q \cup J^{+}(\Sigma_q)$ can be contained in the development $D^{+}(\simset_p, \bnd)$.

Now, suppose $H^{+}(\simset_p, \bnd)$ contains an event $q'$ which is also in $J^{+}(\Sigma_q)$. From our preceding conclusions $q' \notin D^{+}(\simset_p, \bnd)$, and in fact, there must then exist an open neighbourhood $\nhood_{q'} \subset J^{+}(\Sigma_q)$ which does not contain any event in the development $D^{+}(\simset_p, \bnd)$. Therefore $q'$ cannot be both in $H^{+}(\simset_p, \bnd)$ (\ie be a boundary event) and in $J^{+}(\Sigma_q)$.

Assume instead $H^{+}(\simset_p, \bnd)$ contains an event $q''$ which is also in $J^{-}(\Sigma_q) \cap J^{+}(\Sigma_p)$. If $\Sigma_{q''}$ did~\emph{not} satisfy condition~\eq{def:DoD:1} of Definition~\ref{def:DoD}, interchanging $q$ and $q''$ and rerunning the argument as above would imply that $q$ is not a boundary event -- a contradiction. If instead $\Sigma_{q''}$~\emph{did} satisfy condition~\eq{def:DoD:1} then the proof for Case 1 above applies to $q''$ and this implies that~\emph{every} other event in $H^{+}(\simset_p, \bnd)$ should also be in $\Sigma_{q''}$. This leads to a yet another contradiction, as $q \in H^{+}(\simset_p, \bnd)$ but not in $\Sigma_{q''}$ by assumption. In conclusion, if $\Sigma_q$ does not satisfy condition~\eq{def:DoD:1} then any other event in $H^{+}(\simset_p, \bnd)$ is also in $\Sigma_q$. Hence $H^{+}(\simset_p, \bnd)$ is a simset. Moreover, since $\Sigma_q \supseteq H^{+}(\simset_p, \bnd)$ does not satisfy condition~\eq{def:DoD:1} we have $D^{+}(H^{+}(\simset_p, \bnd), \bnd) = \emptyset$ by Definition~\ref{def:DoD}.

In a similar fashion, one may prove for this case that $H^{-}(\simset_p, \bnd)$, if non-empty, is a simset as well and that $D^{-}(H^{-}(\simset_p, \bnd), \bnd) = \emptyset$.
\end{proof}

While studies of domains of dependence of proper simsets with respect to some `artificial' boundaries may be interesting in some situations, as already mentioned before, the domains of complete leaves in a part of the spacetime admitting an asymptotic region are far more important for our purposes. In the remainder of this section, we therefore focus our attention exclusively on them.

Let $\Sigma_p \subset \outside{\man}$ be a leaf admitting a trivially foliated flat end. Its domains of dependence with respect to the leaf $\Sigma_p$ and some boundary $\bnd \subseteq \partial\man$ (recall~\eq{def:dM}) will be denoted by $D^{\pm}(\Sigma_p, \bnd)$; in particular, $\scrI \subseteq \bnd$. According to Theorem~\ref{thm:hor} above, the non-trivial boundaries $H^{\pm}(\Sigma_p, \bnd)$, if non-empty, are leaves and act as Cauchy horizons. From the proof of Theorem~\ref{thm:hor}, along with~\eq{D(S):notin:J(S)bar},\footnote{And by the trivial fact that if three sets $X$, $Y$ and $Z$ satisfy $X \subseteq Z$, $Y \subseteq Z^c$ and $X \cup Y = \man$, then $X = Z$ and $Y = Z^c$.} we then have
	\begin{equation}\label{D(S):J(S)}
	\eqalign{
	& D^{+}(\Sigma_p, \bnd) = J^{-}(H^{+}(\Sigma_p, \bnd)) \cap J^{+}(\Sigma_p)~, \cr
	& D^{+}(\Sigma_p, \bnd)^c = \bar{J^{+}(H^{+}(\Sigma_p, \bnd))} \cup \bar{J^{-}(\Sigma_p)}~, \cr
	& D^{-}(\Sigma_p, \bnd) = J^{+}(H^{-}(\Sigma_p, \bnd)) \cap J^{-}(\Sigma_p)~, \cr
	& D^{-}(\Sigma_p, \bnd)^c = \bar{J^{-}(H^{-}(\Sigma_p, \bnd))} \cup \bar{J^{+}(\Sigma_p)}~.
	}
	\end{equation}
Hence in particular, $D^{\pm}(\Sigma_p, \bnd)$ are open sets, and, therefore, analogous to the relevant results in Lemma 8.3.3 of \cite{Wald:1984rg}. It is also easy to conclude from the above that
	\begin{equation}\label{J(D(S))}
	\eqalign{
	& \qquad J^{-}(H^{+}(\Sigma_p, \bnd)) = J^{-}(D^{+}(\Sigma_p, \bnd)) = D^{+}(\Sigma_p, \bnd) \cup \bar{J^{-}(\Sigma_p)}~, \cr
	& \qquad J^{+}(H^{-}(\Sigma_p, \bnd)) = J^{+}(D^{-}(\Sigma_p, \bnd)) = D^{-}(\Sigma_p, \bnd) \cup \bar{J^{+}(\Sigma_p)}~.
	}
	\end{equation}
Furthermore, the closures of the domains~\emph{in $\man$} are given as\footnote{The closures in the conformal extension should also include the boundary $\bnd$.}
	\begin{equation}\label{def:closure:D(S)}
	\eqalign{
	& \bar{D^{+}(\Sigma_p, \bnd)} = H^{+}(\Sigma_p, \bnd) \cup D^{+}(\Sigma_p, \bnd) \cup \Sigma_p~, \cr
	& \bar{D^{-}(\Sigma_p, \bnd)} = H^{-}(\Sigma_p, \bnd) \cup \Sigma_p \cup D^{-}(\Sigma_p, \bnd)~.
	}
	\end{equation}
From the above relations, we then also have formal agreement with the standard definitions of Cauchy horizons in general relativity as follows
	\begin{equation}\label{H(S):Cauchy-GR}
	\eqalign{
	& H^{+}(\Sigma_p, \bnd) = \bar{D^{+}(\Sigma_p, \bnd)} \setminus J^{-}(D^{+}(\Sigma_p, \bnd))~, \cr
	& H^{-}(\Sigma_p, \bnd) = \bar{D^{-}(\Sigma_p, \bnd)} \setminus J^{+}(D^{-}(\Sigma_p, \bnd))~.
	}
	\end{equation}
\setcounter{footnote}{0}
One may now define the~\emph{full domain of dependence} $D(\Sigma_p, \bnd)$ of the leaf $\Sigma_p$ as the union of the leaf itself along with its past and future domains of dependence (note that the corresponding definition of general relativity only involves the union of the past and future developments)
	\begin{equation}\label{def:full:D(S)}
	\eqalign{
	D(\Sigma_p, \bnd) & \equiv D^{+}(\Sigma_p, \bnd) \cup \Sigma_p \cup D^{-}(\Sigma_p, \bnd)~, \cr
	& = J^{-}(H^{+}(\Sigma_p, \bnd)) \cap J^{+}(H^{-}(\Sigma_p, \bnd))~,
	}
	\end{equation}
where the second equality follows from \eq{D(S):J(S)}. Hence as expected, the full domain is an open set. The closure of the full domain $D(\Sigma_p, \bnd)$ \emph{in $\man$}\footnote{The closure in the conformal extension should also contain $\bnd$.} is then given by
	\begin{equation}\label{def:closure:full:D(S)}
	\bar{D(\Sigma_p, \bnd)} = H^{+}(\Sigma_p, \bnd) \cup D(\Sigma_p, \bnd) \cup H^{-}(\Sigma_p, \bnd)~.
	\end{equation}
So, if we define the~\emph{full Cauchy horizon} of $\Sigma_p$, to be denoted by $H(\Sigma_p, \bnd)$, as
	\begin{equation}\label{def:full:H(S)}
	H(\Sigma_p, \bnd) \equiv H^{+}(\Sigma_p, \bnd) \cup H^{-}(\Sigma_p, \bnd)~,
	\end{equation}
we find that the boundary of $D(\Sigma_p, \bnd)$~\emph{in $\man$}\footnote{The boundary in the conformal extension should also contain $\bnd$.} is nothing but $H(\Sigma_p, \bnd)$ as expected (compare with Proposition 8.3.6 of \cite{Wald:1984rg})
	\begin{equation}\label{H(S)=dD(S)}
	\partial D(\Sigma_p, \bnd) \equiv \bar{D(\Sigma_p, \bnd)} \setminus D(\Sigma_p, \bnd) = H(\Sigma_p, \bnd)~.
	\end{equation}
In this way, the future domain of dependence $D^{+}(\Sigma_p, \bnd)$ of the leaf $\Sigma_p$, the corresponding future Cauchy horizon $H^{+}(\Sigma_p, \bnd)$, as well as their `past' analogues' $D^{-}(\Sigma_p, \bnd)$ and $H^{-}(\Sigma_p, \bnd)$ share many of the features of the corresponding notions of general relativity, even if formally.

As in general relativity, a leaf $\Sigma_p$ for which $D(\Sigma_p, \bnd) = \man$ will be called a~\emph{Cauchy surface} and a spacetime that possesses a Cauchy surface will be regarded as~\emph{globally hyperbolic}. From~\eq{def:full:D(S)}, it is obvious that for a leaf $\Sigma_p$ with a past and/or future Cauchy horizon $D(\Sigma_p, \bnd) \neq \man$. On the other hand, to determine when the full domain can be the full spacetime, we may resort to the following theorem:
	\begin{theorem}\label{thm:fdev}
If the future development of a leaf $\Sigma_p$ is non-empty yet no future Cauchy horizon forms, then the causal future of the leaf is identical with its future domain of dependence, \ie
	\begin{equation*}
	D^{+}(\Sigma_p, \bnd) \neq \emptyset~,~H^{+}(\Sigma_p, \bnd) = \emptyset \quad\Rightarrow\quad D^{+}(\Sigma_p, \bnd) = J^{+}(\Sigma_p)~.
	\end{equation*}
Likewise, if the past development of a leaf $\Sigma_p$ is non-empty yet no past Cauchy horizon forms, then the causal past of the leaf is identical with its past domain of dependence, \ie
	\begin{equation*}
	D^{-}(\Sigma_p, \bnd) \neq \emptyset~,~H^{-}(\Sigma_p, \bnd) = \emptyset \quad\Rightarrow\quad D^{-}(\Sigma_p, \bnd) = J^{-}(\Sigma_p)~.
	\end{equation*}
	\end{theorem}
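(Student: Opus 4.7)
The proof I would give shows that $D^{+}(\Sigma_p, \bnd)$ is both open and closed in $J^{+}(\Sigma_p)$, so connectedness of $J^{+}(\Sigma_p)$ together with the hypothesis $D^{+}(\Sigma_p, \bnd) \neq \emptyset$ forces the two sets to coincide. Since the inclusion $D^{+}(\Sigma_p, \bnd) \subseteq J^{+}(\Sigma_p)$ is built into Definition~\ref{def:DoD}, only the reverse inclusion needs work, and I would establish it along this topological route.

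First I would verify that $J^{+}(\Sigma_p)$ is connected: every leaf to the future of $\Sigma_p$ lies entirely in $J^{+}(\Sigma_p)$ by the causality condition~\eq{disjoint:JS}, each such leaf is connected by our standing assumption, and any two such leaves can be joined by a preferred causal curve along the {\ae}ther $u^a$, making $J^{+}(\Sigma_p)$ path-connected. Next I would show that $D^{+}(\Sigma_p, \bnd)$ is open in $\man$, adapting the argument behind Lemma 8.3.3 of~\cite{Wald:1984rg} to our definition: a sufficiently small open neighbourhood of any $q \in D^{+}(\Sigma_p, \bnd)$ still satisfies condition~\eq{def:DoD:1}, since the leaves it sweeps out lie in a narrow band around $\Sigma_q$ and inherit the `boundary-closure' property from $\Sigma_q$, and it still satisfies condition~\eq{def:DoD:2}, since every past-inextendible causal curve through a nearby point can be deformed into a past-inextendible curve through $q$ which, by hypothesis, either meets $\Sigma_p$ or reaches $\bnd$.

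The assumption $H^{+}(\Sigma_p, \bnd) = \emptyset$ now enters decisively to yield closedness. By definition~\eq{def:HpmI}, $\partial D^{+}(\Sigma_p, \bnd) = H^{+}(\Sigma_p, \bnd) \cup \big(\partial D^{+}(\Sigma_p, \bnd) \cap \Sigma_p\big)$, so the vanishing of the Cauchy horizon forces $\partial D^{+}(\Sigma_p, \bnd) \subseteq \Sigma_p$. Intersecting with $J^{+}(\Sigma_p)$, which is disjoint from $\Sigma_p$ by~\eq{disjoint:JS}, gives $\bar{D^{+}(\Sigma_p, \bnd)} \cap J^{+}(\Sigma_p) = D^{+}(\Sigma_p, \bnd)$, so $D^{+}(\Sigma_p, \bnd)$ is closed in the subspace $J^{+}(\Sigma_p)$. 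Being a non-empty clopen subset of the connected space $J^{+}(\Sigma_p)$, it must be all of $J^{+}(\Sigma_p)$, which is the desired conclusion. The past statement follows by applying the same argument with time-reversed roles, using $H^{-}(\Sigma_p, \bnd) = \emptyset$ to place $\partial D^{-}(\Sigma_p, \bnd)$ on $\Sigma_p$ and the connectedness of $J^{-}(\Sigma_p)$.

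The main obstacle is the openness of $D^{+}(\Sigma_p, \bnd)$. Stability of condition~\eq{def:DoD:2} under small displacements of $q$ is the analogue of the familiar limit-curve argument from the Lorentzian setting, but it is aggravated here by the possibility of endleaves: one must rule out that perturbing $q$ creates new past-inextendible causal curves which, in the manner of the examples in~\eq{endleaf:examples}, asymptote to some intermediate leaf without ever intersecting $\Sigma_p$ or reaching $\bnd$. Once this stability is secured, the remaining clopen/connectedness step is purely topological and goes through cleanly.
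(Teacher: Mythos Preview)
Your argument is correct and is essentially the topological content that the paper's own proof compresses into a single sentence. The paper simply asserts that ``by our assumptions, every past inextendible causal curve through every $q \in J^{+}(\Sigma_p)$ must intersect $\Sigma_p$ or reach $\bnd$'', and then invokes the containment $D^{+}(\Sigma_p, \bnd) \subseteq J^{+}(\Sigma_p)$ from~\eq{D(S):subset:J(S)} to conclude. Unpacked, that assertion is exactly your clopen-plus-connectedness step: if some $q \in J^{+}(\Sigma_p)$ failed to lie in $D^{+}(\Sigma_p, \bnd)$ while $D^{+}(\Sigma_p, \bnd) \neq \emptyset$, the interface between the two regions inside $J^{+}(\Sigma_p)$ would supply nontrivial boundary events, hence a nonempty $H^{+}(\Sigma_p, \bnd)$, contrary to hypothesis.

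Where the two presentations differ is in what gets taken for granted. The paper has already recorded, just before the theorem, that $D^{\pm}(\Sigma_p, \bnd)$ are open sets (derived from~\eq{D(S):J(S)} and the proof of Theorem~\ref{thm:hor}), so it does not revisit openness. You instead argue openness directly, in the spirit of Wald's Lemma~8.3.3, and you are right to flag the endleaf phenomenon as the delicate point in that route. Your version is therefore more self-contained and makes the connectedness of $J^{+}(\Sigma_p)$ and the role of the hypothesis $H^{+}(\Sigma_p,\bnd)=\emptyset$ fully explicit; the paper's version is shorter but leans on the surrounding machinery. Either way the underlying mechanism is the same.
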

	\begin{proof}
Consider the case with the future development first. By our assumptions, every past inextendible causal curve through every $q \in J^{+}(\Sigma_p)$ must intersect $\Sigma_p$ or reach $\bnd$, implying $J^{+}(\Sigma_p) \subseteq D^{+}(\Sigma_p, \bnd)$. Appealing to~\eq{D(S):subset:J(S)} after setting $\simset_p = \Sigma_p$ in that equation, we then have $D^{+}(\Sigma_p, \bnd) = J^{+}(\Sigma_p)$ under the relevant assumptions. An obviously analogous proof exists for the past domain under similar assumptions.
	\end{proof}
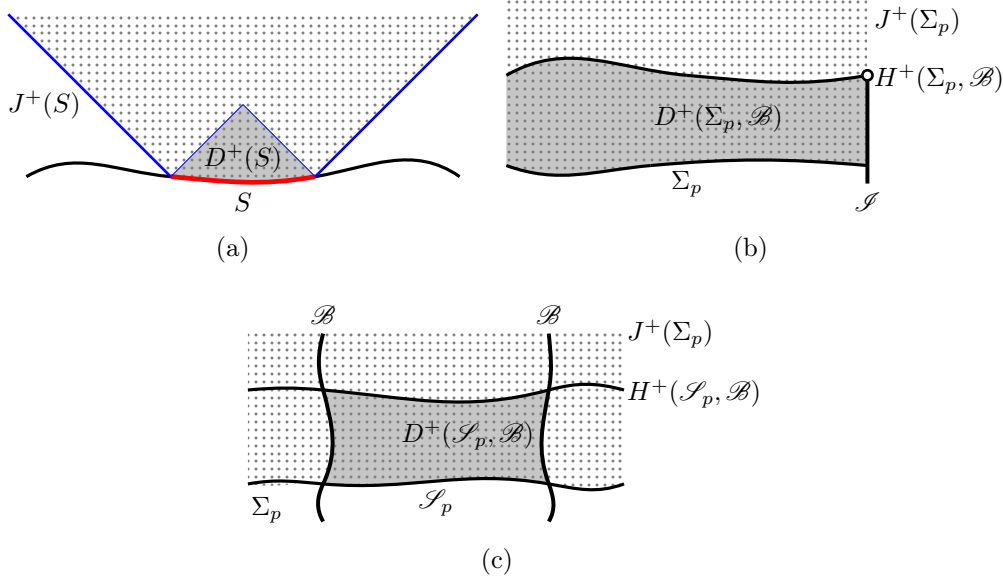
\begin{figure}[t!]
	\centering
	\subfloat[][]{
	\begin{tikzpicture}[scale=0.48]
\fill [gray!45] (-2,-1) to [out=-5, in=190, relative] (2,-1) -- (0,1) -- (-2,-1);
\pattern [pattern=dots, pattern color=gray] (-2,-1) to [out=-5, in=190, relative] (2,-1) -- (6.5,3.5) -- (-6.5,3.5) -- (-2,-1);
\draw [line width=1.2pt] (-6,-1) to [out=30, in=175, relative] (-2,-1);
\draw [line width=1.8pt, red] (-2,-1) to [out=-5, in=190, relative] (2,-1);
\draw [line width=1.2pt] (2,-1) to [out=10, in=145, relative] (6,-1);
\draw [blue] (-2,-1) -- (0,1);
\draw [blue] (2,-1) -- (0,1);
\draw [line width=1pt, blue] (-2,-1) -- (-6.5,3.5);
\draw [line width=1pt, blue] (2,-1) -- (6.5,3.5);
\node at (-5.5,1) {\footnotesize $J^+(S)$};
\node at (0,-0.5) {\footnotesize $D^+(S)$};
\node at (0,-1.7) {\footnotesize $S$};
	\end{tikzpicture}}
	\enspace
\subfloat[][]{
	\begin{tikzpicture}[scale=0.48]
\fill [gray!45] (-5,-1) to [out=-20, in=185, relative] (-1,-1) -- (-1,-1) to [out=5, in=175, relative] (5,-1) -- (5,1.5) -- (5,1.5) to [out=10, in=175, relative] (0,1.5) -- (0,1.5) to [out=-5, in=210, relative] (-5,1.5) -- (-5,-1);
\pattern [pattern=dots, pattern color=gray] (-5,-1) to [out=-20, in=185, relative] (-1,-1) -- (-1,-1) to [out=5, in=175, relative] (5,-1) -- (5,3.6) -- (-5,3.6) -- (-5,-1);
\draw [line width=1.2pt] (-5,1.5) to [out=30, in=175, relative] (0,1.5);
\draw [line width=1.2pt] (0,1.5) to [out=-5, in=190, relative] (5,1.5);
\draw [line width=1.2pt] (-5,-1) to [out=-20, in=185, relative] (-1,-1);
\draw [line width=1.2pt] (-1,-1) to [out=5, in=175, relative] (5,-1);
\node [anchor=west, inner sep=1pt] at (5.1,3) {\footnotesize $J^+(\Sigma_p)$};
\node [anchor=west, inner sep=1pt] at (5.1,1.4) {\footnotesize $H^{+}(\Sigma_p, \bnd)$};
\node [anchor=west, inner sep=1pt] at (-1,0.3) {\footnotesize $D^{+}(\Sigma_p, \bnd)$};
\node at (0,-1.5) {\footnotesize $\Sigma_p$};
\draw [line width=1.5pt] (5,-1.5) -- (5,1.5);
\draw [line width=1pt, fill=white] (5,1.5) circle (4pt);
\node at (5,-2) {\footnotesize $\scrI$};
\end{tikzpicture}}\\
\subfloat[][]{
	\begin{tikzpicture}[scale=0.5]
\fill [gray!45] (-3,-1) to [out=-5, in=170, relative] (3,-1) -- (3,-1) to [out=20, in=170, relative] (3,1.5) -- (3,1.5) to [out=15, in=175, relative] (-3,1.5) -- (-3,1.5) to [out=25, in=165, relative] (-3,-1);
\pattern [pattern=dots, pattern color=gray] (-5,-1) to [out=-10, in=175, relative] (-3,-1) -- (-3,-1) to [out=-5, in=170, relative] (3,-1) -- (3,-1) to [out=-10, in=200, relative] (5,-1) -- (5,3.1) -- (-5,3.1) -- (-5,-1);
\draw [line width=1.2pt] (-5,1.5) to [out=5, in=175, relative] (-3,1.5);
\draw [line width=1.2pt] (-3,1.5) to [out=-5, in=195, relative] (3,1.5);
\draw [line width=1.2pt] (3,1.5) to [out=15, in=165, relative] (5,1.5);
\draw [line width=1.2pt] (-5,-1) to [out=10, in=175, relative] (-3,-1);
\draw [line width=1.2pt] (-3,-1) to [out=-5, in=170, relative] (3,-1);
\draw [line width=1.2pt] (3,-1) to [out=-10, in=200, relative] (5,-1);
\draw [line width=1.5pt] (-3,-2) to [out=30, in=155, relative] (-3,-1);
\draw [line width=1.5pt] (-3,-1) to [out=-25, in=195, relative] (-3,1.5);
\draw [line width=1.5pt] (-3,1.5) to [out=15, in=165, relative] (-3,3);
\draw [line width=1.5pt] (3,-2) to [out=-30, in=200, relative] (3,-1);
\draw [line width=1.5pt] (3,-1) to [out=20, in=170, relative] (3,1.5);
\draw [line width=1.5pt] (3,1.5) to [out=-10, in=185, relative] (3,3);
\node [anchor=west, inner sep=1pt] at (5,3) {\footnotesize $J^+(\Sigma_p)$};
\node [anchor=west, inner sep=1pt] at (5,1.4) {\footnotesize $H^{+}(\simset_p, \bnd)$};
\node [anchor=west, inner sep=1pt] at (-1,0.3) {\footnotesize $D^{+}(\simset_p, \bnd)$};
\node at (-4.5,-1.7) {\footnotesize $\Sigma_p$};
\node at (-3,3.5) {\footnotesize $\bnd$};
\node at (3,3.5) {\footnotesize $\bnd$};
\node at (0,-1.5) {\footnotesize $\simset_p$};
\end{tikzpicture}}
\caption{Difference between the notions of Cauchy development and Cauchy horizons in locally Lorentz invariant theories (a) and theories with a preferred foliation, both for boundary at infinity (b) and in the bulk (c).}
	\label{fig:Dcompare}
	\end{figure}

It is worthwhile to compare and contrast the results of Theorem~\ref{thm:fdev} along with those in~\eq{LVcausality}, as they convey the peculiarities of Lorentz violating causality in a very succinct yet effective manner (see Figure~\ref{fig:Dcompare} (a) and (b) for a comparison). One may also contrast the above with the consequences of Theorem~\ref{thm:hor} as summarized in~\eq{D(S):J(S)}. We should emphasize in this regard that both the requirements of non-empty developments and empty Cauchy horizons are necessary in the statements of Theorem~\ref{thm:fdev}. In particular, a Cauchy horizon provides an example of a leaf which does not itself have its own Cauchy horizons, yet its (non-empty) causal past and future do not agree with its respective past and future developments, both of the latter being empty. Finally, as a trivial consequence of the Theorems~\ref{thm:hor},~\ref{thm:fdev}, and the relation~\eq{def:M}, we have~\emph{a leaf $\Sigma_p$ with a non-empty development is a Cauchy hypersurface if and only if its full Cauchy horizon $H(\Sigma_p, \bnd)$ is empty.} This is analogous to the corollary of Proposition 8.3.6 of \cite{Wald:1984rg}.

Even if a leaf $\Sigma_p$ possesses past and/or future Cauchy horizons, the data provided on it are capable of determining the region in the full development $D(\Sigma_p, \bnd)$. Therefore, one may regard such a leaf as a~\emph{partial Cauchy surface}. In the same vein, one may regard the full development~\eq{def:full:D(S)} as globally hyperbolic, since it is completely built up with the data provided on any such partial Cauchy surface.
\subsection{Cauchy horizons and event horizons}\label{sec:Cauchy-vs-Event}
In Section~\ref{sec:def:EH}, we introduced the notion of event horizons in a manifold with a preferred foliation. Having now studied the properties of Cauchy horizons in a spacetime with a preferred foliation, we will next prove that event horizons (in the present context) are necessarily Cauchy horizons -- another remarkable feature of manifolds with a preferred foliation.

Towards that end, we will need to make a technical assumption about the spacetimes under consideration: by suitably adopting the corresponding notion from general relativity (see~\eg page 299 of \cite{Wald:1984rg}), a foliated spacetime will be called~\emph{strongly asymptotically predictable} if $\outside{\man} \subseteq D(\Sigma_p, \bnd)$ for every leaf $\Sigma_p \subset \outside{\man}$, where, as before, $\bnd$ refers to the relevant part of the collection of all asymptotic boundaries. We then have the following theorem:
	\begin{theorem}\label{thm:EH=UH}
In a strongly asymptotically predictable foliated spacetime, the future event horizon $\EH^{+}(\scrI)$ with respect to $\scrI$ is a future Cauchy horizon of the domain $D^{+}(\Sigma_p, \bnd)$, \ie
	\begin{equation*}
	\EH^{+}(\scrI) = H^{+}(\Sigma_p, \bnd)~, \quad \forall~\Sigma_p \subset \outside{\man}~.
	\end{equation*}
Similarly, the past event horizon $\EH^{-}(\scrI)$ with respect to $\scrI$ is a past Cauchy horizon of the domain $D^{-}(\Sigma_p, \bnd)$, \ie
	\begin{equation*}
	\EH^{-}(\scrI) = H^{-}(\Sigma_p, \bnd)~, \quad \forall~\Sigma_p \subset \outside{\man}~.
	\end{equation*}
	\end{theorem}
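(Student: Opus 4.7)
The goal is to establish the two inclusions for the future case; the past statement will follow from an entirely analogous argument after swapping $J^{+}$, $\EH^{+}$, $H^{+}$ for their past analogues. Before either, I would record two preliminaries. First, both $\EH^{+}(\scrI)$ and $H^{+}(\Sigma_p, \bnd)$ lie inside a single leaf each: the former by the remark following~\eq{def:EH}, and the latter as a simset by Theorem~\ref{thm:hor} applied to the full-leaf development and definition~\eq{def:HpmI}. Second, $\EH^{+}(\scrI)$ lies strictly to the future of $\Sigma_p$, for otherwise $\Sigma_p \subseteq J^{+}(\EH^{+}(\scrI)) \cup \EH^{+}(\scrI) \subseteq \BH(\scrI)$ by~\eq{BH-WH} and the discussion right below~\eq{def:EH}, contradicting the hypothesis $\Sigma_p \subset \outside{\man}$.

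For the forward inclusion $\EH^{+}(\scrI) \subseteq H^{+}(\Sigma_p, \bnd)$, pick any $q \in \EH^{+}(\scrI)$. Since $q \notin \Sigma_p$ (the two leaves are distinct), definition~\eq{def:HpmI} reduces the goal to proving $q \in \partial D^{+}(\Sigma_p, \bnd)$. Fix any open neighbourhood $\nhood_q$ of $q$; because $J^{+}(\Sigma_p)$ is open and contains $q$, I may shrink $\nhood_q \subset J^{+}(\Sigma_p)$. By~\eq{bM} we have $q \in \partial\outside{\man}$, so $\nhood_q \cap \outside{\man}$ is non-empty; strong asymptotic predictability then forces $\nhood_q \cap \outside{\man} \subseteq D(\Sigma_p, \bnd) \cap J^{+}(\Sigma_p) = D^{+}(\Sigma_p, \bnd)$, producing events of $\nhood_q$ inside $D^{+}(\Sigma_p, \bnd)$.

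For the complementary side, I must exhibit in every such $\nhood_q$ an event lying outside $D^{+}(\Sigma_p, \bnd)$. The neighbourhood necessarily contains events in $\BH(\scrI) \setminus \EH^{+}(\scrI)$, since these populate the leaves immediately beyond $\EH^{+}(\scrI)$ in the total order of the foliation, and each such leaf lies entirely inside $\BH(\scrI)$ by the argument that any $\Sigma_r$ with $r \in J^{+}(\EH^{+}(\scrI))$ is contained in $J^{+}(\EH^{+}(\scrI))$ via~\eq{LVcausality}. For any such $r$, the leaf $\Sigma_r$ is confined to $\BH(\scrI)$ and thus admits no trivially foliated asymptotically flat end in $\scrI$; since $\bnd$ is the collection of conformal boundaries associated with the asymptotics of $\outside{\man}$ that enter strong asymptotic predictability, $\tilde{\Sigma}_r \cap \bnd$ cannot close any simset of $\Sigma_r$ containing $r$. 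Condition~\eq{def:DoD:1} of Definition~\ref{def:DoD} therefore fails at $r$, so $r \notin D^{+}(\Sigma_p, \bnd)$. Thus every neighbourhood of $q$ meets both $D^{+}(\Sigma_p, \bnd)$ and its complement, giving $q \in \partial D^{+}(\Sigma_p, \bnd)$ and hence $\EH^{+}(\scrI) \subseteq H^{+}(\Sigma_p, \bnd)$.

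The reverse inclusion is then a short observation: $H^{+}(\Sigma_p, \bnd)$ sits inside a single leaf $\Sigma_H$, and the forward inclusion places the full leaf $\EH^{+}(\scrI)$ inside $\Sigma_H$; the disjointness-or-equality of leaves therefore forces $\EH^{+}(\scrI) = \Sigma_H$, so $H^{+}(\Sigma_p, \bnd) \subseteq \EH^{+}(\scrI)$, completing the equality. I expect the main technical obstacle to be the argument in the previous paragraph that leaves strictly inside $\BH(\scrI)$ violate condition~\eq{def:DoD:1}: this rests on reading $\bnd$ as genuinely asymptotic to $\outside{\man}$ in the sense of Section~\ref{sec:def:scrI}, so that no interior leaf can be closed by any component of $\bnd$ that appears in strong asymptotic predictability, be it $\scrI$ or a piece of $\bnd_C$ adjacent to $\outside{\man}$.
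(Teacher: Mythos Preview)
Your proof is correct and follows essentially the same approach as the paper's. The paper's argument is much terser: it invokes strong asymptotic predictability to place $\outside{\man}$ inside the development, cites~\eq{EH-cap-I=0} to conclude that the leaf $\EH^{+}(\scrI)$ fails condition~\eq{def:DoD:1} of Definition~\ref{def:DoD}, and then simply refers back to Case~2 of Theorem~\ref{thm:hor} to identify $\EH^{+}(\scrI)$ with the boundary of $D^{+}(\Sigma_p,\bnd)$. You have unpacked this into an explicit two-sided boundary argument together with the leaf-equality observation for the reverse inclusion; the underlying logic is the same, and your caveat about whether leaves inside $\BH(\scrI)$ can be closed by pieces of $\bnd_C$ is a concern the paper's proof shares (it only appeals to $\EH^{\pm}(\scrI)\cap\scrI=\emptyset$ without addressing $\bnd_C$ either).
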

	\begin{proof}
By the assumption of strong asymptotic predictability as defined above, $\outside{\man}$ is actually a part of the development of every leaf $\Sigma_p \subset \outside{\man}$ and hence cannot contain a Cauchy horizon within itself. Now recall that $\EH^{\pm}(\scrI) \cap \scrI = \emptyset$ as already observed in~\eq{EH-cap-I=0}. Therefore, by Definition~\ref{def:DoD} of the domain of dependence as well as the arguments presented in the context of case two of Theorem~\ref{thm:hor}, we may conclude that for any leaf $\Sigma_p \subset \outside{\man}$, the leaf $\EH^{+}(\scrI)$ marks the boundary of the future domain of dependence $D^{+}(\Sigma_p, \bnd)$, and likewise, the leaf $\EH^{-}(\scrI)$ marks the boundary of the past domain of dependence $D^{-}(\Sigma_p, \bnd)$.
	\end{proof}

Theorem~\ref{thm:EH=UH} demonstrates that event horizons in the present setting are always Cauchy horizons, but it is worth emphasising that the converse is not necessarily true given the much broader definition of Cauchy horizons (\eg Cauchy horizons may arise in regions not admitting any suitable asymptotic region).
%
%
\section{Universal horizons in stationary spacetimes}\label{sec:UH:local}
In section \ref{sec:def:EH} we formalized the notion of a universal (event) horizon in a foliated spacetime $(\man, \Sigma, \met)$. Just as in general relativity, this concept is necessarily global; in particular without the knowledge of the entire history of the spacetime, it is impossible to determine whether a (part of a) hypersurface is a universal horizon. However, a more local characterization, if available, is often far more useful in practice. The goal of this section will be to focus on~\emph{stationary spacetimes}, and show that it is indeed possible to characterize universal horizons through their local properties. The underlying (global) `time translational symmetry' of a stationary spacetime implies that its entire history is always known, and this is the key property that allows for a local characterization. In what follows, we will always assume that every symmetry of the spacetime is satisfied by both the metric and the foliation (and hence the {\ae}ther), as they are both fundamental elements of any configuration.

In general relativity, an asymptotically flat spacetime is called (pseudo) stationary if it admits a Killing vector whose flow lines are timelike curves `at least at sufficiently large asymptotic distances' (see \cite{Carter:1972}). However, such a definition is not satisfactory in our context, since timelike curves have no special meaning in a theory with a preferred foliation and arbitrary speeds of propagation. In fact, whether a certain curve will be timelike or not depends on which one of the speed-$c$ metrics of \eq{def:speed-c-met} one is willing to use. As such, it would be preferable to have a definition of stationarity which does not make reference to any specific metric.

Let us thus begin by laying down some preliminary terminologies. Suppose $\man$ has an isometry generated by a Killing vector $\chi^a$, whose action will be denoted by $\uppi_{\chi}:\mathbb{R} \times \man \to \man$. The trajectories of this action generates events $\uppi_{\chi}(\uptau, p) \in \man$, one for each value of the group parameter $\uptau \in \mathbb{R}$ starting with $\uppi_{\chi}(0, p) = p$. As is customary, we will call $\uppi_{\chi}(\uptau, p)$ (for all $\uptau \in \mathbb{R}$) the orbit of the Killing vector $\chi^a$ through the event $p$. In this work, we will always assume that if $\man$ admits a Killing vector, then its orbits exist everywhere in $\man$. We will similarly denote the action of the isometry on a set of events $\events$ by $\uppi_{\chi}(\uptau, \events)$; \eg a curve $\lambda(\upsigma) \subset \man$ is `transported' to a curve $\uppi_{\chi}(\uptau, \lambda(\upsigma)) \subset \man$ under the action of the isometry.

We will call a Killing vector field~\emph{causal} in a region of spacetime, if its orbits define causal curves.\footnote{Note that for every future directed causal Killing vector $\xi^a$, there exists a past directed causal one given by $-\xi^a$.} Since we are working with spacetimes with a trivially foliated flat end, we will assume that the Killing field satisfies
	\begin{equation}\label{X:bc}
	(u\cdot\chi) \to -1~, \qquad \pmet_{a b}\chi^a\chi^b \to 0~,
	\end{equation}
asymptotically (`near $\scrI$' in a suitable sense; this last clause can be made more rigorous by imposing suitable boundary condition on $\chi^a$ in an open neighbourhood of $\pinf_p$ for every leaf $\Sigma_p \subset \outside{\man}$ such that~\eq{X:bc} holds exactly on $\scrI$). Note that a different asymptotic behaviour needs to be specified for the Killing vector $\chi^a$ in order to define stationarity for spacetimes which are not asymptotically flat or have a non-trivial foliation asymptotically, but delving deeper into such matters goes beyond our scope.

Let $\scrX \subseteq \man$ be an open set in the spacetime such that $\scrX \cup \scrI$ is the~\emph{maximal} connected component of $\man \,\cup\, \scrI$ on which the Killing vector field is causal and future directed; in other words, $(u\cdot\chi) < 0$ everywhere in $\scrX \cup \scrI$. We may then propose the following definition of stationarity suitable for the present context:
	\begin{definition}[Stationary spacetime]\label{def:stationaryM}
A spacetime with a preferred foliation $(\man, \Sigma, \met)$ and an open region $\outside{\man}$ admitting a trivially foliated asymptotically flat end will be called stationary if $\man$ admits an isometry generated by a Killing vector $\chi^a$ satisfying boundary conditions~\eq{X:bc} such that
	\begin{equation*}
	\outside{\man} \cap \scrX \neq \emptyset~.
	\end{equation*}
	\end{definition}

According to the above definition, (especially) in a black/white hole spacetime there is at least an asymptotic region of the spacetime `outside' the black/white hole where the Killing vector $\chi^a$ is causal. The definition does not rule out the possibility (at least not in an obvious manner) that there could be parts of $\outside{\man}$ where $(u\cdot\chi) \geqq 0$. Neither does it preclude the option that there might be some region of $\scrX$ continuously connected to $\scrI$ where the Killing vector is still causal and future directed, but without any overlap with $\outside{\man}$. As will be seen below, a local characterization of a universal horizon will be achieved by analyzing these comments more carefully. Our investigations in this section has been substantially influenced by the presentation of the analogous results of general relativity in \cite{Carter:1972}.

As we have already mentioned, static, spherically symmetric and asymptotically flat black hole solutions have been studied in Ho\v{r}ava gravity and Einstein-{\ae}ther theory in \cite{Barausse:2011pu,Blas:2011ni}, where the notion of the universal horizon was first introduced (as mentioned previously, we will only talk the outermost universal horizon and regard it as~\emph{the} universal horizon). Staticity and spherical symmetry make it rather straightforward to identify the universal horizon: in our terminology, it is the outermost location where a leaf of the foliation becomes a constant areal-radius hypersurface; the mere requirement that any signal should travel forward in (preferred) time then implies that such a hypersurface can only be crossed in one direction and no signal from the interior can reach the exterior. A generic feature of all such highly symmetric solutions is that, on the universal horizon one has $(u\cdot\chi) = 0$, where $\chi^a$ is the Killing vector associated with staticity, that is asymptotically timelike and satisfies $(u\cdot\chi) \to -1$ (recall~\eq{X:bc}). This strongly suggests $(u\cdot\chi) = 0$ as a condition for the local characterisation of the universal horizon. We will establish below that this is indeed the case, \ie the condition $(u\cdot\chi) = 0$ (modulo an additional technical assumption) furnishes a~\emph{necessary and sufficient} characterization of a universal horizon in the most general stationary setting. Hence it can be used as a local definition of the universal horizon in stationary systems.

To that end we need some kinematical preliminaries. The acceleration of the {\ae}ther flow is defined as
	\begin{equation}\label{def:acc}
	a_a = u^c\nabla_c u_a = \frac{\vec{\nabla}_a N}{N}~,
	\end{equation}
where $\vec{\nabla}_a$ denotes the projected covariant derivative on the foliation leaves. The second equality in~\eq{def:acc} follows from the hypersurface orthogonality of the {\ae}ther (see~\eq{ae:HSO}). We may then prove the following result which will of central importance:
	\begin{proposition}\label{u.X=0:leaf}
The hypersurface defined by $(u\cdot\chi) = 0$ and $(a\cdot\chi)\neq 0$ is a leaf of the preferred foliation which cannot be conformally extended to intersect $\scrI$.
	\end{proposition}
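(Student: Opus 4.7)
The plan is to convert the two algebraic conditions on $\chi^a$ into equivalent statements about the time function $T$ alone, after which identifying the hypersurface as a leaf becomes almost immediate and the obstruction to reaching $\scrI$ is a straightforward consequence of~\eq{X:bc}. First, I would exploit the paper's standing convention that every symmetry preserves metric, foliation, and {\ae}ther together to conclude $\LieD_\chi u_a = 0$. Combined with $u_a = -N\nabla_a T$, this yields $(\LieD_\chi N)\nabla_a T + N\nabla_a(\LieD_\chi T) = 0$, so $\nabla_a(\LieD_\chi T)$ must be proportional to $\nabla_a T$. Equivalently, $\LieD_\chi T$ is constant on each leaf and therefore has the form $\LieD_\chi T = g(T)$ for some smooth function $g$ of one variable, with the companion relation $\LieD_\chi N = -Ng'(T)$. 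In particular $(u\cdot\chi) = -N\chi^a\nabla_a T = -Ng(T)$, so the locus $(u\cdot\chi) = 0$ is already a union of leaves, namely those labelled by the zeros of $g$.

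Next I would evaluate $(a\cdot\chi)$ on this locus. Rewriting~\eq{def:acc} as $Na_a = \nabla_a N + u_a(u\cdot\nabla)N$ and contracting with $\chi^a$ gives $N(a\cdot\chi) = \LieD_\chi N + (u\cdot\chi)(u\cdot\nabla N)$. Imposing $(u\cdot\chi)=0$ together with $\LieD_\chi N = -Ng'(T)$ yields $(a\cdot\chi)|_{T=T_0} = -g'(T_0)$. Thus the additional requirement $(a\cdot\chi)\neq 0$ is precisely the statement that $T_0$ is a \emph{simple} zero of $g$; by the implicit function theorem simple zeros are isolated, and so the connected hypersurface on which both conditions hold is the single leaf $\Sigma_{T_0}$, establishing the first half of the proposition.

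To exclude the conformal extension meeting $\scrI$, suppose for contradiction that some conformal extension $\tilde{\Sigma}_{T_0}$ intersected $\scrI$ at a point $\pinf_{T_0}$. Then~\eq{X:bc} demands $(u\cdot\chi)(\pinf_{T_0}) = -1$, whereas $(u\cdot\chi) \equiv 0$ throughout $\Sigma_{T_0}$; continuity of $(u\cdot\chi)$ at the appended point~-- inherent in the conformal-extension procedure~-- would force $0 = -1$, which is absurd. Hence $\Sigma_{T_0}$ admits no conformal extension reaching $\scrI$.

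The principal delicacy lies in the very first step: securing $\LieD_\chi u_a = 0$ rather than the weaker $\LieD_\chi u_a \propto u_a$ that preservation of the foliation as a set of hypersurfaces alone would give. This is where the convention that the {\ae}ther itself, with its unit normalization intact, is preserved by every symmetry enters essentially. Once this is in place, the remainder reduces to algebraic manipulation of the single-variable function $g$ together with a simple continuity argument at infinity.
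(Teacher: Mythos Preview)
Your argument is correct, but it takes a genuinely different route from the paper's. The paper works entirely at the level of the reparametrization-invariant objects $u_a$, $a_a$ and $\chi^a$: from $\LieD_\chi u_a = 0$ and Killing's equation one obtains in a single step the identity
\[
\nabla_a(u\cdot\chi) = -(a\cdot\chi)\,u_a + (u\cdot\chi)\,a_a\,,
\]
whence, on the locus $(u\cdot\chi)=0$, the normal to the level set is $-(a\cdot\chi)u_a$; with $(a\cdot\chi)\neq 0$ this is parallel to $u_a$ and the hypersurface is a leaf by Frobenius. The clash with~\eq{X:bc} then gives the second claim exactly as you argue.

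Your approach instead passes through a choice of time function $T$ and packages the symmetry as $\LieD_\chi T = g(T)$, so that $(u\cdot\chi) = -N g(T)$ and $(a\cdot\chi)|_{T_0} = -g'(T_0)$. This is valid and has the virtue of making the picture ``universal horizons are simple zeros of $g$'' explicit, which foreshadows the later relation $N = f(T)(u\cdot\chi)$ in the paper (with $f = -1/g$). The cost is that you have introduced a labelling of the foliation, which the paper's philosophy is to avoid; the paper's identity is manifestly reparametrization-invariant and also covers regions where no single global $T$ need exist. Two minor remarks: invoking the implicit function theorem for a one-variable function is heavier machinery than needed (continuity of $g'$ suffices), and your argument tacitly uses $N>0$, which is guaranteed by the standing assumption that the {\ae}ther is everywhere regular but is worth stating. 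Your closing observation about needing the full $\LieD_\chi u_a = 0$ rather than mere preservation of the foliation as a set is well taken and applies equally to both proofs.
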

	\begin{proof}
The condition that the {\ae}ther respects stationarity can be expressed as
	\begin{equation}\label{D(u.X)}
	\LieD_{\chi}u_a = 0 \qquad\Leftrightarrow\qquad \nabla_a(u\cdot\chi) = -(a\cdot\chi)u_a + (u\cdot\chi)a_a~.
	\end{equation}
Since the normal to any $(u\cdot\chi) =$ constant hypersurface is proportional to $\nabla_a(u\cdot\chi)$ by definition, it immediately follows that the hypersurface $(u\cdot\chi) = 0$ is a leaf of the preferred foliation,~\emph{provided} $(a\cdot\chi) \neq 0$ everywhere on the same hypersurface. Finally, due to the incompatibility of its defining condition $(u\cdot\chi) = 0$ with the boundary condition~\eq{X:bc},\footnote{We thank David Mattingly for emphasizing this to us.} such a hypersurface cannot be conformally extended to intersect the boundary at infinity $\scrI$.
	\end{proof}

From here onwards, we will always assume that $(a\cdot\chi) \neq 0$ on every event where $(u\cdot\chi) = 0$, as a further technical assumption, and will comment on its physical relevance below. For brevity and convenience, we will use the `shorthand' $\UH$ to denote a leaf defined by the above conditions, namely $(u\cdot\chi) = 0, (a\cdot\chi) \neq 0$. If more than one of such leaves are required to be considered at once, we may distinguish them through additional labels on $\UH$.

As a trivial consequence of the above Proposition and the fact that every leaf in $\outside{\man}$ admits a conformal extension to $\scrI$ by definition, we have
	\begin{corollary}\label{UH-notin-<M>}
$\UH$ can never belong to $\outside{\man}$ \ie
	\begin{equation*}
	\UH \cap \outside{\man} = \emptyset.
	\end{equation*}
	\end{corollary}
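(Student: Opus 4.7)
The plan is to argue by contradiction, leveraging Proposition~\ref{u.X=0:leaf} together with the defining property of $\outside{\man}$ laid out in Section~\ref{sec:def:scrI}. Suppose there exists an event $p \in \UH \cap \outside{\man}$. Because every event lies on a unique leaf of the preferred foliation, the leaf through $p$ must be $\UH$ itself, i.e.\ $\Sigma_p = \UH$. Since $\outside{\man}$ was defined (by taking the \emph{maximal} open region with a trivially foliated asymptotically flat end) precisely so that every leaf $\Sigma_p \subset \outside{\man}$ admits a trivially foliated asymptotically flat end, the leaf $\UH$ would then have to admit a conformal extension $\tilde{\UH} = \UH \cup \{\pinf_p\}$ with $\pinf_p \in \scrI$.

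This is exactly what Proposition~\ref{u.X=0:leaf} forbids: a hypersurface on which $(u\cdot\chi) = 0$ cannot be conformally extended up to $\scrI$, because the asymptotic boundary condition~\eq{X:bc} requires $(u\cdot\chi)\to -1$ there, which is manifestly incompatible with $(u\cdot\chi)=0$. Hence the assumption $\UH \cap \outside{\man} \neq \emptyset$ is untenable and the corollary follows.

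There is essentially no serious obstacle here: the whole work has been done upstream, in establishing that $\UH$ is a \emph{full} leaf (rather than some lower-dimensional locus) via the condition $(a\cdot\chi)\neq 0$, and in tying the definition of $\outside{\man}$ to the existence of conformally attached points at infinity. The corollary is then just the observation that the asymptotic value of $(u\cdot\chi)$ distinguishes leaves in $\outside{\man}$ from leaves of type $\UH$. The only subtlety worth flagging in the write-up is that one should invoke uniqueness of the leaf through $p$ (guaranteed by the ordered foliation, as discussed around~\eq{def:Sg_p}) to conclude $\Sigma_p = \UH$ before applying the extension argument.
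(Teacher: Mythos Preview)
Your argument is correct and follows essentially the same route as the paper, which simply records the corollary as ``a trivial consequence of the above Proposition and the fact that every leaf in $\outside{\man}$ admits a conformal extension to $\scrI$ by definition.'' You have merely unpacked this one-line justification into an explicit contradiction argument, adding the (helpful) remark that uniqueness of the leaf through $p$ forces $\Sigma_p = \UH$.
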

The final theorem which establishes a local characterization of a universal horizon will require some closer investigation of the regions $\outside{\man}$ and $\scrX$. To that end, the first non-trivial result we need is
	\begin{proposition}\label{<M>-X=0}
	\begin{equation*}
	\outside{\man} \setminus \scrX = \emptyset~.
	\end{equation*}
	\end{proposition}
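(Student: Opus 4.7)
The plan is to argue by contradiction: assume there exists some $q \in \outside{\man} \setminus \scrX$, and deduce that a leaf of the form $\UH$ (with $(u\cdot\chi)=0$, $(a\cdot\chi)\neq 0$) must be met inside $\outside{\man}$, which directly contradicts Corollary \ref{UH-notin-<M>}. The whole argument rests on topological/continuity considerations together with Proposition \ref{u.X=0:leaf} and the standing technical assumption $(a\cdot\chi)\neq 0$ wherever $(u\cdot\chi)=0$.

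First, I would recall that $\outside{\man}$ is taken to be connected (there is a single asymptotic end $\scrI$, and $\outside{\man} = J^{-}(\scrI)\cap J^{+}(\scrI)$ is the associated domain of outer communication), that $\scrX$ is open by definition, and that by Definition \ref{def:stationaryM} the intersection $\outside{\man}\cap\scrX$ is non-empty. If $\outside{\man}\setminus\scrX$ were also non-empty, then $\outside{\man}\cap\scrX$ would be a non-empty proper open subset of the connected set $\outside{\man}$. Consequently, $\outside{\man}\cap\scrX$ cannot also be closed in $\outside{\man}$, so I can pick a sequence of points $p_n\in\outside{\man}\cap\scrX$ converging to some $q\in\outside{\man}$ with $q\notin\scrX$.

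The key step is to establish that $(u\cdot\chi)(q)=0$. By continuity of $(u\cdot\chi)$ along the sequence $p_n\to q$, one has $(u\cdot\chi)(q)\leqq 0$. The case $(u\cdot\chi)(q)<0$ is excluded as follows: such a $q$ would sit inside an open neighbourhood on which $(u\cdot\chi)<0$; but eventually $p_n$ lies in this neighbourhood, and $p_n\in\scrX$, so this neighbourhood is connected to $\scrX$ through a curve of points satisfying $(u\cdot\chi)<0$. By maximality of $\scrX$ as a connected component of $\{(u\cdot\chi)<0\}\cup\scrI$ containing $\scrI$, this forces $q\in\scrX$, contradicting the choice of $q$. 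Hence $(u\cdot\chi)(q)=0$.

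Applying the standing technical assumption $(a\cdot\chi)(q)\neq 0$, Proposition \ref{u.X=0:leaf} implies that $q$ belongs to a full leaf $\UH$ of the preferred foliation characterised by $(u\cdot\chi)=0$, $(a\cdot\chi)\neq 0$. Since $q\in\outside{\man}$ by construction, we conclude $\UH\cap\outside{\man}\neq\emptyset$, directly contradicting Corollary \ref{UH-notin-<M>}. Hence $\outside{\man}\setminus\scrX=\emptyset$, as claimed. The main subtlety I expect is the case analysis in step three ruling out the possibility that $q$ lies in some other component of $\{(u\cdot\chi)<0\}$ disjoint from $\scrX$; this is handled cleanly by the maximality of $\scrX$ combined with $p_n\to q$, so no genuinely hard work is involved once the topological framework is in place.
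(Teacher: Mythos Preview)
Your proof is correct and arrives at the same contradiction as the paper --- namely, an event in $\outside{\man}$ where $(u\cdot\chi)=0$, which is forbidden by Corollary~\ref{UH-notin-<M>} --- but the route you take is genuinely different. The paper exploits the defining geometric property of $\outside{\man}$: since every leaf $\Sigma_p\subset\outside{\man}$ can be conformally extended to $\scrI$, one can draw an acausal curve from the hypothetical point $p\in\outside{\man}\setminus\scrX$ along $\Sigma_p$ all the way out to $\scrI$, and then apply the intermediate value theorem to $(u\cdot\chi)$ along this curve (it is $-1$ at $\scrI$ by \eq{X:bc}, so it must vanish somewhere on the curve if it is non-negative at $p$). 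Your argument is instead purely topological: you use connectedness of $\outside{\man}$, openness of $\scrX$, and maximality of $\scrX$ to locate a boundary point $q\in\partial\scrX\cap\outside{\man}$ directly, then rule out $(u\cdot\chi)(q)<0$ by the maximality clause. Your approach is slightly more abstract and does not invoke the leaf-to-infinity structure of $\outside{\man}$ at all, which makes it arguably cleaner; the paper's approach has the minor advantage of making the role of the asymptotic boundary condition \eq{X:bc} more visibly central. Both are short and essentially equivalent in strength.
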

	\begin{proof}
Suppose the contrary and let $p \in \outside{\man} \setminus \scrX$. By Corollary~\ref{UH-notin-<M>} of Proposition~\ref{u.X=0:leaf}, $(u\cdot\chi) \neq 0$ everywhere in $\outside{\man}$. Therefore we must have $(u\cdot\chi) > 0$ on $p$ since $p \notin \scrX$ by assumption. Now, by definition the region $\outside{\man}$ is the maximal portion of the spacetime $\man$, whose leaves (when conformally extended) intersect the boundary at infinity $\scrI$. Therefore,~\emph{any} event $q \in \Sigma_q \subset \outside{\man}$ can be `connected' to the point at infinity $\pinf_q \in \scrI$ via an acausal curve which is entirely contained in $\Sigma_q$. In particular, since $p \in \outside{\man}$ by assumption, there will always be an acausal curve $\lambda(\upsigma) \subset \outside{\man} \cup \scrI$ with $\upsigma \in [0, 1]$, such that $\lambda(0) = p$, $\lambda(1) \in \scrI$, and $\lambda(\upsigma)$ lies entirely on the (conformal extension of the) leaf $\Sigma_p$. Furthermore, the value of $(u\cdot\chi)$ will have to vary in a continuous manner along $\lambda(\upsigma)$ starting from some positive number at $p$ (as just argued) to $(u\cdot\chi) = -1$ on $\lambda(1)$ by the boundary condition~\eq{X:bc}. Hence, there has to be an event on $\lambda(\upsigma)$ where $(u\cdot\chi) = 0$. But this is a contradiction of Corollary~\ref{UH-notin-<M>} above. Hence $\outside{\man} \setminus \scrX$ is empty.
	\end{proof}

Taken together with Definition~\ref{def:stationaryM}, an immediate consequence of Proposition~\ref{<M>-X=0} is then
	\begin{equation*}
	\outside{\man} \subseteq \scrX~.
	\end{equation*}
The final result that we need is an upshot of all the preceding results and directly complements Proposition~\ref{<M>-X=0}. This can be stated as
	\begin{proposition}\label{X-<M>=0}
	\begin{equation*}
	\scrX \setminus \outside{\man} = \emptyset~.
	\end{equation*}
	\end{proposition}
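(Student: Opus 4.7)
The approach I would take is proof by contradiction: suppose some $p \in \scrX \setminus \outside{\man}$ exists, then build a continuous curve staying inside $\scrX$ that crosses into $\outside{\man}$, identify an event on the event horizon at which the Killing vector is still strictly causal, and use the Killing-invariance of the event horizon to rule this out.

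\emph{First step (constructing the crossing event).} Definition~\ref{def:stationaryM} of stationarity guarantees $\outside{\man} \cap \scrX \neq \emptyset$. Since $\scrX$ is a connected open subset of a manifold, it is path-connected, so I can pick a continuous curve $\gamma : [0,1] \to \scrX$ with $\gamma(0) = p$ and $\gamma(1) \in \outside{\man} \cap \scrX$. Openness of $\outside{\man}$ together with continuity then produces a first crossing event $q = \gamma(\upsigma_{\ast}) \in \partial\outside{\man}$ lying in $\scrX$. By \eq{bM} this boundary is precisely $\EH^{+}(\scrI) \cup \EH^{-}(\scrI)$, so without loss of generality $q \in \EH^{+}(\scrI)$, which is a single leaf of the foliation (as noted right below \eq{def:EH}). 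In particular $(u\cdot\chi)|_{q} < 0$, because $q \in \scrX$.

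\emph{Second step (Killing symmetry contradiction).} The key observation is that $\EH^{+}(\scrI)$ is invariant under the action of the Killing flow $\uppi_{\chi}$. This is because $\outside{\man}$ admits an intrinsic geometric characterisation --- the maximal open region of $\man$ whose leaves admit trivially foliated asymptotically flat ends --- and the isometry preserves both the foliation (by assumption) and all asymptotic data encoded in \eq{X:bc}. Hence $\outside{\man}$, its complement, and $\partial\outside{\man}$ are Killing-invariant, and since the flow sends leaves to leaves, the single leaf $\EH^{+}(\scrI)$ is mapped to itself. Consequently $\chi^{a}$ must be tangent to $\EH^{+}(\scrI)$ at every one of its events; tangency to a leaf with normal $u^{a}$ forces $(u\cdot\chi) = 0$ throughout $\EH^{+}(\scrI)$, directly contradicting $(u\cdot\chi)|_{q} < 0$.

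\emph{Expected obstacle and a self-contained alternative.} The main technical hurdle is giving a clean justification that $\EH^{+}(\scrI)$ is genuinely Killing-invariant: this relies on the claim that the conformal extension, the point at infinity $\pinf_{p}$, and the asymptotic conditions \eq{X:bc} on $u_{a}$ and $\chi^{a}$ are all carried into themselves by the isometry, which requires some careful bookkeeping with the definitions of Section~\ref{sec:def:scrI}. A more self-contained route bypasses this by appealing to Proposition~\ref{noCTC} directly: from $\LieD_{\chi} u_{a} = 0$ (eq.~\eq{D(u.X)}) and the tautology $\LieD_{\chi}\chi^{a} = 0$ one has $\LieD_{\chi}(u\cdot\chi) = 0$, so $(u\cdot\chi)$ is constant and strictly negative along the orbit of $\chi$ through $q$. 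That orbit is therefore a nontrivial strictly future-directed causal curve that, by the invariance of $\EH^{+}(\scrI)$ just discussed, lies entirely within the single leaf $\EH^{+}(\scrI)$, violating the causality condition that no causal curve may intersect any leaf more than once.
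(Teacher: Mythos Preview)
Your argument is correct and differs substantially from the paper's. The paper assumes $\EH(\scrI) \subset \scrX$, constructs isometry-preserving acausal curves in $\outside{\man}$ that reach $\scrI$ (along the acceleration near infinity), then invokes strong asymptotic predictability together with an adaptation of Wald's Proposition~8.3.13 to argue that the causal Killing orbits carry leaves of $\outside{\man}$ onto $\EH(\scrI)$; transporting one of those acausal curves along the flow produces an acausal curve on $\EH(\scrI)$ with an endpoint on $\scrI$, contradicting~\eq{EH-cap-I=0}. Your route sidesteps all of this machinery: the Killing-invariance of $\outside{\man}$ --- which is indeed immediate once one notes that $\outside{\man}$ is characterised entirely by the $\chi$-invariant data $(\met_{ab}, u_a)$ --- forces the single leaf $\EH^{+}(\scrI) \subset \partial\outside{\man}$ to be Killing-invariant, so $\chi^a$ is tangent to it and $(u\cdot\chi) = 0$ there, directly clashing with $q \in \scrX$. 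Your approach is more elementary and drops the reliance on strong asymptotic predictability; the paper's, by contrast, never explicitly asserts invariance of $\outside{\man}$ and instead exploits the assumed causality of $\chi^a$ on the horizon to actually push curves across it. Two minor remarks: your ``alternative'' via Proposition~\ref{noCTC} is not genuinely independent, since it still rests on the same invariance of $\EH^{+}(\scrI)$ and the tangency contradiction already fires before any causal-curve argument is needed; and your claim that $\scrX$ is path-connected is, strictly speaking, slightly stronger than the paper's definition (only $\scrX \cup \scrI$ is declared connected), though this is easily patched using $\outside{\man} \subseteq \scrX$ from Proposition~\ref{<M>-X=0}.
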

	\begin{proof}
We have already argued that Proposition~\ref{<M>-X=0} implies $\outside{\man} \subseteq \scrX$. Suppose the stronger result $\outside{\man} \subset \scrX$ holds, so that $\scrX \setminus \outside{\man} \neq \emptyset$ contrary to what is claimed above. Then $\outside{\man}$ ends inside $\scrX$ and~\emph{event horizon(s) $\EH(\scrI)$ must form inside $\scrX$ to mark the end of $\outside{\man}$ in $\scrX$} (recall, from~\eq{bM}, that the event horizons are the only boundaries of $\outside{\man}$ that are actually part of the spacetime). Consequently, the Killing vector $\chi^a$ must be causal everywhere on $\EH(\scrI) \subset \scrX$, in addition to being causal everywhere in $\outside{\man}$. Note that we are not assuming that both $\EH^{\pm}(\scrI)$ are non-empty, but at least one must be in order for $\EH(\scrI)$ to be non-empty.

Now, we already noted the existence of acausal curves from~\emph{any} event in $\outside{\man}$ which can be `connected' to the boundary at infinity $\scrI$. Among the infinitely many such acasual curves, some will also respect the isometry generated by the Killing vector $\chi^a$. For example, due to the asymptotic boundary conditions~\eq{X:bc}, the acceleration $a^a$ of the {\ae}ther congruence~\eq{def:acc} tends to `align' with the canonical radial direction in the `asymptotic region'~\cite{Barausse:2011pu}. However, since the canonical radial vector `points towards infinity' by definition, at least in a suitably chosen neighborhood of $\scrI$, integral curves along the acceleration (or along its unit, to be more precise) can `reach $\scrI$' as well. Since the acceleration respects the isometry generated by the Killing vector $\chi^a$, at least in a suitably chosen neighborhood of any point at infinity $\pinf_p \in \scrI$, one may construct an isometry-preserving acausal curve $\lambda(\upsigma) \subset \outside{\man} \cup \scrI$ with $\upsigma \in [0, 1]$, \eg along the integral curves of the (unit vector along the) acceleration, such that $\lambda(0) = p \in \Sigma_p \subset \outside{\man}$ and $\lambda(1) \in \scrI$. Furthermore, due to the isometry-preserving nature of $\lambda(\upsigma)$, every member of the family of curves $\uppi_{\chi}(\uptau, \lambda(\upsigma))$ generated by the group action of the isometry, is acausal in $\outside{\man}$ for every value of the group parameter $\uptau$. In particular, since $\chi^a$ is causal in $\outside{\man}$ as argued in the preceding paragraph, we may choose the group parameter $\uptau$ such that the acausal curve $\uppi_{\chi}(\uptau, \lambda(\upsigma))$ resides in a leaf in the future (past) of $\Sigma_p$ for a positive (negative) value of $\uptau$. Finally, since $\scrI$ itself is a complete Killing orbit in the conformal extension of the spacetime, we have $\uppi_{\chi}(\uptau, \lambda(1)) \in \scrI$ for all values of the group parameter $\uptau$.

By appealing to the assumed strongly asymptotically predictable nature of the region $\outside{\man}$ and employing the causal orbits of the Killing vector $\chi^a$ in $\outside{\man} \cup \EH(\scrI)$, one may show by a direct adaptation of Proposition 8.3.13 of \cite{Wald:1984rg} (page 208) that every pair of leaves in $\outside{\man}$ are homeomorphic to each other as well as to the leaf (leaves) $\EH(\scrI)$.\footnote{Actually, since the orbits of the Killing vector $\chi^a$ are smooth curves by assumption, we have a diffeomorphism between every pair of leaves in $\outside{\man}$, which is stronger statement. However, this observation will not be needed in the main proof.} Therefore,~\emph{any} event $p \in \outside{\man}$ can be mapped to some event $q \in \EH(\scrI)$ via the map $\uppi_{\chi}$. Moreover, we may find some event $p \in \outside{\man}$ in some suitably chosen neighborhood of $\scrI$ through which there exists an isometry-preserving acausal curve $\lambda(\upsigma) \in \outside{\man} \cup \scrI$ as discussed in the preceding paragraph. By transporting $\lambda(\upsigma)$ along Killing orbits by the group action in the sense discussed above, one may then generate a curve $\uppi_{\chi}(\uptau_0, \lambda(\upsigma))$ for some $\uptau_0$, such that $\uppi_{\chi}(\uptau_0, \lambda(\upsigma))$ is acausal, resides on (one of the leaves of) $\EH(\scrI)$, and yet $\uppi_{\chi}(\uptau_0, \lambda(1)) \in \scrI$. This is however a direct contradiction of~\eq{EH-cap-I=0}. Therefore, $\outside{\man}$ cannot be a proper subset of $\scrX$.
\end{proof}

We are finally in a position to state and prove the central theorem of this section:
	\begin{theorem}[Local characterization of a universal horizon]\label{def:UH-local}
$(u\cdot\chi) = 0$ and \mbox{$(a\cdot\chi) \neq 0$} form a set of necessary and sufficient local conditions for a hypersurface to be a universal horizon.
	\end{theorem}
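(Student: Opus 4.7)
The plan is to exploit the fact that Propositions \ref{<M>-X=0} and \ref{X-<M>=0} together yield the identification $\outside{\man} = \scrX$, from which Theorem \ref{def:UH-local} follows almost immediately. Once this identification is in place, the boundary of $\outside{\man}$ in $\man$ coincides with the boundary of $\scrX$, and via \eqref{bM} this is exactly the event horizon $\EH(\scrI) = \EH^{+}(\scrI) \cup \EH^{-}(\scrI)$. So the whole theorem reduces to the observation that this boundary is precisely the locus where $(u\cdot\chi)$ vanishes through a leaf.

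For the sufficiency direction, I would start from a hypersurface on which $(u\cdot\chi) = 0$ and $(a\cdot\chi) \neq 0$. Proposition \ref{u.X=0:leaf} immediately promotes it to a leaf $\UH$ of the preferred foliation that cannot be conformally extended to $\scrI$, and Corollary \ref{UH-notin-<M>} gives $\UH \cap \outside{\man} = \emptyset$. What remains is to show that $\UH \subseteq \partial\outside{\man}$. For this I would use equation \eqref{D(u.X)} together with $(a\cdot\chi)\neq 0$ to argue that $(u\cdot\chi)$ changes sign across $\UH$, so every event of $\UH$ is a limit point of $\{(u\cdot\chi) < 0\} = \scrX = \outside{\man}$. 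Since leaves of the ordered foliation are mutually disjoint, and since $\EH^{\pm}(\scrI)$ are themselves leaves by \eqref{def:dl-Jpm}, $\UH$ must coincide with one of the components of $\EH(\scrI)$.

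For the necessity direction, I would take an event $p \in \EH^{\pm}(\scrI) = \partial\outside{\man} = \partial\scrX$. Because $\scrX$ is, by construction, the \emph{maximal} connected open subset of $\man \cup \scrI$ on which $\chi^a$ is causal and future directed (i.e.\ $(u\cdot\chi) < 0$), continuity of both $u_a$ and $\chi^a$ together with maximality forces $(u\cdot\chi) = 0$ on $\partial\scrX$. The condition $(a\cdot\chi)\neq 0$ is then invoked as the standing technical assumption introduced just after Proposition \ref{u.X=0:leaf}.

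The substantive work has already been done in the previous propositions, so the theorem is largely a packaging statement; the main thing to watch is that the sufficiency argument must genuinely produce a \emph{boundary} point of $\scrX$ (not merely a point outside it), which is why the sign-flip of $(u\cdot\chi)$ across $\UH$ provided by \eqref{D(u.X)} and $(a\cdot\chi)\neq 0$ is essential. The only subtlety I anticipate is the exclusion of `stray' components of $\partial\scrX$ inside $\man$ on which $(u\cdot\chi) = 0$ but $(a\cdot\chi) = 0$: such loci would fail to be leaves and would correspond to degenerate configurations that the technical assumption $(a\cdot\chi)\neq 0$ on $\{(u\cdot\chi) = 0\}$ is precisely designed to rule out.
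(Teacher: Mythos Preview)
Your proposal is correct and follows essentially the same route as the paper: both reduce the theorem to the identity $\outside{\man} = \scrX$ obtained from Propositions~\ref{<M>-X=0} and~\ref{X-<M>=0}, then read off $\EH(\scrI) = \partial\outside{\man} = \partial\scrX$ and use maximality of $\scrX$ to force $(u\cdot\chi)=0$ on that boundary. One small overstatement to tighten: you write $\{(u\cdot\chi)<0\} = \scrX$, but $\scrX$ is only the maximal \emph{connected} component of that set adjoining $\scrI$; your sign-flip argument shows $\UH$ borders \emph{some} region with $(u\cdot\chi)<0$, not necessarily $\scrX$ itself, which is why (as the paper notes just after the proof) only the outermost such leaf is the genuine event horizon.
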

	\begin{proof}
By Definition~\ref{def:stationaryM}, along with the results of Propositions~\ref{<M>-X=0} and~\ref{X-<M>=0} we have
	\begin{equation}\label{X=<M>}
	\outside{\man} = \scrX~.
	\end{equation}
Therefore the boundaries of $\outside{\man}$ and $\scrX$ in $\man$, $\partial\outside{\man}$ and $\partial\scrX$ respectively, are identical
	\begin{equation*}
	\partial\outside{\man} = \partial\scrX~.
	\end{equation*}
As has been discussed earlier $\partial\outside{\man}=\EH(\scrI)$. On the other hand, $\scrX$ is the maximal open set in $\man$ connected to $\scrI$ where the Killing vector is causal and future directed; hence we must have $(u\cdot\chi) = 0$ on $\partial\scrX$. In other words, $(u\cdot\chi) = 0$ is an appropriate local characterization for event horizons under the assumptions of Proposition \ref{u.X=0:leaf}, {\em i.e.}~as long as $(a\cdot\chi) \neq 0$ everywhere on it.
	\end{proof}

It also seems reasonable that a very similar local definition of the universal horizon should exist for other kinds of asymptotic behaviour of the spacetime \eg solutions with maximally symmetric asymptotics~\cite{Bhattacharyya:2014kta}, Lifshitz asymptotics~\cite{Griffin:2012qx,Janiszewski:2014iaa} etc. On the other hand, note that the {\ae}ther defines a geodesic if the acceleration~\eq{def:acc} vanishes globally; this is true, for instance, in the~\emph{projectable} version of Ho\v{r}ava gravity (see \eg \cite{Horava:2009uw,Sotiriou:2009bx,Sotiriou:2009gy}). For such solutions $(u\cdot\chi) = -1$ globally (for the asymptotic boundary conditions of~\eq{X:bc} assumed here) and, hence, $(u\cdot\chi)$ cannot vanish anywhere. By Theorem~\ref{def:UH-local}, such spacetimes cannot admit universal horizons.

Theorem~\ref{def:UH-local} guarantees that a universal horizon should be stationary (\ie should contain the Killing vector $\chi^a$ as one of the generators of the horizon), much like event horizons in stationary spacetimes in general relativistic theories are Killing horizons. In fact, it is instructive to compare the local condition $(u\cdot\chi) = 0$ with the condition $\chi^2 = 0$ which identifies Killing horizons in general relativity. Indeed, one can have multiple leaves of the foliation on which the condition $(u\cdot\chi) = 0$ holds \eg as in the solutions presented in \cite{Barausse:2011pu}, in the same way in general relativity where one can have multiple Killing horizons (\eg in Reissner-Nordstr\"om or Kerr solutions). In both cases, the outermost of these can serve as an event horizon.

The seemingly technical assumption $(a\cdot\chi) \neq 0$ that goes together with the local characterization of universal horizons in Theorem~\ref{def:UH-local} has an important physical significance. It has been argued in \cite{Cropp:2013sea,Berglund:2012fk} that $(a\cdot\chi)$ plays the role of the surface gravity associated with a universal horizon. We will now demonstrate that a non-zero $(a\cdot\chi)$ is always constant on a universal horizon where $(u\cdot\chi) = 0$ by Theorem~\ref{def:UH-local}. This establishes a further strong parallel with the so called~\emph{zeroth law of black hole thermodynamics} (see \cite{Bardeen:1973gs}).\footnote{See \cite{Berglund:2012bu,Bhattacharyya:2014kta} for a derivation of the laws of black hole mechanics for spherically symmetric Einstein-{\ae}ther/Ho\v{r}ava gravity solutions.} The acceleration is built out of the {\ae}ther and the metric and, as such, it has vanishing Lie derivative along $\chi^a$. In particular, the condition analogous to~\eq{D(u.X)} is
	\begin{equation}\label{vD(a.X)}
	\LieD_{\chi}a_a = 0 \qquad\Leftrightarrow\qquad \vec{\nabla}_a(a\cdot\chi) = (u\cdot\chi)\LieD_{u}a_a~.
	\end{equation}
Clearly, if $(a\cdot\chi) \neq 0$, then it is constant on a leaf defined by $(u\cdot\chi) = 0$ (\ie a universal horizon according to Theorem~\ref{def:UH-local}). Invoking the parallel with surface gravity, a non-vanishing $(a\cdot\chi)$ thus characterizes a~\emph{non-degenerate universal horizon} which is analogous to a non-degenerate Killing horizon.

Now that we have some insight into the meaning of the more technical $(a\cdot\chi) \neq 0$ condition it is worth exploring a bit further the implications of the $(u\cdot\chi) = 0$ condition itself. By projecting the identity in~\eq{D(u.X)} along a leaf, one obtains 
	\begin{equation}\label{vD(u.X)}
	\vec{\nabla}_a(u\cdot\chi) = a_a(u\cdot\chi)~.
	\end{equation}
Combining eqs.~\eq{def:acc} and \eq{vD(u.X)}, one has
	\begin{equation}\label{reln:N=f(T)(u.X)}
	N = f(T)(u\cdot\chi)~,
	\end{equation}
where $f(T)$ is some undetermined function of (some choice of) the preferred time function $T$ and $N$ is the lapse of the foliation. Since $(u\cdot\chi) = 0$ at the universal horizon, either $N$ has to vanish as well, or $f(T)$ has to diverge there. In a theory where the foliation leaves are uniquely labeled and the time-reparametrizations of~\eq{def:reparam} are not allowed, $f(T)$ is actually fixed by asymptotics. In particular, the asymptotic conditions $(u\cdot\chi) = -1$ (recall~\eq{X:bc}) and $N = 1$ imply $N = -(u\cdot\chi)$ for any leaf that reaches $\scrI$. Then, by continuity, the lapse would have to vanish on the universal horizon rendering the foliation singular. Hence, regular universal horizons cannot exist in theories that do not enjoy reparametrization invariance. When time reparametrizations are instead allowed, a divergent $f(T)$ is not worrisome. In the time parametrization that satisfies the asymptotic conditions, {\em i.e.}~in the time of a preferred observer at infinity, the lapse would have to vanish. But a suitable time reparametrization would lead to a non-zero lapse. Recall that both $N$ (see~\eq{reparam:N}) and $f(T)$ will transformation under~\eq{def:reparam}, leaving $(u\cdot\chi)$ unaffected.

There is one more issue that needs to be addressed before our local characterisation can be considered meaningful. Namely, it should not depend on which causal Killing vector one chooses to use, else there would be an ambiguity regarding this choice. Let $\xi^a$ denote a causal Killing vector which is not proportional to $\chi^a$. Being a Killing vector, $\xi^a$ satisfies an exact analogue of~\eq{D(u.X)}. Consequently, analogous to~\eq{reln:N=f(T)(u.X)}, we have $N = g(T)(u\cdot\xi)$ for some function $g(T)$ possibly different from $f(T)$. However, since by assumption the foliation is ordered but not labeled, and both $(u\cdot\chi)$ and $(u\cdot\xi)$ are invariant under the time-reparametrizations in~\eq{def:reparam}, the above must imply $(u\cdot\xi) = C_0(u\cdot\chi)$ for some non-zero constant $C_0$. In other words, $(u\cdot\xi)$ must indeed vanish whenever $(u\cdot\chi)$ vanishes.

Based on the above observations, one can derive some very useful properties of Killing vectors in foliated spacetimes. We may begin by noting that the linear combination $\xi^a - C_0\chi^a$, which itself is a Killing vector, must be orthogonal to the {\ae}ther, and hence acausal (by definition), everywhere. Therefore,~\emph{for every causal Killing vector $\xi^a$ linearly independent of $\chi^a$, there exists an acausal Killing vector $\phi^a$ such that}
	\begin{equation}\label{causal-xi}
	\xi^a = C_0\chi^a + \phi^a~, \qquad (u\cdot\phi) = 0~, \qquad C_0 = \mathrm{constant}~.
	\end{equation}
By the above relation, one may always `subtract the $\chi^a$-component' of any such causal Killing vector, thereby reducing it to an acausal Killing vector. Consequently,~\emph{it suffices to regard $\chi^a$ as the only causal Killing vector in a stationary spacetime with a preferred foliation}, and~\emph{the existence of any other linearly independent causal Killing vector signifies the existence of an additional symmetry generated by an acausal Killing vector}. Furthermore, since the {\ae}ther also satisfies the symmetry generated by $\phi^a$, manipulations analogous to those leading to~\eq{D(u.X)} yields
	\begin{equation}\label{D(u.F)}
	\LieD_{\phi}u_a = 0 \qquad\Leftrightarrow\qquad (a\cdot\phi) = 0~.
	\end{equation}
The content of conditions~\eq{causal-xi} and~\eq{D(u.F)} can be summarized as follows:~\emph{every acausal Killing vector $\phi^a$ is also orthogonal to the acceleration of the {\ae}ther and therefore can only span the two-dimensional subspace orthogonal to both the {\ae}ther and its acceleration}.

Given two Killing vectors $\chi^a$ and $\phi^a$, with the former being causal and the latter acausal without any loss in generality, a standard method to generate (potentially new) Killing vectors is by considering their commutator; this is because, $\psi^a = \LieD_{\chi}\phi^a$ if non-zero is a Killing vector. However, since $\phi^a$ is acausal and the {\ae}ther respects stationarity, we may immediately conclude that $\psi^a$ is acausal as well
	\begin{equation}\label{acausal:[X,F]}
	\LieD_{\chi}(u\cdot\phi) = (u\cdot\psi) = 0~, \qquad (a\cdot\psi) = 0~,
	\end{equation}
with the second condition being a direct consequence of~\eq{D(u.F)}. Note that our conclusions remain trivially valid if $\chi^a$ were to commute with $\phi^a$. This particular observation will be useful in the next section. Finally, given a pair of acausal Killing vectors $\phi^a$ and $\psi^a$ whose symmetries are respected by the {\ae}ther, their commutator $\theta^a = \LieD_{\phi}\psi^a$ is also acausal, because
	\begin{equation}\label{acausal:[F_1,F_2]}
	\LieD_{\phi}(u\cdot\psi) = (u\cdot\theta) = 0~, \qquad (a\cdot\theta) = 0~,
	\end{equation}
the second condition, once again, being a consequence of~\eq{D(u.F)}. Summing up the observations in eqns.~\eq{causal-xi}-\eq{acausal:[F_1,F_2]}, we may also note that~\emph{the actions of acausal Killing vectors are always confined within the leaves of the foliation}. These observations can be utilized to study the algebra of symmetries compatible with foliated spacetimes $(\man, \Sigma, \met)$. We leave this for future investigations.

\section{Existence of Killing horizons in stationary axisymmetric spacetimes with a (future) universal horizon}\label{sec:KH}
So far in this work, we have studied the causal structure of a spacetime $\man$ with a metric $\met_{a b}$ and a preferred foliation structure $\Sigma$. In particular, we focused on those issues of causality which are strongly tied to the preferred foliation, and essentially argued that the spacetime metric $\met_{a b}$ is of little relevance when it comes to the global causal structure of $\man$. In fact, the last observation applies equally well to any of the speed-$c$ metrics $\met_{a b}^{(c)}$ as defined in~\eq{def:speed-c-met}. A specific example of the irrelevance of the metrics is provided by the local characterization of a universal horizon (Theorem~\ref{def:UH-local}), which only involves the inner product between the {\ae}ther one-form $u_a$ and $\chi^a$, the Killing vector generating stationarity, without making any reference to any metric whatsoever. One may compare the above situation with that in general relativity where a stationary event horizon is a Killing horizon (see \eg \cite{Carter:1969zz,Carter:1972}) and the latter is an intrinsically metric dependent notion. In this section we will explore the existence and the role of Killing horizons within our framework.

We will restrict our attention to spacetimes that are not only stationary but also axisymmetric, as this significantly simplifies calculations. In general relativity the celebrated Hawking rigidity theorem~\cite{Hawking:1971vc,Hawking:1972} establishes that under certain reasonable assumptions, stationary black holes in general relativity must be axisymmetric. However, the assumptions that underlie Hawking's theorem include the Weak Energy Condition for matter fields and the existence of a bifurcation surface. These assumptions are not necessarily satisfied outside the framework of general relativity and it is not clear whether Hawking's theorem can be generalized. Hence, in our framework, axisymmetry will have to be an extra assumption. The fact that quiescent, rotating black holes are expected to be axisymmetric provides the necessary motivation for making such an assumption. It is worth noting that, in the special case of (the two-derivative truncated version of) Ho\v{r}ava gravity, axisymmetric solutions have been found and they naturally extend the much studied spherically symmetric solution space (see \cite{Barausse:2012ny,Barausse:2012qh,Barausse:2013nwa}). Additionally, certain stationary axisymmetric solutions of Ho\v{r}ava gravity in $(1 + 2)$-dimensions are already known from \cite{Sotiriou:2014gna}, so the following analysis should also pave the way towards a comparison of these solutions with their $(1 + 3)$ dimensional counterparts.

We begin by establishing the general properties of a stationary, axisymmetric, foliated spacetime $(\man, \Sigma, \met)$. Let us denote the stationarity generating Killing vector by $\chi^a$ as before, and let $\varphi^a$ be the Killing vector which generates axisymmetry. As explained previously, especially in conditions~\eq{causal-xi} and~\eq{D(u.F)}, $\varphi^a$ can be taken to be acausal without any loss in generality, \ie
	\begin{equation}\label{vf:acausal}
	(u\cdot\varphi) = (a\cdot\varphi) = 0~.
	\end{equation}
These conditions thus also naturally avoid any violation of causality (recall Proposition~\ref{noCTC}), since $\varphi^a$ has closed orbits as a generator of axisymmetry. Furthermore, since we wish to consider asymptotically flat spacetimes, the Killing vectors $\chi^a$ and $\varphi^a$ commute (see \cite{Carter:1970ea,Carter:1972}), \ie
	\begin{equation}\label{[X,vf]=0}
	\LieD_{\chi}\varphi^a = \LieD_{\varphi}\chi^a = 0~.
	\end{equation}
Now, consider a foliated spacetime arising as a solution of a theory with a preferred foliation (\eg Ho\v{r}ava gravity). In accordance with our previous discussions, assume that the spacetime has an open region $\outside{\man}$ with a trivially foliated flat end, satisfies stationarity and axisymmetry, and admits a future universal horizon. By Theorem~\ref{def:UH-local}, such a future universal horizon is characterized by a leaf on which $(u\cdot\chi) = 0$, while $\varphi^a$ plays no role in this definition. Assume, furthermore, that some matter field propagates in such a background which couples minimally to a speed-$c$ metric $\met_{a b}^{(c)}$ of~\eq{def:speed-c-met} for some fixed $c$ (\eg $c = 1$ for concreteness). Such a matter field will then sense an effectively quasi-relativistic causal structure of the spacetime dictated by the propagation cones of $\met_{a b}^{(c)}$ instead of the more fundamental `non-relativistic' causal structure dictated by the preferred foliation, due to the second-order equations of motion/dispersion relations arising from the matter field's minimal coupling with the speed-$c$ metric. Therefore, quasi-relativistic features of the spacetime geometry governed by $\met_{a b}^{(c)}$ are expected to play a significant role in the propagation of such matter fields; \eg~\emph{null (event) horizons} should define the regions of the spacetime which can be causally accessed and/or influenced by such matter fields.

This last fact becomes more sharp by the existence of the Killing vector $\chi^a$ with all its assumed properties. More specifically, with respect to~\emph{every} speed-$c$ metric, $\chi^a$ is timelike asymptotically due to the boundary condition~\eq{X:bc} while it is spacelike on the universal horizon due to being orthogonal to the timelike {\ae}ther there. Therefore in particular, $\chi^a$ must turn null somewhere in the bulk of the spacetime with respect to the speed-$c$ metric $\met_{a b}^{(c)}$ coupling minimally to the matter field, and the surface $\met_{a b}^{(c)}\chi^a\chi^b = 0$ must remain~\emph{outside} the universal horizon by the assumed smoothness of the background spacetime. In the special case of spherically symmetric solutions of \cite{Barausse:2011pu,Bhattacharyya:2014kta}, such a surface is also a null hypersurface, making it a Killing horizon of $\met_{a b}^{(c)}$. Consequently, in a static, spherically symmetric and asymptotically flat spacetime with a universal horizon, matter fields coupling minimally to some speed-$c$ metric will see a Killing horizon outside the universal horizon, with the former already acting as an effective causal barrier for such fields.

More generally however, a surface on which $\chi^a$ turns null with respect to $\met_{a b}^{(c)}$ is not a Killing horizon of $\met_{a b}^{(c)}$, but just an ergosurface. This raises the following questions:
	\begin{enumerate}
	\item Does the existence of a universal horizon in a stationary, axisymmetric and asymptotically flat spacetime (with a trivially foliated flat end) necessarily imply the existence of a Killing horizon, at least of some speed-$c$ metric?
	\item If such a Killing horizon does exist, should it necessarily always lies outside the universal horizon?
	\end{enumerate}
It should be stressed that answering these two question is of important physical significance. If the low-energy modes of a given excitation see no Killing horizon before reaching the universal horizon, then one could have a very significant departure from relativistic physics at low energies. 

As a preparation towards tackling the above questions, let us introduce the vector $V^a$, defined by the following linear combination of $\chi^a$ and $\varphi^a$
	\begin{equation}\label{def:V=X+F}
	V^a = \chi^a + W\varphi^a~, \qquad W = -(\chi\cdot\varphi)(\varphi\cdot\varphi)^{-1}~.
	\end{equation}
Due to the acausal nature of $\varphi^a$ (see~\eq{vf:acausal}), the inner products $(\chi\cdot\varphi)$ and $(\varphi\cdot\varphi)$ are the same with respect to any speed-$c$ metric~\eq{def:speed-c-met}, and the same applies to $W$ as well. Note that $V^a$ is not a Killing vector in general, since $W \neq$ constant. Due to the same acausal nature of $\varphi^a$, $V^a$ is orthogonal to $\varphi^a$
	\begin{equation}\label{V.vf=0}
	(V\cdot\varphi) = 0~,
	\end{equation}
by construction, and this holds with respect to any speed-$c$ metric as well. Additionally, we also have the following relations as straight-forward consequences of the relations~\eq{vf:acausal} and~\eq{def:V=X+F}
	\begin{equation}\label{reln:V.stuff}
	(u\cdot V) = (u\cdot\chi)~, \qquad (a\cdot V) = (a\cdot\chi)~.
	\end{equation}
Note that the above relations also do not require any metric since both the {\ae}ther in~\eq{ae:HSO} and the acceleration in~\eq{def:acc} are naturally defined as one-forms, while $V^a$ in~\eq{def:V=X+F} and $\chi^a$ are naturally given as vectors. If we furthermore define the projections of $V^a$ and $\chi^a$ orthogonal to the {\ae}ther as follows
	\begin{equation}\label{def:vecV-vecX}
	\vec{V}^a = \tensor{\pmet}{^a_b}V^b~, \qquad \vec{\chi}^{\,a} = \tensor{\pmet}{^a_b}\chi^b~,
	\end{equation}
which are purely spatial by construction, then the analogous projection of~\eq{def:V=X+F} becomes
	\begin{equation}\label{def:vecV=vecX+F}
	\vec{V}^a = \vec{\chi}^{\,a} + W\varphi^a~.
	\end{equation}
Exploiting the orthogonality of $V^a$ and $\varphi^a$, we then have the following identity for the norms
	\begin{equation}\label{norm:vecV}
	\vec{\chi}^{\,2} = \vec{V}^2 + W^2\varphi^2~.
	\end{equation}
Every norm in the above equation is positive semi-definite, since all the vectors are purely spatial vectors. For essentially the same reason, the norms are unchanged when computed with respect to any speed-$c$ metric, and therefore so is the entire relation in~\eq{norm:vecV}.

Introducing the vector $V^a$ in~\eq{def:V=X+F} is particularly helpful because one can then use theorem 4.2 of \cite{Carter:1972} as well as its corollary (the latter may be referred to as `Carter's rigidity theorem') in order to establish the nature of the hypersurface where $\met_{a b}V^aV^b = 0$. In particular, if the commuting Killing vectors $\chi^a$ and $\varphi^a$ (see~\eq{[X,vf]=0}) satisfy the~\emph{circularity conditions}\footnote{The theorem also requires that the open region $\outside{\man}$ with a trivially foliated flat end be~\emph{simply connected}. We assume this to hold in what follows on physical grounds.}
	\begin{equation}\label{def:circ}
	\chi_{[a}\varphi_b\nabla_c\varphi_{d]} = 0~, \qquad \varphi_{[a}\chi_b\nabla_c \chi_{d]} = 0~,
	\end{equation}
then the hypersurface $\met_{a b}V^aV^b = 0$ is null (with respect to $\met_{a b}$), and $V^a$ is a Killing vector on the said hypersurface (equivalently, $W = $ constant on the $\met_{a b}V^aV^b = 0$ hypersurface), so that the hypersurface is a Killing horizon. A very important and relevant aspect of the above result is its purely geometrical nature, appealing neither to any equations of motion, nor to any specific energy conditions. Additionally, even though the above result is specifically stated with respect to the metric $\met_{a b}$, it can be generalized for any speed-$c$ metric. In particular, the analogue of~\eq{def:circ} is obtained by replacing the Killing one-forms $\chi_a$ and $\varphi_a$ in~\eq{def:circ} with those obtained by `lowering the indices' of the corresponding Killing vectors with the appropriate speed-$c$ metric (\eg see~\eq{circ:vf:2} below). Similarly, for scalar relations, norms and inner-products needs to be computed with respect to the same speed-$c$ metric, \eg replace $\met_{a b}V^aV^b$ with $\met_{a b}^{(c)}V^aV^b$ etc.

Suppose now that we are given a non-trivial stationary and axisymmetric spacetime, with an open region having a trivially foliated flat end and admitting a future universal horizon, where the Killing vectors satisfy the circularity conditions in~\eq{def:circ}. The asymptotic boundary conditions of~\eq{X:bc}, taken together with the identities~\eq{reln:V.stuff} and~\eq{norm:vecV}, then imply $(u\cdot V) \to -1$ and $\pmet_{a b}V^aV^b \to 0$ asymptotically, while the existence of a future universal horizon means $(u\cdot V) = 0$ there by~\eq{reln:V.stuff} and Theorem~\ref{def:UH-local}. Therefore, $V^a$ is timelike asymptotically but turns spacelike on the universal horizon, just like $\chi^a$. By the smoothness of the background, $V^a$ must turn null somewhere and the surface $\met_{a b}V^aV^b = 0$ must exist outside the universal horizon. Since the Killing vectors satisfy the circularity conditions by assumption, the theorems of \cite{Carter:1972} discussed above imply that the surface $\met_{a b}V^aV^b = 0$ is a Killing horizon with respect to the metric $\met_{a b}$. Once again, these statements have straightforward generalizations to all speed-$c$ metrics.

The above show that a Killing horizon of any speed-$c$ metric will lie outside the universal horizon, provided it does form. More importantly, they also show that a Killing horizon of a certain speed-$c$ metric will always form in a spacetime with suitable properties provided that the Killing vectors in that spacetime satisfy the circularity relations~\eq{def:circ} with respect to the same speed-$c$ metric. Whether this will be the case or not will depend on the dynamics of the gravity theory in question. This can be seen rather straightforwardly by suitably re-expressing the circularity relations. We will explicitly work with the metric $\met_{a b}$ for most part, but as before, our results will have direct generalization for any speed-$c$ metric. 

We may begin by noting that by the definitions (see \cite{Wald:1984rg}) of the~\emph{twists} of the Killing vectors
	\begin{equation}\label{def:KV-twists}
	\varpi^a = \varepsilon^{a b c d}\varphi_b	(\nabla_c\varphi_d)~, \qquad \omega^a = \varepsilon^{a b c d}\chi_b(\nabla_c\chi_d)~,
	\end{equation}
the circularity conditions~\eq{def:circ} can be equivalently expressed as (note: each Killing vector is orthogonal to its own twist by~\eq{def:KV-twists})
	\begin{equation}\label{circ:twist}
	(\varpi\cdot\chi) = 0~, \qquad (\omega\cdot\varphi) = 0~.
	\end{equation}
Now, a standard identity in differential geometry involving a pair of commuting Killing vectors and their twists, valid irrespective of the circularity conditions~\eq{circ:twist}, states (see \eg theorem 7.1.1 of \cite{Wald:1984rg})
	\begin{equation}\label{circ:twist-R}
	\fl \quad \varepsilon_{a b c d}\chi^b\varphi^c\tensor{\Ric}{^d_e}\varphi^e = \nabla_a\left[-\frac{1}{2}(\varpi\cdot\chi)\right], \quad \varepsilon_{a b c d}\varphi^b\chi^c\tensor{\Ric}{^d_e}\chi^e = \nabla_a\left[-\frac{1}{2}(\omega\cdot\varphi)\right],
	\end{equation}
where $\Ric_{a b}$ is the Ricci tensor. Therefore, any stationary and axisymmetric spacetime satisfying 
	\begin{equation}\label{circ:R}
	\varepsilon_{a b c d}\chi^b\varphi^c\tensor{\Ric}{^d_e}\varphi^e = 0~, \qquad \varepsilon_{a b c d}\varphi^b\chi^c\tensor{\Ric}{^d_e}\chi^e = 0~,
	\end{equation}
guarantees the circularity conditions~\eq{circ:twist} -- and hence~\eq{def:circ} -- globally, since~\eq{circ:twist} holds at least on the rotation axis where $\varphi^a$ vanishes (see \cite{Wald:1984rg,Carter:1972,Carter:1970ea}). The conditions~\eq{def:circ},~\eq{circ:twist} and~\eq{circ:R} are thus all physically equivalent. On the other hand, the conditions~\eq{circ:R} are directly related to the dynamics of the underlying theory by virtue of the (generalized) Einstein's equations.

Stationary and axisymmetric~\emph{vacuum} solutions in general relativity satisfy the conditions~\eq{circ:R} trivially, and a similar conclusion can be drawn for stationary and axisymmetric~\emph{electro-vacuum} solutions in general relativity with a little more effort (see \cite{Wald:1984rg,Carter:1972}). More generally however, in a theory with a different matter content, the vectors $\varepsilon_{a b c d}\chi^b\varphi^c\tensor{\mathrm{T}}{^d_e}\varphi^e$ and $\varepsilon_{a b c d}\varphi^b\chi^c\tensor{\mathrm{T}}{^d_e}\chi^e$ built out of the matter stress tensor $\mathrm{T}_{a b}$ may not vanish identically everywhere in a stationary axisymmetric spacetime. Consequently, the conditions~\eq{circ:R} will fail to hold by Einstein's equations, and such geometries will not satisfy the circularity conditions~\eq{def:circ} globally.

It is worth stressing at this point that, if the circularity conditions fail to hold, then there is no coordinate system in which a stationary, axisymmetric metric will take the usual (Papapetrou) form, with $g_{t\phi}$ being the only non-vanishing off-diagonal component (see \cite{Wald:1984rg,Papapetrou:1953zz}). Hence one would expect this condition to hold for black holes whose spacetime structure is sufficiently close to those of general relativity. Said otherwise, one can expect significant deviations from general relativity in theories where the circularity conditions do not hold. 

In order to go further one needs to choose a particular theory of gravity. Thus, we will henceforth concentrate on the two-derivative truncated version of Ho\v{r}ava-Lifshitz theory in order to demonstrate that the circularity conditions are not trivially satisfied. We will start by briefly introducing the theory.

In its covariant incarnation, Ho\v{r}ava gravity (see \cite{Germani:2009yt,Blas:2009yd,Jacobson:2010mx}) can be thought of as a scalar-tensor theory of a metric $\met_{a b}$ and a scalar field $T$ that always defines a preferred foliation. The leaves of the foliation, that are the level sets of the scalar field $T$, are constrained to be spacelike everywhere. The two-derivative truncated version of the theory, on which we will focus here, is the most general theory of its kind that is fully invariant under the reparametrizations of $T$ under~\eq{def:reparam}. By exploiting the `Stueckleberg trick' (see \cite{Germani:2009yt,Blas:2009yd,Jacobson:2010mx}, the action for Ho\v{r}ava gravity (modulo boundary terms) truncated up to two-derivatives can be expressed in a covariant manner as follows
	\begin{equation}\label{ac:HL}
	S_{HL} = \frac{1}{16\pi G_{HL}}\int\mathrm{d}^4x\sqrt{-\met}\left[\Ric + \lag \right],
	\end{equation}
where $G_{HL}$ is a dimensionful normalization constant with the same dimensions of the Newton's constant, the scalar curvature piece is the standard Einstein-Hilbert term, and $\lag$ is the most general two derivative Lagrangian for the {\ae}ther given as
	\begin{equation}\label{lag:HL}
	\fl \quad\lag = -c_1(\nabla_a u_b)(\nabla^a u^b) - c_2(\nabla\cdot u)^2 - c_3(\nabla_a u_b)(\nabla^b u^a) + c_4 u^b u^c(\nabla_b u^a)(\nabla_c u_a)~,
	\end{equation}
where $c_1, \cdots, c_4$ are coupling constants. Expressing the action in terms of the {\ae}ther makes the $T$-reparametrization invariance manifest. On the other hand, the {\ae}ther is not the fundamental field in this formulation of the theory. Rather, the action is viewed as a functional of the (inverse) metric $\met^{a b}$ and the scalar field $T$ and the {\ae}ther is given with respect to $T$ by~\eq{ae:HSO}.

The hypersurface orthogonality condition~\eq{ae:HSO} for the {\ae}ther allows us to express the {\ae}ther congruence as
	\begin{equation}\label{def:Kab}
	\nabla_a u_b = -u_a a_b + K_{a b}~,
	\end{equation}
where $K_{a b}$ is a symmetric and purely spatial (\ie orthogonal to the {\ae}ther) rank-two tensor. In particular, the traceless part of $K_{a b}$ captures the shear of the congruence while its trace
	\begin{equation}\label{def:trK}
	K = \met^{a b}K_{a b} = (\nabla\cdot u)~,
	\end{equation}
gives its expansion. From its definition in~\eq{def:Kab}, $K_{a b}$ can also be identified with the extrinsic curvature of the leaves of the foliation (due to their embedding in the spacetime) through
	\begin{equation*}
	\frac{1}{2}\LieD_{u}\pmet_{a b} = K_{a b}~.
	\end{equation*}
Therefore, the expansion $K$ of the {\ae}ther congruence~\eq{def:trK} is also the mean curvature of embedding of the leaves. If the expression~\eq{def:Kab} is now substituted into the Lagrangian~\eq{lag:HL}, one ends up with
	\begin{equation}\label{lag:HL:K}
	\lag = -c_{13}K_{ab}K^{ab} - c_2K^2 + c_{14}a^2~,
	\end{equation}
where the linear combinations of the couplings are defined as follows
	\begin{equation}\label{def:c_ij}
	c_{13} = (c_1 + c_3)~, \qquad c_{123} = (c_1 + c_2 + c_3)~, \qquad c_{14} = (c_1 + c_4)~.
	\end{equation}
Furthermore, a Gauss-Codazzi type decomposition of the scalar curvature $\Rie$ with respect to the preferred foliation also generates terms similar to those already appearing in the Lagrangian~\eq{lag:HL:K}, up to a total derivative. Through the following identification of coefficients
	\begin{equation}\label{ae-HL:dictionary}
	\xi = \frac{1}{1 - c_{13}}~, \qquad \lambda = \frac{1 + c_2}{1 - c_{13}}~, \qquad \eta = \frac{c_{14}}{1 - c_{13}}~,
	\end{equation}
one may then express the action~\eq{ac:HL} in its original form as in \cite{Blas:2009qj,Jacobson:2010mx}.

As already discussed in the Introduction, the complete action of Ho\v{r}ava gravity contains a large number of terms that are higher order in spatial derivatives (in the preferred foliation). The presence of these terms is crucial for renormalizability, but they make calculations intractable. The two-derivative truncation of the theory has the same causal structure thanks to the presence of the instantaneous mode discussed in \cite{Blas:2011ni} and its black hole solutions exhibit all of the features we have discussed above. 
 
Extremizing the action~\eq{ac:HL} with respect to variations of the (inverse) metric $\met^{a b}$ yields the (generalized) Einstein's equations
	\begin{equation}\label{EEq:HL}
	\Ric_{a b} = \aeT_{a b} - \frac{\aeT}{2}\met_{a b}~,
	\end{equation}
where $\aeT_{a b}$ is the~\emph{khronon's stress tensor} obtained by varying the Lagrangian~\eq{lag:HL} with respect to the (inverse) metric, and $\aeT$ is its trace. Variation of the action with respect to the scalar field $T$ gives rise to its equation of motion. However, we will not make use of this equation here. Note that, due to the diffeomorphism invariance of the action~\eq{ac:HL}, the contracted Bianchi identity implies $T$'s equations of motion when the Einstein's equations~\eq{EEq:HL} are satisfied, see \cite{Barausse:2011pu,Jacobson:2011cc}. As a result, the Einstein's equations~\eq{EEq:HL} are sufficient to completely determine/evolve the spacetime with the preferred foliation.

Let us now return to the discussion about the circularity condition for Killing vectors in stationary, axisymmetric configurations. One may compute the quantities $\varepsilon_{a b c d}\chi^b\varphi^c\tensor{\aeT}{^d_e}\varphi^e$ and $\varepsilon_{a b c d}\varphi^b\chi^c\tensor{\aeT}{^d_e}\chi^e$ using the expression for the khronon's stress tensor. If both of these quantities vanish, the Einstein's equations~\eq{EEq:HL} will imply the circularity conditions~\eq{def:circ} or~\eq{circ:twist} via the equivalent identity~\eq{circ:R}. Conversely, the circularity conditions will fail to hold if either or both of $\varepsilon_{a b c d}\chi^b\varphi^c\tensor{\aeT}{^d_e}\varphi^e$ and $\varepsilon_{a b c d}\varphi^b\chi^c\tensor{\aeT}{^d_e}\chi^e$ are non-zero. Clearly, our remaining task is to evaluate these expressions, and this constitutes the `strategy' to examine the validity of the circularity conditions for the Killing vectors in the present context.

It is convenient to decompose the khronon's stress tensor $\aeT_{ab}$ in the preferred frame as follows
	\begin{equation}\label{aeT:decomp}
	\aeT_{a b} = \aeT_{u u}u_a u_b - (u_a\PiH_b + u_b\PiH_a) + \aebT_{a b}~,
	\end{equation}
where the `purely spatial' components (\ie those orthogonal to the {\ae}ther) are defined as
	\begin{equation}\label{def:PiH-aebT}
	\PiH_a = \tensor{\pmet}{_a^c}u^d\aeT_{c d}, \qquad \aebT_{a b} = \tensor{\pmet}{_a^c}\tensor{\pmet}{_b^d}\aeT_{c d}~.
	\end{equation}
From the variation of the Lagrangian~\eq{lag:HL} with respect to the metric, the individual `components' of the decomposition in~\eq{aeT:decomp} can be computed. To begin with, the `energy density' $\aeT_{u u}$ with respect to the preferred frame is given by
	\begin{equation}\label{Tuu}
	\aeT_{u u} = \frac{\lag}{2} + c_{14}(\vec{\nabla}\cdot a)~.
	\end{equation}
Next, by projecting out the Einstein's equations~\eq{EEq:HL} in a similar fashion as definition~\eq{def:PiH-aebT} of $\PiH_a$, one has\footnote{It can be shown (see \eg \cite{Donnelly:2011df}) that~\eq{Ceqn:p} is also the momentum constraint equation in a Hamiltonian formulation of Ho\v{r}ava gravity adapted to the preferred foliation.}
	\begin{equation}\label{Ceqn:p}
	(1- c_{13})\vec{\nabla}_c\tensor{K}{^c_a} = (1 + c_2)\vec{\nabla}_a K~.
	\end{equation}
Making use of the above, the `cross components' $\PiH_a$ turn out to take the following form
	\begin{equation}\label{PiH}
	\PiH_a = \frac{c_{123}}{1 - c_{13}}\vec{\nabla}_a K~.
	\end{equation}
Since $\PiH_a$ is a purely spatial gradient and respects the Killing symmetry generated by the purely spatial Killing vector $\varphi^a$, we have
	\begin{equation}\label{PiH.vf=0}
	\LieD_{\varphi}K = 0 \qquad\Leftrightarrow\qquad \PiH\cdot\varphi = 0~.
	\end{equation}
Finally, the completely spatial part $\aebT_{a b}$ of the khronon's stress tensor is given by
	\begin{equation}\label{Tij}
	\fl \quad \aebT_{a b} = \left[c_2\nabla_c[Ku^c] + \frac{\lag}{2}\right]\pmet_{a b} - c_{14}a_a a_b + c_{13}\left[KK_{a b} + \LieD_{u}K_{a b} - 2\tensor{K}{_a^c}K_{b c}\right]~.
	\end{equation}
For future convenience, let us also decompose the twist $\varpi_a$ (see~\eq{def:KV-twists}) of the Killing vector $\varphi^a$ into its components along and perpendicular to the {\ae}ther hypersurfaces as follows
	\begin{equation}\label{dec:twistf}
	\varpi^a = \varpi^{(3)}u^a + \vec{\varpi}^a~,
	\end{equation}
where the scalar $\varpi^{(3)}$ and the purely spatial vector $\vec{\varpi}_a$ are defined as follows
	\begin{equation}\label{dec:twistf:comps}
	\varpi^{(3)} = \varepsilon^{a b c d}\varphi_a(\vec{\nabla}_b\varphi_c)u_d~, \qquad \vec{\varpi}^a = 2\varepsilon^{a b c d}\varphi_b K_{c e}\varphi^e u_d~.
	\end{equation}
In particular, $\varpi^{(3)}$ is the twist of $\varphi^a$ on each leaf of the foliation. In terms of the above, a direct computation yields
	\begin{equation}\label{circ:vf:1}
	\varepsilon_{a b c d}\chi^b\varphi^c\tensor{\aeT}{^d_e}\varphi^e = \nabla_a\left[-\frac{c_{13}}{2}(\vec{\varpi}\cdot\chi)\right]~.
	\end{equation}
It should be clear that the right hand side of this equation does not vanish for $c_{13}\neq 0$ without imposing the further condition $\vec{\varpi}\cdot\chi=$constant. Hence, the circularity condition for $\varphi^a$ is not trivially satisfied. Since $\varphi^a$ vanishes on the axis of rotation, $\vec{\varpi}\cdot\chi$ has to vanish there are well, and the requirement for the circularity condition to hold reduces to $\vec{\varpi}\cdot\chi=0$. To get a bit more insight of what this further condition implies for the foliation, one can combine eqs.~\eq{circ:twist-R} and \eq{circ:vf:1}, to obtain the relation
	\begin{equation}\label{circ:vf:2}
	\varpi^{(3)}(u\cdot\chi) + (1 - c_{13})(\vec{\varpi}\cdot\chi) = 0~,
	\end{equation}
where we have again used the fact that $\varphi^a$ vanishes on the rotation axis. Using this relation, it becomes clear that $\vec{\varpi}\cdot\chi=0$ implies $\varpi^{(3)}=0$, and hence, $\varphi$ would have to always reside in the foliation leaves and actually be normal to a set of surfaces that foliate the leaf.

Note that the discussion above has been conducted in terms of a specific speed-$c$ metric. The results, however, qualitatively apply to all speed-$c$ metrics, with one exception: the~\emph{spin-$2$ metric}, which we will denote here as $\met_{a b}^{(c_{\rm spin\,2})}$. Low-energy spin-$2$ perturbations in Ho\v{r}ava gravity propagate along null surface of this metric with a speed $c_{\rm spin\,2} = (1 - c_{13})^{-1/2}$ with respect to the preferred foliation~\cite{Jacobson:2004ts, Blas:2009qj}. Indeed, one can use the transformation first introduced in \cite{Foster:2005ec} in order to set $c_{13}=0$, but this would be equivalent to working with the spin-2 metric. In other words, the circularity condition for the Killing vector $\varphi^a$~\emph{does} hold globally with respect to the spin-$2$ metric, which is rather remarkable. In fact, one may directly confirm that the condition~\eq{circ:vf:2} is the analogue of the circularity condition $(\varpi\cdot\chi) = 0$ in~\eq{circ:twist}, but with respect to this metric only.

However, for the hypersurface $\met_{a b}^{(c_{\rm spin\,2})}V^a V^b = 0$ to turn into a Killing horizon one further needs to verify whether the remaining circularity condition for $\chi^a$, namely $(\omega\cdot\varphi) = 0$ as in~\eq{circ:twist}, holds with respect to $\met_{a b}^{(c_{\rm spin\,2})}$ as well. Based on our discussions before, we may check this by computing $\varepsilon_{a b c d}\varphi^b\chi^c\tensor{\aeT}{^d_e}\chi^e$ after setting $c_{13} = 0$ in the `components', given in eqns.~\eq{Tuu},~\eq{PiH} and~\eq{Tij}, of the khronon's stress tensor. Once more, a direct calculation shows $\varepsilon_{a b c d}\varphi^b\chi^c\tensor{\aeT}{^d_e}\chi^e \neq 0$ in general, unless the following condition is imposed
	\begin{equation}\label{circ:X:spin-2}
	(u\cdot\chi)\PiH_a = c_{14}[(\nabla\cdot a)\vec{V}_a - (a\cdot\chi)a_a],
	\end{equation}
where $\vec{V}_a$ was defined in~\eq{def:vecV=vecX+F}. Thus, even though the circularity condition for the Killing vector $\varphi^a$ is satisfied with respect to the spin-2 metric, the same is not true for the Killing vector $\chi^a$ unless the additional condition in~\eq{circ:X:spin-2} is imposed. Consequently, even the hypersurface $\met_{a b}^{(c_{\rm spin\,2})}V^a V^b = 0$ is not necessarily a Killing horizon in a stationary, axisymmetric asymptotically flat spacetime with a (future) universal horizon. 

The above results clearly demonstrate that the circularity conditions~\eq{def:circ} do not hold trivially in stationary, axisymmetric configurations in Ho\v{r}ava gravity. We may thus conclude that~\emph{in Ho\v{r}ava gravity, the mere assumptions of stationarity, axisymmetry and the existence of a (future) universal horizon does not ensure the existence of a Killing horizon for any speed-$c$ metric}. It might well be that some reasonable restriction on the foliation would be enough to satisfy the circularity conditions. We will not explore this possibility further here, as our intention was to simply demonstrate that circularity condition are not automatically satisfied and to motivate further work in this direction.
%
%
\section{Conclusions and discussions}
In this paper, we developed a framework to study the causal structure of spacetimes with a causally preferred foliation composed of spacelike hypersurfaces. Since the notions of past, future and simultaneity are defined with respect to the foliation (\ie instead of the propagation-cone of any particular metric), such a causal structure is significantly different from that of spacetimes in general relativity. In this work, we addressed global issues of causality using tools of topology and differential geometry, and hardly relied on any specific equations of motion. As such, most of our results are applicable to~\emph{any} theory with a preferred foliation including the prototypical Ho\v{r}ava gravity.

The central results in global causality were presented in Section~\ref{sec:global-causality}, where notions such as future, past and event horizons, also known as universal horizons, were defined. It is rather remarkable that the notions actually survive, albeit suitably modified, in a non-relativistic setting.

In Section~\ref{sec:def:DoD} we touched upon the question of formulating an initial value problem in theories with a causally preferred foliation. We restricted our attention to theories whose field equations form a system of hyperbolic and elliptic equations in the preferred foliation (as is the case in Ho\v{r}ava gravity) and worked our way to a suitable definition of development. Using this definition we were able to prove that universal horizons are also Cauchy horizons (although not vice versa), thereby confirming a conjecture made previously in \cite{Blas:2011ni}.

In Section~\ref{sec:UH:local} we studied the consequences of spacetime symmetries that preserve both the metric and the foliation. In particular, we focused on stationarity and presented a local characterization of universal horizons. We also proved that in foliated spacetimes, the Killing vector generating stationarity is essentially unique up to constant rescalings, and that any additional Killing vector generating any additional symmetry is necessarily `spatial', \ie confined within the leaves of the foliation. Finally, we proved a relation that appears to be an analogue of the zeroth law of black hole mechanics applicable to any stationary universal horizon.

We exclusively studied $(1 + 3)$-dimensional spacetimes in this work, and for most part focused on asymptotically flat spacetimes with suitable asymptotic behaviour of the foliation. However, most of our local results, which include \eg the local characterization of a universal horizon, are immediately applicable to foliated spacetimes of arbitrary dimensionality and with any physically reasonable asymptotic behaviour.

Finally, in Section~\ref{sec:KH}, we used the results of Section~\ref{sec:UH:local} to study properties of asymptotically flat, stationary and axisymmetric spacetimes with a preferred foliation. Our primary goal here was to investigate how relevant a universal horizon is for low-energy excitations. In particular we have shown that the metric in which such excitations propagate will have a Killing horizon that will cloak the universal horizon, provided the Killing vectors satisfy the so-called circularity conditions. This Killing horizon acts as a usual relativistic event horizon for these excitations. However, the circularity conditions do not automatically hold even in vacuo in theories other that general relativity. In fact, we have verified this explicitly for Ho\v{r}ava gravity. This leaves some room for solutions where universal horizons are not cloaked by Killing horizons. If they exist, such solutions would differ significantly from the known black holes of general relativity. It might well be that some additional condition of regularity for the spacetime or the foliation are enough to do away with such solutions. This issue certainly requires more thorough investigation which we leave for the future.

To conclude, we expect the framework we have developed here will initiate a more in-depth study of various aspects of black hole physics in theories and manifolds with a preferred foliation. One can straightforwardly apply our results to the study of black hole thermodynamics and we expect them to be particularly useful in the study of (non-relativistic) quantum field theory in curved spacetimes with a preferred foliation. It would also be interesting to extend our framework to include concepts that would allow the study of black hole formation, such as trapped surfaces. Finally, we consider this work as a preliminary step towards addressing the initial value problem in theories with a preferred foliation.
%
%
\ack
We are indebted to Jorma Louko and David Mattingly for a critical reading of an earlier version of this manuscript and enlightening remarks and feedback. We are also grateful to Ted Jacobson, Rafael Sorkin and Bob Wald for insightful discussions. The research leading to these results has received funding from the European Research Council under the European Union's Seventh Framework Programme (FP7/2007-2013) / ERC Grant Agreement n.~306425 ``Challenging General Relati\-vi\-ty''. TPS would like to thank Perimeter Institute for its hospitality during the late stages of this project.
%
%
\section*{References}

\end{document}